\providecommand{\U}[1]{\protect\rule{.1in}{.1in}}
\newtheorem{theorem}{Theorem}
\newtheorem{conclusion}[theorem]{Conclusion}
\newtheorem{corollary}[theorem]{Corollary}
\newtheorem{definition}[theorem]{Definition}
\newtheorem{proposition}[theorem]{Proposition}
\newtheorem{remark}[theorem]{Remark}
\newenvironment{proof}[1][Proof]{\noindent\textbf{#1.} }{\ \rule{0.5em}{0.5em}}
\begin{document}

\author{Brian C. Hall ~and Benjamin D. Lewis\\University of Notre Dame\\Department of Mathematics\\Notre Dame, IN 46556 U.S.A.\\bhall@nd.edu (Hall)\\bdlcalvin@gmail.com (Lewis)}
\title{A unitary \textquotedblleft quantization commutes with
reduction\textquotedblright\ map for the adjoint action of a compact Lie group}
\maketitle

\begin{abstract}
Let $K$ be a simply connected compact Lie group and $T^{\ast}(K)$ its
cotangent bundle. We consider the problem of \textquotedblleft quantization
commutes with reduction\textquotedblright\ for the adjoint action of $K$ on
$T^{\ast}(K).$ We quantize both $T^{\ast}(K)$ and the reduced phase space
using geometric quantization with half-forms. We then construct a
geometrically natural map from the space of invariant elements in the
quantization of $T^{\ast}(K)$ to the quantization of the reduced phase space.
We show that this map is a constant multiple of a unitary map.

\end{abstract}
\tableofcontents

\section{Introduction}

\subsection{Quantization and reduction}

Many classical systems of interest in physics arise as the reduction of some
larger system by the action of a group. Of particular importance are gauge
field theories, in which reduction implements gauge symmetry. In quantizing
such systems, one could plausibly attempt to perform the quantization either
before or after reduction. It is then of interest to compare the results of
these two procedures.

In this paper, we consider the holomorphic approach to quantization, with the
Segal--Bargmann space over $T^{\ast}(\mathbb{R}^{n})\cong\mathbb{C}^{n}$
serving as a prototypical example. This approach to quantization allows one,
for example, to construct a family of coherent states and often facilitates
various aspects of the semiclassical limit. Specifically, we follow the
approach of geometric quantization using a complex (i.e., K\"{a}hler)
polarization. See \cite{holomethods} for background on Segal--Bargmann spaces
and \cite{Woodhouse} or Chapters 22 and 23 of \cite{quantumBook} for
background on geometric quantization.

In the holomorphic approach to quantization, the first major result comparing
quantization before reduction to quantization after reduction is the 1982
paper \cite{GStern} of Guillemin and Sternberg. They work in the setting of
compact K\"{a}hler manifolds, using geometric quantization without half-forms.
Under certain regularity assumptions, they establish a geometrically natural
invertible linear map between the \textquotedblleft first quantize then
reduce\textquotedblright\ space and the \textquotedblleft first reduce and
then quantize\textquotedblright\ space. There have been numerous extensions of
this work, many of which work with a notion of quantization based on the index
of a certain operator. The reader is referred to the survey article of Sjamaar
\cite{Sjamaar} for the state of the art in this area as of 1996.

\subsection{Unitarity in \textquotedblleft quantization commutes with
reduction\textquotedblright}

An issue not addressed in \cite{GStern} is the matter of unitarity of the map.
This issue is important because for physical applications, it is essential to
have not just a vector space but a Hilbert space: The inner product matters!
In \cite{HallKirwin}, the first author and Kirwin analyzed the
Guillemin--Sternberg map and found that in general it is \textit{not even
asymptotically unitary} as Planck's constant tends to zero. The paper
\cite{HallKirwin} then introduces a modified Guillemin--Sternberg-type map in
the presence of half-forms and shows that this map is at least asymptotically
unitary as Planck's constant tends to zero. Much of the analysis in
\cite{HallKirwin} can be applied even if the K\"{a}hler manifold in question
is not compact.

A natural question remains whether there are interesting examples in which a
map of Guillemin--Sternberg type is actually unitary (or unitary up to a
constant), not just asymptotically unitary. Work of Kirwin \cite{Kirwin}
represents a step in the direction of answering this question, by analyzing
the higher-order asymptotics of the map introduced in \cite{HallKirwin}.
Nevertheless, there is no known general criterion for obtaining exact unitarity.

The first example of unitarity we are aware of is in an infinite-dimensional
setting, that of the quantization of $(1+1)$-dimensional Yang--Mills theory on
a space-time cylinder. Fix a connected compact Lie group $K,$ called the
structure group, with Lie algebra $\mathfrak{k}.$ The unreduced configuration
space for the problem is then the space $\mathcal{A}$ of connections (that is,
$\mathfrak{k}$-valued 1-forms) on the spatial circle, and the unreduced phase
space is the cotangent bundle $T^{\ast}(\mathcal{A}).$ We consider at first
the \textbf{based gauge group} $\mathcal{G}_{0}$ consisting of maps from the
circle into $K$ that equal the identity at one fixed point. The symplectically
reduced phase space $T^{\ast}(\mathcal{A})/\!\!/\mathcal{G}_{0}$ is then
naturally identified with the \textit{finite-dimensional} symplectic manifold
$T^{\ast}(K).$ Furthermore, if we choose a natural complex structure on
$T^{\ast}(\mathcal{A}),$ the quotient inherits a complex structure, given by
identifying $T^{\ast}(K)$ with $K_{\mathbb{C}},$ the complexification of $K.$
(This identification is described in Section \ref{prelim.sec}.)

The quantization-versus-reduction problem for holomorphic quantization of
$T^{\ast}(\mathcal{A})$ has been considered using two different methods.
First, Landsman and Wren \cite{LW} and Wren \cite{Wr2} use the
\textquotedblleft generalized Rieffel induction\textquotedblright\ method of
Landsman \cite{La} to \textquotedblleft project\textquotedblright\ the
coherent states for $T^{\ast}(\mathcal{A})$ into the (nonexistent)
gauge-invariant subspace. The article \cite{Wr2} shows that the projected
coherent states are precisely the coherent states obtained in
\cite{Hall94,geoquant} from the holomorphic quantization of $T^{\ast}(K)\cong
K_{\mathbb{C}}.$

Meanwhile, Driver--Hall \cite{DHYM} approach the problem using a Gaussian
measure of large variance to approximate the nonexistent Lebesgue measure on
$\mathcal{A}.$ (See also \cite{RevMathPhys}.) The gauge invariant subspace is
then identified as an $L^{2}$ space of holomorphic functions on $K_{\mathbb{C}%
}$ that converges in the large variance limit to the one obtained in
\cite{geoquant} from quantizing $T^{\ast}(K)\cong K_{\mathbb{C}}.$ Both
\cite{Wr2} and \cite{DHYM} identify the space obtained by first quantizing and
then reducing as being \textit{the same space with the same inner product} as
the one obtained by first reducing and then quantizing.

The goal of the present paper is to provide a finite-dimensional example of a
unitary \textquotedblleft quantization commutes with
reduction\textquotedblright\ map.

\subsection{The adjoint action of $K$ on $T^{\ast}(K)$}

We now let $K$ be a connected compact Lie group, which we assume for
simplicity to be simply connected. The cotangent bundle $T^{\ast}(K)$ has a
natural K\"{a}hler structure obtained by identification of $T^{\ast}(K)$ with
the complexified group $K_{\mathbb{C}}.$ (See Section \ref{prelim.sec}.)
Geometric quantization of $T^{\ast}(K)\cong K_{\mathbb{C}},$ using a
K\"{a}hler polarization and half-forms, is carried out in \cite{geoquant}. The
quantum Hilbert space turns out to be naturally isomorphic to the generalized
Segal--Bargmann space over $K_{\mathbb{C}}$ introduced in \cite{Hall94}.
Furthermore, the BKS pairing map between the K\"{a}hler-polarized and
vertically polarized spaces turns out to coincide with the generalized
Segal--Bargmann transform of \cite{Hall94}; in particular, the pairing map is
unitary in this case. Related aspects of the geometric quantization of
$T^{\ast}(K)\cong K_{\mathbb{C}}$ have been studied by Florentino, Matias,
Mour\~{a}o, and Nunes \cite{FMMN1,FMMN2}, by Lempert and Sz\H{o}ke
\cite{LS2,LS3}, and by Sz\H{o}ke \cite{Sz2}. All of these authors study
\textit{families} of complex structures on $T^{\ast}(K)$ and parallel
transport in the resulting bundle of quantum Hilbert spaces.

Meanwhile, Florentino, Mour\~{a}o, and Nunes \cite[Theorems 2.2 and 2.3]{FMN}
give a remarkable explicit formula for norm of a holomorphic class function on
$K_{\mathbb{C}}$ (in the Hilbert space obtained by quantizing $T^{\ast
}(K)\cong K_{\mathbb{C}}$) as an integral over a complex maximal torus. The
formula involves the Weyl denominator function and can be viewed as a complex
version of the Weyl integral formula. The results of \cite{FMN} are a vital
ingredient in the present paper. Indeed, our main result can be summarized by
saying that the just-cited result of \cite{FMN} can be interpreted as a
unitary \textquotedblleft quantization commutes with reduction
map.\textquotedblright\ 

Now $K$ acts on itself by the adjoint action (i.e., by conjugation), and this
action lifts to a symplectic action of $K$ on the cotangent bundle $T^{\ast
}(K).$ There is then an equivariant momentum map%
\[
\phi:T^{\ast}(K)\rightarrow\mathfrak{k}^{\ast}.
\]
Although the action of $K$ on $\phi^{-1}(0)$ is not free (even generically),
we may attempt to construct the symplectic quotient
\[
T^{\ast}(K)/\!\!/\mathrm{Ad}_{K}:=\phi^{-1}(0)/\mathrm{Ad}_{K}.
\]
The quotient will not be a manifold, but will have singularities that must be
dealt with in the analysis.

The reduction of $T^{\ast}(K)$ is a natural problem for various reasons.
First, from the point of view of the Yang--Mills example described in the
previous subsection, we have said that $T^{\ast}(K)$ is the symplectic
quotient of $T^{\ast}(\mathcal{A})$ by the \textit{based} gauge group
$\mathcal{G}_{0}.$ Let $\mathcal{G}$ denote the full gauge group, consisting
of all maps of $S^{1}$ into $K.$ The adjoint action of $K$ on $T^{\ast}(K)$ is
then the \textquotedblleft residual\textquotedblright\ action of the full
gauge group $\mathcal{G}$ on $T^{\ast}(\mathcal{A})/\!\!/\mathcal{G}_{0}.$
Second, the quotient $T^{\ast}(K)/\!\!/\mathrm{Ad}_{K}$ is a geometrically
interesting example of a singular symplectic quotient. Quantization of this
quotient has been studied by several researchers, including Wren \cite{Wr1},
Huebschmann \cite{Hueb}, Huebschmann, Rudolph and Schmidt \cite{HRS}, and
Boeijink, Landsman, and van Suijlekom \cite{BLV}. Third, the quotient
$T^{\ast}(K)/\!\!/\mathrm{Ad}_{K}$ is a prototype---the case of a single
plaquette---for the study of lattice gauge systems. See \cite{Fu} for a study
of more general cases.

The paper \cite{BLV} of Boeijink, Landsman, and van Suijlekom, in particular,
considers the quotient from the point of view of the quantization versus
reduction problem. The authors consider \textquotedblleft Dolbeault--Dirac
quantization\textquotedblright\ of both $T^{\ast}(K)$ and of (the regular
points in) $T^{\ast}(K)/\!\!/\mathrm{Ad}_{K}.$ The Dolbeault--Dirac
quantization ultimately turns out to give the same result as geometric
quantization of these spaces \textit{without} half-forms \cite[Theorem
3.14]{BLV}. (The authors also consider \textquotedblleft spin
quantization\textquotedblright\ of $T^{\ast}(K),$ which ultimately gives the
same result \cite[Theorem 3.15]{BLV} as geometric quantization \textit{with}
half-forms, but their results on \textquotedblleft quantization commutes with
reduction\textquotedblright\ are for the Dolbeault--Dirac quantization.) The
authors determine (1) the invariant subspace of the quantization of $T^{\ast
}(K),$ and (2) the quantization of $T^{\ast}(K)/\!\!/\mathrm{Ad}_{K}.$ They
then show that both of these spaces can be identified unitarily with the space
of Weyl-invariant elements in the Hilbert space $L^{2}(T)$ \cite[Theorem
4.18]{BLV}. Although this result constitutes a form of \textquotedblleft
quantization commutes with reduction,\textquotedblright\ the isomorphism
between these spaces is constructed in an indirect way, making use of a very
general Segal--Bargmann-type isomorphism from \cite[Section 10]{Hall94}.
Indeed, the authors say, \textquotedblleft the quantization commutes with
reduction theorem would get more body if there were a way to identify
quantization after reduction with reduction after quantization differently
from mere unitary isomorphism of Hilbert spaces\textquotedblright\ \cite[p.
31]{BLV}.

By contrast, we will consider quantization of $T^{\ast}(K)$ and $T^{\ast
}(K)/\!\!/\mathrm{Ad}_{K}$ with half-forms. We will consider a natural map of
Guillemin--Sternberg type between the space of invariant elements in the
quantization of $T^{\ast}(K)$ (\textquotedblleft first quantize and then
reduce\textquotedblright) and the quantization of $T^{\ast}%
(K)/\!\!/\mathrm{Ad}_{K}$ (\textquotedblleft first reduce and then
quantize\textquotedblright). This map will be similar to the one constructed
in \cite{HallKirwin} and will include a mechanism for converting half-forms of
one degree to half-forms of a smaller degree. Our main result will be that
this geometrically natural map is a constant multiple of a unitary map.

\subsection{The main results\label{main.sec}}

We consider geometric quantization of $T^{\ast}(K)$ with half-forms, using a
K\"{a}hler polarization obtained by identifying $T^{\ast}(K)$ with
$K_{\mathbb{C}}.$ We denote the resulting Hilbert space by $\mathrm{Quant}%
(K_{\mathbb{C}})$. (Whenever we write $K_{\mathbb{C}}$, we always mean
\textquotedblleft$K_{\mathbb{C}}$ as identified with $T^{\ast}(K)$%
.\textquotedblright) The adjoint action of $K$ on $T^{\ast}(K)\cong
K_{\mathbb{C}}$ then induces an unitary \textquotedblleft
adjoint\textquotedblright\ action of $K$ on $\mathrm{Quant}(K_{\mathbb{C}}).$
We let $\mathrm{Quant}(K_{\mathbb{C}})^{\mathrm{Ad}_{K}}$ denote the space of
elements in $\mathrm{Quant}(K_{\mathbb{C}})$ that are fixed by this action. We
think of $\mathrm{Quant}(K_{\mathbb{C}})^{\mathrm{Ad}_{K}}$ as being the
reduced quantum Hilbert space, that is, the space obtained by \textit{first}
quantizing and \textit{then} reducing.

Now let $T\subset K$ be a fixed maximal torus, let $T^{\ast}(T)$ be its
cotangent bundle, and let $T_{\mathbb{C}}\subset K_{\mathbb{C}}$ be the
complexification of $T.$ Since $T$ is also a compact Lie group, we may
similarly identify $T^{\ast}(T)$ with $T_{\mathbb{C}}$ and perform geometric
quantization with half-forms. We let $\mathrm{Quant}(T_{\mathbb{C}})$ denote
the quantization of $T^{\ast}(T)\cong T_{\mathbb{C}}.$ If $W$ denotes the Weyl
group, we then identify a \textquotedblleft Weyl-alternating\textquotedblright%
\ subspace of $\mathrm{Quant}(T_{\mathbb{C}}),$ which we denote as
$\mathrm{Quant}(T_{\mathbb{C}})^{W_{-}}.$

The adjoint action of $K$ on $T^{\ast}(K)$ admits an equivariant momentum map
$\phi.$ We consider, finally, the reduced phase space%
\[
T^{\ast}(K)/\!\!/\mathrm{Ad}_{K}:=\phi^{-1}(0)/\mathrm{Ad}_{K},
\]
which we also write as $K_{\mathbb{C}}/\!\!/\mathrm{Ad}_{K}$. Since the
reduced phase space is not a manifold, we will quantize it by quantizing only
the set of regular points---what is called the \textquotedblleft principal
stratum\textquotedblright\ in \cite{HRS} and \cite{BLV}---which is an open
dense subset. (An argument for the reasonableness of this procedure is given
in Section \ref{reducedQuant.sec}.) We denote the quantization of the reduced
phase space by $\mathrm{Quant}(K_{\mathbb{C}}/\!\!/\mathrm{Ad}_{K}).$

Now, in Section \ref{tkQuant.sec}, we will see that $\mathrm{Quant}%
(K_{\mathbb{C}})$ and $\mathrm{Quant}(T_{\mathbb{C}})$ can be identified as
$L^{2}$ spaces of holomorphic functions with respect to certain measures
$\gamma_{\hbar}$ and $\gamma_{\hbar}^{\prime},$ respectively. Then in Section
\ref{invariantSubspace.sec}, we will see that $\mathrm{Quant}(K_{\mathbb{C}%
})^{\mathrm{Ad}_{K}}$ corresponds to the space of holomorphic class functions
in $L^{2}(K_{\mathbb{C}},\gamma_{\hbar}).$ Let $\sigma:T\rightarrow\mathbb{R}$
be the Weyl denominator function and let $\sigma_{\mathbb{C}}:T_{\mathbb{C}%
}\rightarrow\mathbb{C}$ be the analytic continuation of $\sigma.$ (Since $K$
is assumed simply connected, $\sigma$ is a single-valued function on $T.$) We
now record a crucial result of Florentino, Mour\~{a}o, and Nunes \cite{FMN}.

\begin{theorem}
\label{fmnIntro.thm}There is a constant $c_{\hbar}$ such that for every
holomorphic class function $F$ on $K_{\mathbb{C}}$, we have%
\[
\int_{K_{\mathbb{C}}}\left\vert F(g)\right\vert ^{2}~d\gamma_{\hbar
}(g)=c_{\hbar}\int_{T_{\mathbb{C}}}\left\vert \sigma_{\mathbb{C}%
}(z)F(z)\right\vert ^{2}~d\gamma_{\hbar}^{\prime}(z).
\]

\end{theorem}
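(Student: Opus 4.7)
The natural approach is to reduce the identity to a Parseval-style computation on a basis. By the Hall--Segal--Bargmann theory recalled in the excerpt, $L^{2}_{\mathrm{hol}}(K_{\mathbb{C}},\gamma_{\hbar})$ has a complete orthogonal basis given by analytic continuations of the matrix coefficients $\pi_{\lambda}(g)_{ij}$ of the irreducible representations of $K$, indexed by dominant weights $\lambda$. The closed subspace of holomorphic class functions is then spanned by the analytically continued characters $\chi_{\lambda}=\mathrm{tr}\,\pi_{\lambda}$. Similarly $L^{2}_{\mathrm{hol}}(T_{\mathbb{C}},\gamma_{\hbar}^{\prime})$ has orthogonal basis $\{e^{\mu}\}$ indexed by the weight lattice. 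I would verify the identity first for each $F=\chi_{\lambda}$ and then extend to the general case by Parseval.

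The role of $\sigma_{\mathbb{C}}$ is clarified by the Weyl character formula: on the regular set in $T_{\mathbb{C}}$,
\[
\sigma_{\mathbb{C}}(z)\,\chi_{\lambda}(z)=\sum_{w\in W}\epsilon(w)\,e^{w(\lambda+\rho)}(z),
\]
so multiplication by $\sigma_{\mathbb{C}}$ converts a character into a $W$-antisymmetric combination of ``pure'' exponentials, which are themselves basis elements of the torus Segal--Bargmann space. A direct computation in the Hall basis gives $\|\chi_{\lambda}\|^{2}_{\gamma_{\hbar}}$ proportional to $e^{\hbar(|\lambda+\rho|^{2}-|\rho|^{2})}$ (the $e^{\hbar C_{\lambda}}$ Casimir factor), while $\|e^{\mu}\|^{2}_{\gamma_{\hbar}^{\prime}}$ is proportional to $e^{\hbar|\mu|^{2}}$ (using $\rho_{T}=0$ for the torus). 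Since $\lambda+\rho$ is regular, its Weyl orbit consists of $|W|$ distinct weights of the same length, so orthogonality together with Weyl-invariance of the inner product yields $\|\sigma_{\mathbb{C}}\chi_{\lambda}\|^{2}_{\gamma_{\hbar}^{\prime}} \propto |W|\,e^{\hbar|\lambda+\rho|^{2}}$. The factor $e^{\hbar|\lambda+\rho|^{2}}$ cancels cleanly in the ratio, leaving a constant proportional to $e^{-\hbar|\rho|^{2}}/|W|$, independent of $\lambda$; this is the desired $c_{\hbar}$.

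Once the identity holds with the \emph{same} constant on every basis element $\chi_{\lambda}$, it extends to an arbitrary $F=\sum a_{\lambda}\chi_{\lambda}$ by Parseval. The main obstacle is the analytic justification of this extension: one must show both that the characters are dense in the space of holomorphic class functions in $L^{2}(K_{\mathbb{C}},\gamma_{\hbar})$, and that the map $F\mapsto \sigma_{\mathbb{C}}F$ actually sends this subspace into $L^{2}(T_{\mathbb{C}},\gamma_{\hbar}^{\prime})$---i.e.\ that the partial sums $\sum a_{\lambda}\sigma_{\mathbb{C}}\chi_{\lambda}$ converge there. Both statements rest on the Peter--Weyl decomposition of the generalized Segal--Bargmann space, and this is where the hypothesis of simple connectedness of $K$ enters, ensuring that $\sigma$ is single-valued on $T$ and that the weight-lattice indexing of the $e^{\mu}$ matches the indexing of characters of $K$. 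Granted these structural facts, the heart of the argument is the per-$\lambda$ bookkeeping in Step~2, where the explicit form of Hall's heat-kernel measure makes the $\lambda$-dependence collapse onto the Casimir $|\lambda+\rho|^{2}-|\rho|^{2}$ in exactly the way needed.
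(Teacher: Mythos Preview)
Your proposal is correct and follows essentially the same strategy as the paper: verify the identity on the orthogonal basis of analytically continued characters $\chi_{\lambda}$, check that the resulting constant is independent of $\lambda$, and then extend by orthogonality and the holomorphic Peter--Weyl theorem. The only difference is packaging: the paper routes the per-character computation through the compact groups, using the Segal--Bargmann isometries for $K$ and for $T$ to transport the question to the classical Weyl integral formula on $K$ (together with the eigenvalue identity $\Delta_{T}(\sigma\chi_{\mu})=-\lVert\mu+\delta\rVert^{2}\sigma\chi_{\mu}$), whereas you stay entirely on the complexified side and compute both norms directly from the known Hall-basis norms and the Weyl character formula---but these are two presentations of the same calculation.
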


This result is Theorem 2.2 in \cite{FMN}. The goal of the present paper is
this: To interpret Theorem \ref{fmnIntro.thm} as giving a unitary (up to a
constant) \textquotedblleft quantization commutes with
reduction\textquotedblright\ map between $\mathrm{Quant}(K_{\mathbb{C}%
})^{\mathrm{Ad}_{K}}$ and $\mathrm{Quant}(K_{\mathbb{C}}/\!\!/\mathrm{Ad}%
_{K}).$ To accomplish this goal, we must perform two tasks. First, we must
show that we can quantize the reduced phase space $K_{\mathbb{C}%
}/\!\!/\mathrm{Ad}_{K}$ in such a way that $\mathrm{Quant}(K_{\mathbb{C}%
}/\!\!/\mathrm{Ad}_{K})$ may be identified with $\mathrm{Quant}(T_{\mathbb{C}%
})^{W_{-}}.$ Second, we must show that the map%
\begin{equation}
F\mapsto(\sigma_{\mathbb{C}})(\left.  F\right\vert _{T_{\mathbb{C}}})
\label{Fmapsto}%
\end{equation}
implicit in Theorem \ref{fmnIntro.thm} can be computed by means of a natural
Guillemin--Sternberg-type map with half-forms.

In the second task, the key issue is to account for the appearance of the
analytically continued Weyl denominator $\sigma_{\mathbb{C}}$ in
(\ref{Fmapsto}). We will argue that this factor is not something one simply
needs to insert by hand in order to obtain the nice result in Theorem
\ref{fmnIntro.thm}. Rather, we will show that this factor arises naturally in
the process of converting half-forms on $K_{\mathbb{C}}$ to half-forms (of a
different degree!) on $T_{\mathbb{C}}.$ Specifically, we will use a
procedure---similar to that in \cite{HallKirwin}---of contracting half-forms
with the vector fields representing the infinitesimal adjoint action of $K$ on
$K_{\mathbb{C}}.$ The analytically continued Weyl denominator $\sigma
_{\mathbb{C}}$ will arise naturally in this process. (See Sections
\ref{relatingCanonical.sec} and \ref{relatingHalf.sec}.)

We describe our results in schematic form.

\begin{theorem}
\label{main1.thm}It is possible to quantize the reduced phase space in such
way that the elements of the quantum Hilbert space may be identified with the
Weyl-alternating elements of the quantization of $T_{\mathbb{C}}$:%
\[
\mathrm{Quant}(K_{\mathbb{C}}/\!\!/\mathrm{Ad}_{K})\cong\mathrm{Quant}%
(T_{\mathbb{C}})^{W_{-}}.
\]

\end{theorem}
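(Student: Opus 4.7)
The plan is to define the quantization of the reduced phase space by working on its principal stratum, which carries a smooth Kähler structure to which standard geometric quantization with half-forms applies. First I would identify this principal stratum with the smooth manifold $T_{\mathbb{C}}^{\mathrm{reg}}/W$, where $T_{\mathbb{C}}^{\mathrm{reg}}$ denotes the regular elements and $W$ acts freely. This follows from the polar decomposition $K_{\mathbb{C}}\cong K\times\mathfrak{k}$ together with the observation that $\phi=0$ forces $\mathrm{Ad}_k X = X$ for $(k,X)\in K\times\mathfrak{k}$; every such pair can be $K$-conjugated into $T\times\mathfrak{t}$, uniquely up to $W$ in the regular case. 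One then checks that the Kähler structure descending from $K_{\mathbb{C}}$ to the reduced phase space agrees with the one inherited by $T_{\mathbb{C}}^{\mathrm{reg}}/W$ from the ambient $T_{\mathbb{C}}$.

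Next I would carry out Kähler geometric quantization with half-forms on $T_{\mathbb{C}}^{\mathrm{reg}}/W$. The prequantum line bundle on the quotient pulls back to the $W$-equivariantly trivial prequantum bundle on $T_{\mathbb{C}}^{\mathrm{reg}}$, so all the nontrivial $W$-character in the data resides on the half-form bundle. On $T_{\mathbb{C}}$ the canonical bundle is globally trivialized by $\Omega=dz_1\wedge\cdots\wedge dz_r$ (in coordinates coming from a basis of $\mathfrak{t}_{\mathbb{C}}$) with $w^*\Omega=\det(w)\,\Omega=\epsilon(w)\,\Omega$, since $W$ acts on $\mathfrak{t}$ by reflections. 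Consequently $\sigma_{\mathbb{C}}\Omega$ is a natural $W$-invariant nowhere-vanishing top form on $T_{\mathbb{C}}^{\mathrm{reg}}$. Constructing a half-form on $T_{\mathbb{C}}^{\mathrm{reg}}/W$ whose square pulls back to $\sigma_{\mathbb{C}}\Omega$ and then pulling back half-form-valued holomorphic sections through the function-space identification of $\mathrm{Quant}(T_{\mathbb{C}})$, a careful calculation of how the $W$-action reorganizes under pullback should show that the $W$-invariant sections downstairs correspond precisely to holomorphic functions $F$ on $T_{\mathbb{C}}^{\mathrm{reg}}$ satisfying $F(wz)=\epsilon(w)F(z)$.

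The extension from $T_{\mathbb{C}}^{\mathrm{reg}}$ to all of $T_{\mathbb{C}}$ is then automatic: a Weyl-alternating holomorphic function on $T_{\mathbb{C}}^{\mathrm{reg}}$ is locally odd across each reflection wall and so extends uniquely and holomorphically to $T_{\mathbb{C}}$, vanishing on the walls. Matching the inner product reduces to tracing the relation between the reduced quantum measure on $T_{\mathbb{C}}^{\mathrm{reg}}/W$, the half-form density $|\sigma_{\mathbb{C}}|$, and the measure $\gamma_\hbar'$ on $T_{\mathbb{C}}$. The main obstacle lies in the half-form analysis of the second step: the half-form bundle on $T_{\mathbb{C}}$ does not admit a $W$-equivariant structure by a scalar character of $W$ alone, since the relation $\chi(w)^2=\epsilon(w)$ would force $\chi$ into fourth roots of unity and no such $\chi$ need exist as a character of $W$. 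The half-form bundle on the quotient must therefore be constructed intrinsically and related to the upstairs half-forms only through the Weyl denominator $\sigma_{\mathbb{C}}$. Correctly packaging this relation is what produces the sign-character twist and identifies the reduced quantization with $\mathrm{Quant}(T_{\mathbb{C}})^{W_-}$.
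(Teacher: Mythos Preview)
Your outline tracks the paper's approach closely: identifying the principal stratum with $T_{\mathbb{C}}^{\mathrm{rss}}/W$, recognizing that $\sigma_{\mathbb{C}}\beta'$ trivializes the canonical bundle on the quotient, and aiming to land in Weyl-alternating functions. But there is a genuine gap at exactly the point you flag as the ``main obstacle,'' and you do not resolve it.

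You assume the prequantum bundle on the quotient is $W$-equivariantly trivial and then try to load the entire sign twist onto the half-form bundle. As you correctly observe, this cannot work: there is no character $\chi$ of $W$ with $\chi^2=\epsilon$, so the half-form bundle on $T_{\mathbb{C}}$ admits no $W$-equivariant structure producing the alternating character. The paper's resolution is to exploit the freedom in choosing the \emph{prequantum} bundle rather than the half-form bundle. It takes $L'$ to be the line bundle over $T_{\mathbb{C}}^{\mathrm{rss}}/W$ whose sections are $W$-\emph{alternating} functions $f$, regarded formally as $f/\sqrt{\sigma_{\mathbb{C}}}$, with Hermitian structure $|f|_{L'}=|f|/|\sigma_{\mathbb{C}}|^{1/2}$ and connection
\[
\nabla_X f = Xf - \tfrac{1}{2}\,\frac{X\sigma_{\mathbb{C}}}{\sigma_{\mathbb{C}}}\,f - \frac{i}{\hbar}\theta(X)f.
\]
Since $\sigma_{\mathbb{C}}$ is holomorphic, the extra term is locally exact and does not change the curvature, so $L'$ is a legitimate prequantum bundle. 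A holomorphic section of $L'\otimes\mathcal{K}'_{1/2}$ then has the form
\[
\frac{Fe^{-|H|^2/(2\hbar)}}{\sqrt{\sigma_{\mathbb{C}}}}\otimes\sqrt{\sigma_{\mathbb{C}}\beta'}
\]
with $F$ Weyl-alternating, and the factors of $|\sigma_{\mathbb{C}}|^{1/2}$ coming from $L'$ and from $\mathcal{K}'_{1/2}$ cancel in the pointwise norm, giving exactly $\int|F|^2\,d\gamma'_{\hbar}$. The sign obstruction is dissolved by splitting $\sigma_{\mathbb{C}}$ between the prequantum and half-form bundles, not by forcing it entirely onto one of them. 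Your assumption that $L'$ is equivariantly trivial is precisely what blocks this.

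A smaller point: your extension argument is incorrect as stated. Being Weyl-alternating does not by itself force a holomorphic function on $T_{\mathbb{C}}^{\mathrm{rss}}$ to extend across the walls; for instance $1/\sigma_{\mathbb{C}}$ is alternating but blows up there. The paper instead invokes an $L^2$ removable-singularities theorem: any holomorphic function on $T_{\mathbb{C}}^{\mathrm{rss}}$ that is square-integrable against $\gamma'_{\hbar}$ extends holomorphically to all of $T_{\mathbb{C}}$, since the singular set is an analytic subvariety.
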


In proving this result, we will exploit the freedom that is present in
geometric quantization to choose the prequantum line bundle.

\begin{theorem}
\label{main2.thm}There is a geometrically natural \textquotedblleft
quantization commutes with reduction\textquotedblright\ map
\[
B:\mathrm{Quant}(K_{\mathbb{C}})^{\mathrm{Ad}_{K}}\rightarrow\mathrm{Quant}%
(K_{\mathbb{C}}/\!\!/\mathrm{Ad}_{K})\cong\mathrm{Quant}(T_{\mathbb{C}%
})^{W_{-}}%
\]
that corresponds, after suitable identifications, to the map $F\mapsto
(\sigma_{\mathbb{C}})(\left.  F\right\vert _{T_{\mathbb{C}}})$ in
(\ref{Fmapsto}). Thus, by Theorem \ref{fmnIntro.thm} the map $B$ is a constant
multiple of a unitary map.
\end{theorem}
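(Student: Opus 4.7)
The plan is to build $B$ as a two-step Guillemin--Sternberg-type procedure with half-forms, and then to show that, under the identifications of Theorem \ref{main1.thm} and Section \ref{tkQuant.sec}, it agrees with the map $F\mapsto\sigma_{\mathbb{C}}(F|_{T_{\mathbb{C}}})$ of (\ref{Fmapsto}); Theorem \ref{fmnIntro.thm} will then immediately give unitarity up to the constant $c_\hbar^{-1/2}$. The dimensional mismatch to manage is that $K_{\mathbb{C}}$ has complex dimension $n=\dim K$ while the principal stratum of $K_{\mathbb{C}}/\!\!/\mathrm{Ad}_K$ has complex dimension $\ell=\mathrm{rank}(K)$; the gap $2r=n-\ell$ equals twice the number of positive roots, and these are exactly the tangent directions to the $\mathrm{Ad}_K$-orbits through regular points of $T_{\mathbb{C}}$.

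\textbf{Defining and computing $B$.} Under the trivialization of Section \ref{tkQuant.sec}, an invariant section $s\in\mathrm{Quant}(K_{\mathbb{C}})^{\mathrm{Ad}_K}$ has the form $F\otimes\nu^{1/2}$, with $F$ a holomorphic class function and $\nu^{1/2}$ the canonical half-form. Choose the standard real basis $\{E_\alpha-E_{-\alpha},\,i(E_\alpha+E_{-\alpha})\}_{\alpha>0}$ of the orthogonal complement of $\mathfrak{t}$ in $\mathfrak{k}$, and let $\widetilde{X}_1,\dots,\widetilde{X}_{2r}$ be the corresponding fundamental vector fields of the adjoint action on $K_{\mathbb{C}}$. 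I would define $Bs$ by contracting the half-form factor of $s$ with $\widetilde{X}_1,\dots,\widetilde{X}_{2r}$ (interpreted in the half-form sense, via Sections \ref{relatingCanonical.sec}--\ref{relatingHalf.sec}), restricting the resulting section to $T_{\mathbb{C}}$, and invoking the isomorphism $\mathrm{Quant}(K_{\mathbb{C}}/\!\!/\mathrm{Ad}_K)\cong\mathrm{Quant}(T_{\mathbb{C}})^{W_-}$ from Theorem \ref{main1.thm}. The heart of the matter is the computation: at $z\in T_{\mathbb{C}}$ the fundamental vector field of $Y\in\mathfrak{k}$ equals $L_z(\mathrm{Ad}_{z^{-1}}Y-Y)$, and since $\mathrm{Ad}_{z^{-1}}$ acts on $E_{\pm\alpha}$ by $z^{\mp\alpha}$, one obtains
\[
\widetilde{X}_1\wedge\cdots\wedge\widetilde{X}_{2r}\Big|_{z}=c\,\prod_{\alpha>0}\bigl(z^{\alpha/2}-z^{-\alpha/2}\bigr)^{2}\,\Omega
\]
for a constant $c$ and a fixed reference top form $\Omega$ on $(\mathfrak{k}/\mathfrak{t})_{\mathbb{C}}$. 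Its half-form square root is then $\sigma_{\mathbb{C}}(z)$ times the canonical half-form used to model $\mathrm{Quant}(T_{\mathbb{C}})$ as $L^2(T_{\mathbb{C}},\gamma_\hbar')$.

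\textbf{Conclusion and main obstacle.} Consequently $B$ corresponds to $F\mapsto\sigma_{\mathbb{C}}\,F|_{T_{\mathbb{C}}}$. Since $F|_{T_{\mathbb{C}}}$ is Weyl-invariant and $\sigma_{\mathbb{C}}$ is Weyl-anti-invariant, the image lies in $\mathrm{Quant}(T_{\mathbb{C}})^{W_-}$, and Theorem \ref{fmnIntro.thm} yields $\|Bs\|^2=c_\hbar^{-1}\|s\|^2$, so $B$ is $c_\hbar^{-1/2}$ times a unitary. The principal obstacle is to make the displayed computation rigorous: one must carry out the contraction on the half-form bundle (rather than on the canonical bundle), choose a consistent holomorphic square-root branch of the top form across all of $T_{\mathbb{C}}$, and verify that the constant $c$ and the reference form $\Omega$ carry no stray $z^{\rho}$-factor that would spoil the clean answer $\sigma_{\mathbb{C}}$. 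This is exactly the point at which the freedom in the choice of prequantum line bundle on the quotient---flagged immediately after Theorem \ref{main1.thm}---is needed: the line bundle is to be chosen so that any residual $\rho$-shift arising from the half-form contraction is absorbed by its trivialization, leaving $\sigma_{\mathbb{C}}$ as the only nontrivial factor in $B$.
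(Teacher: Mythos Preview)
Your strategy is essentially the paper's: contract the half-form with the infinitesimal adjoint generators, compute that on $T_{\mathbb{C}}$ this produces a factor of $\sigma_{\mathbb{C}}$ via $\det(\mathrm{Ad}_{z^{-1}}-I)\big|_{\mathfrak{t}_{\mathbb{C}}^{\perp}}=\pm\sigma_{\mathbb{C}}(z)^{2}$, restrict the prequantum section, and invoke Theorem~\ref{fmnIntro.thm}. Your key computation is exactly Proposition~(\ref{betaTilde}) of Section~\ref{relatingCanonical.sec}.

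Your diagnosis of the ``principal obstacle,'' however, is slightly off. There is no stray $z^{\rho}$-factor to worry about: since $K$ is simply connected, $\delta$ is integral and $\sigma_{\mathbb{C}}$ is already a single-valued holomorphic function on all of $T_{\mathbb{C}}$, so the square root of $\sigma_{\mathbb{C}}^{2}\beta'$ is simply $\sigma_{\mathbb{C}}\sqrt{\beta'}$ with no $\rho$-shift. The genuine obstruction is an \emph{orientation} problem (Proposition~\ref{orientation.prop}): the contraction at $(x,\xi)$ requires a choice of orientation on $\mathfrak{s}_{(x,\xi)}^{\perp}$, and no continuous $\mathrm{Ad}_{K}$-invariant choice exists over $\phi^{-1}(0)^{\mathrm{reg}}$, because any $y\in N(T)$ representing an odd Weyl element reverses the orientation of $\mathfrak{t}^{\perp}$. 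Consequently the half-form map $\sqrt{\beta}\mapsto\pm\sigma_{\mathbb{C}}\sqrt{\beta'}$ is only locally defined, with an irremovable sign ambiguity under $W$.

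The role of the prequantum bundle $L'$ is therefore not to absorb a $\rho$-shift but to compensate this sign. The paper defines a \emph{second} local map on prequantum sections, $Fe^{-|\xi|^{2}/(2\hbar)}\mapsto[\sigma_{\mathbb{C}}]^{1/2}\,F\,e^{-|H|^{2}/(2\hbar)}$ (Section~6.3), where $[\sigma_{\mathbb{C}}]^{1/2}$ is a local holomorphic branch; this factor is forced by the requirement that the map preserve pointwise magnitudes relative to the Hermitian structure (\ref{TcHermitian}) on $L'$. Neither map is global, but when tensored the branch ambiguity of $[\sigma_{\mathbb{C}}]^{1/2}$ cancels against the sign ambiguity of $\pm\sigma_{\mathbb{C}}\sqrt{\beta'}$, yielding the globally well-defined $F\mapsto\sigma_{\mathbb{C}}(F|_{T_{\mathbb{C}}})$ as a section of $L'\otimes\mathcal{K}'_{1/2}$.
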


Although the map $B$ is a similar to the map $B_{k}$ in \cite{HallKirwin},
modifications are needed in the present setting. First, because the adjoint
action of $K$ on $\phi^{-1}(0)$ is not even generically free, we must contract
at each point with a subset of the vector fields representing the
infinitesimal adjoint action. Second, the contraction process requires a
choice of orientation and it turns out to be impossible to choose the
orientation consistently over all of $\phi^{-1}(0).$ It is therefore necessary
to choose the prequantum line bundle in the quantization of the reduced phase
space carefully in order to ensure that the \textquotedblleft quantization
commutes with reduction\textquotedblright\ map is globally defined.

\section{Preliminaries\label{prelim.sec}}

Let $K$ be a connected compact Lie group of dimension $n$, assumed for
simplicity to be simply connected. This assumption ensures that the Weyl
denominator function (Section \ref{maximalTorus.sec}) is single valued and
that the centralizer of each regular semisimple element in the complexified
group is a complex maximal torus (Section \ref{regularPoints.sec}). We fix an
inner product $\left\langle \cdot,\cdot\right\rangle $ on the Lie algebra
$\mathfrak{k}$ of $K$ that is invariant under the adjoint action of $K.$ There
is then a unique bi-invariant Riemannian metric on $K$ whose value at the
identity is $\left\langle \cdot,\cdot\right\rangle .$ We let $n$ denote the
dimension of $K.$

\subsection{The complex structure on $T^{\ast}(K)$}

We let $K_{\mathbb{C}}$ be the \textbf{complexification} of $K,$ which may be
described as the unique simply connected Lie group whose Lie algebra is
$\mathfrak{k}_{\mathbb{C}}:=\mathfrak{k}\oplus i\mathfrak{k}.$ The inclusion
of $\mathfrak{k}$ into $\mathfrak{k}_{\mathbb{C}}$ induces a homomorphism of
$K$ into $K_{\mathbb{C}},$ which is well known to be injective. (See
\cite[Section 3]{Hall94}.) We will identify $K$ with its image inside
$K_{\mathbb{C}},$ which is a compact and therefore closed subgroup of
$K_{\mathbb{C}}.$ As an example, we may take $K$ to be the special unitary
group $SU(N)$ and $K_{\mathbb{C}}$ to be the special linear group
$SL(N;\mathbb{C}).$

As our initial phase space (before reduction) we take the cotangent bundle
$T^{\ast}(K),$ with the canonical 2-form $\omega.$ Results of Lempert and
Sz\H{o}ke \cite{LS,Szoke1} and of Guillemin and Stenzel \cite{GStenz1,GStenz2}
show that there is a natural globally defined \textquotedblleft adapted
complex structure\textquotedblright\ on $T^{\ast}(K)$ determined by the choice
of bi-invariant metric on $K.$ This complex structure can be described
explicitly as follows. First, use left translation to identify $T^{\ast}(K)$
with $K\times\mathfrak{k}^{\ast}.$ Then use the inner product on
$\mathfrak{k}$ to identify $K\times\mathfrak{k}^{\ast}$ with $K\times
\mathfrak{k}.$ Finally, use the diffeomorphism $\Psi:K\times\mathfrak{k}%
\rightarrow K_{\mathbb{C}}$ given by the polar decomposition for
$K_{\mathbb{C}}$:%
\begin{equation}
\Psi(x,\xi)=xe^{-i\xi},\quad x\in K,~\xi\in\mathfrak{k}.\label{PhiDef}%
\end{equation}
We then use $\Psi$ to pull back the complex structure on $K_{\mathbb{C}}$ to
$T^{\ast}(K).$ The resulting complex structure on $T^{\ast}(K)$ fits together
with the symplectic structure to make $T^{\ast}(K)$ into a K\"{a}hler
manifold. This claim is a consequence of general results in the theory of
adapted complex structures (e.g., the theorem on p. 568 of \cite{GStenz1}),
but can be verified directly by the calculations in the first appendix of
\cite{geoquant}.

In this paper, we follow the sign conventions in \cite{quantumBook}. With
these conventions, the minus sign in the exponent on the right-hand side of
(\ref{PhiDef}) is necessary in order to achieve the positivity condition in
the definition of a K\"{a}hler manifold. Actually, a similar minus sign is
needed even in the case of $T^{\ast}(\mathbb{R}).$ If the canonical 2-form is
defined as $\omega=dp\wedge dx$ (as in \cite{quantumBook}), then the complex
structure must be defined as $z=x-ip$ rather than $x+ip$ in order to for
$\omega(X,JX)$ to be non-negative.

\subsection{Maximal tori\label{maximalTorus.sec}}

We fix throughout the paper a maximal torus $T$ of $K$ and we denote its Lie
algebra by $\mathfrak{t}.$ We let $r$ denote the dimension of $T.$ We let
$T_{\mathbb{C}}$ denote the connected subgroup of $K_{\mathbb{C}}$ whose Lie
algebra is $\mathfrak{t}_{\mathbb{C}}:=\mathfrak{t}\oplus i\mathfrak{t}.$ If
$T$ is isomorphic to $(S^{1})^{r},$ it follows from the polar decomposition
for $K_{\mathbb{C}}$ that $T_{\mathbb{C}}$ is isomorphic to $(\mathbb{C}%
^{\ast})^{r},$ so that $T_{\mathbb{C}}$ is the complexification of $T$ in the
sense of Section 3 of \cite{Hall94}.

The bi-invariant metric on $K$ restricts to an invariant metric on $T.$ We may
then regard the cotangent bundle $T^{\ast}(T)$ as a submanifold of $T^{\ast
}(K)$ using the metrics:%
\begin{equation}
T^{\ast}(T)\cong T(T)\subset T(K)\cong T^{\ast}(K). \label{TTInTK}%
\end{equation}
We may also identify the cotangent bundle $T^{\ast}(T)$ with $T_{\mathbb{C}}$
identifying $T^{\ast}(T)$ with $T\times\mathfrak{t}^{\ast}$ and then with
$T\times\mathfrak{t}$ and then applying the map $\Psi^{\prime}:T\times
\mathfrak{t}\rightarrow T_{\mathbb{C}}$ given by the same formula as in
(\ref{PhiDef}):%
\begin{equation}
\Psi^{\prime}(t,H)=te^{-iH},\quad t\in T,~H\in\mathfrak{t}. \label{PhiT}%
\end{equation}

We say that a Lie subgroup $S$ of $K_{\mathbb{C}}$ is a complex torus if it is
isomorphic as a complex Lie group to a direct product of copies of
$\mathbb{C}^{\ast}.$ We say that $S$ is a \textbf{complex maximal torus} if it
is a complex torus that is not properly contained in another complex torus.
The group $T_{\mathbb{C}}$ is a complex maximal torus. and all complex maximal
tori are conjugate. (See Corollary A to Theorem 21.3 in \cite{HumLAG}.)

We let $R\subset\mathfrak{t}$ denote the \textbf{root system} associated to
the pair $(K,T).$ (Specifically, $R$ is the set of \textquotedblleft real
roots,\textquotedblright\ in the sense of \cite[Definition 11.34]{LieBook}.)
We fix once and for all a set $R^{+}$ of \textbf{positive roots}. We also let
$W:=N(T)/T$ denote the \textbf{Weyl group}. By Theorem 11.36 in \cite{LieBook}%
, $W$ may be identified with the subgroup of the orthogonal group
$O(\mathfrak{t})$ generated by the reflections about the hyperplanes
perpendicular to the roots. The adjoint action of $W$ on $T$ extends to an
action $T_{\mathbb{C}}.$

We let $\sigma:T\rightarrow\mathbb{R}$ denote the \textbf{Weyl denominator},
given by%
\begin{equation}
\sigma(e^{H})=(2i)^{m}\prod_{\alpha\in R^{+}}\sin\left(  \frac{\left\langle
\alpha,H\right\rangle }{2}\right)  ,\quad H\in\mathfrak{t}, \label{weylDenom1}%
\end{equation}
where $m$ is the number of positive roots, or by the alternative expression,%
\begin{equation}
\sigma(e^{H})=\sum_{w\in W}\mathrm{sign}(w)e^{i\left\langle w\cdot
\delta,H\right\rangle }, \label{weylDenom2}%
\end{equation}
where $\delta$ is half the sum of the positive roots. (See \cite[Lemma
10.28]{LieBook} for the equality of these two expressions.) Since $K$ is
simply connected, $\delta$ is an analytically integral element \cite[Corollary
13.21]{LieBook} and $\sigma$ is therefore a single-valued function on $T.$ We
also let $\sigma_{\mathbb{C}}:T_{\mathbb{C}}\rightarrow\mathbb{C}$ be the
analytic continuation of the Weyl denominator, which is given by either of the
expressions (\ref{weylDenom1}) or (\ref{weylDenom2}), but with $H$ now
belonging to $\mathfrak{t}_{\mathbb{C}}.$

We say that a function $f$ on $T$ or $T_{\mathbb{C}}$ is \textbf{Weyl
alternating} if%
\[
f(w\cdot z)=\mathrm{sign}(w)f(z)
\]
for all $z$ in $T$ or $T_{\mathbb{C}}.$ Using either of the expressions for
the Weyl denominator, one easily shows that $\sigma$ and $\sigma_{\mathbb{C}}$
are Weyl alternating.

\subsection{The momentum map\label{momentumMap.sec}}

We refer to Section 4.2 in \cite{AM} for general information about momentum
maps. We consider the adjoint action of $K$ on itself, given by%
\[
x\cdot y=xyx^{-1},
\]
and also the induced adjoint action of $K$ on $T^{\ast}(K).$ Since the action
of $K$ on $T^{\ast}(K)$ is induced from an action on the base, there is an
equivariant \textbf{momentum map}
\[
\phi:T^{\ast}(K)\rightarrow\mathfrak{k}^{\ast}%
\]
that is linear on each fiber of $T^{\ast}(K).$ (See Corollary 4.2.11 in
\cite{AM}.) To describe $\phi,$ let us introduce the following notation. For
each $\eta\in\mathfrak{k},$ we define the $\eta$\textbf{-component} of $\phi$
to be the function $\phi_{\eta}:T^{\ast}(K)\rightarrow\mathbb{R}$ be given by%
\[
\phi_{\eta}(x,\xi)=\phi(x,\xi)(\eta).
\]
(These functions have the property that the Hamiltonian flow generated by
$\phi_{\eta}$ is just the action adjoint action of the one-parameter subgroup
of $K$ generated by $\eta.$) Then $\phi$ is determined by the following
formula%
\[
\phi_{\eta}(x,\xi)=\xi(Y^{\eta}(x)),
\]
where $Y^{\eta}$ is the vector field representing the infinitesimal adjoint
action of $\eta$ on $K.$

Let us identify $T^{\ast}(K)$ with $K\times\mathfrak{k}$ using left
translation and the inner product on $\mathfrak{k}.$ Then we may easily
compute that $Y^{\eta}(x)=\mathrm{Ad}_{x^{-1}}(\eta)-\eta,$ so that%
\[
\phi_{\eta}(x,\xi)=\left\langle \xi,\mathrm{Ad}_{x^{-1}}(\eta)-\eta
\right\rangle =\left\langle \mathrm{Ad}_{x}(\xi)-\xi,\eta\right\rangle .
\]
Thus, the momentum map, viewed as a map of $K\times\mathfrak{k}$ into
$\mathfrak{k}^{\ast}\cong\mathfrak{k}$ is given explicitly as%
\begin{equation}
\phi(x,\xi)=\mathrm{Ad}_{x}(\xi)-\xi. \label{momentum}%
\end{equation}

\section{Reduction of the quantum Hilbert space}

In this section we consider the Hilbert space obtained by \textit{first}
quantizing the phase space $T^{\ast}(K)$ and \textit{then} reducing by the
adjoint action of $K,$ which we write as $\mathrm{Quant}(T^{\ast
}(K))^{\mathrm{Ad}_{K}}.$ We will identify $\mathrm{Quant}(T^{\ast}(K))$ as an
$L^{2}$ space of holomorphic functions on $K_{\mathbb{C}}$ and $\mathrm{Quant}%
(T^{\ast}(K))^{\mathrm{Ad}_{K}}$ as the corresponding $L^{2}$ space of
holomorphic class functions on $K_{\mathbb{C}}$.

\subsection{Quantization of the cotangent bundle\label{tkQuant.sec}}

We briefly explain some of the results in \cite{geoquant}, in which the phase
space $T^{\ast}(K)\cong K_{\mathbb{C}}$ is quantized using geometric
quantization with half-forms. We follow the sign conventions in the book
\cite[Chapters 22 and 23]{quantumBook}, which differ from those in
\cite{geoquant}. We let $\omega$ denote the canonical 2-form on $T^{\ast}(K),$
given in local coordinates as $\omega=\sum dp_{j}\wedge dx_{j}.$ We let
$\theta$ be the canonical 1-form on $T^{\ast}(K),$ satisfying $d\theta
=\omega.$ We let $L=T^{\ast}(K)\times\mathbb{C}$ be the trivial line bundle
over $T^{\ast}(K),$ so that sections of $L$ are identified with complex-valued
functions on $T^{\ast}(K).$ We use the trivial Hermitian structure on $L,$ so
that the magnitude of a section is just the absolute value of the
corresponding function. We define a connection $\nabla$ on $L$ by setting%
\begin{equation}
\nabla_{X}f=Xf-\frac{i}{\hbar}\theta(X)f \label{covariantDeriv}%
\end{equation}
for each smooth section (i.e., function) $f$ and each vector field $X.$

We say that a smooth section $f$ of $L$ is a \textbf{holomorphic section} if%
\[
\nabla_{X}f=0
\]
for all vector fields $X$ of type $(0,1)$ on $T^{\ast}(K)\cong K_{\mathbb{C}%
}.$ Although we identify sections with functions, the \textit{holomorphic}
sections do not correspond to holomorphic functions. Rather, the function
$\kappa(x,\xi)=\left\vert \xi\right\vert ^{2}$ is a K\"{a}hler potential for
$T^{\ast}(K)\cong K_{\mathbb{C}}.$ This claim follows from a general result
\cite[p. 568]{GStenz1} about adapted complex structures, and is verified by
direct computation in the present case in the first appendix to
\cite{geoquant}. It then follows easily that the holomorphic sections are
precisely those of the form%
\[
f=Fe^{-\left\vert \xi\right\vert ^{2}/(2\hbar)},
\]
where $F$ is a holomorphic function on $K_{\mathbb{C}}.$ Here the expression
\textquotedblleft$\xi$\textquotedblright\ is defined as a function on
$K_{\mathbb{C}}$ by means of the diffeomorphism $\Psi$ in (\ref{PhiDef}).

The \textbf{canonical bundle }$\mathcal{K}$ for $T^{\ast}(K)\cong
K_{\mathbb{C}}$ is the holomorphic line bundle whose holomorphic sections are
holomorphic $n$-forms, where $n$ is the complex dimension of $K_{\mathbb{C}}.$
The canonical bundle is holomorphically trivial, and we will choose a
nowhere-vanishing, left-$K_{\mathbb{C}}$-invariant holomorphic $n$-form
$\beta.$ (The form $\beta$ is unique up to a constant.) We then take a trivial
square root $\mathcal{K}_{1/2}$ to the canonical bundle, with a trivializing
section $\sqrt{\beta}$ satisfying%
\[
\sqrt{\beta}\otimes\sqrt{\beta}=\beta.
\]

We define a Hermitian structure on $\mathcal{K}_{1/2}$ by setting%
\begin{equation}
\left\vert \sqrt{\beta}\right\vert ^{2}=\left[  \frac{\beta\wedge\bar{\beta}%
}{b\varepsilon}\right]  ^{1/2}, \label{hermitianK12}%
\end{equation}
where $\varepsilon$ is the Liouville volume form on $T^{\ast}(K)\cong
K_{\mathbb{C}}$:%
\[
\varepsilon:=\frac{\omega^{n}}{n!},
\]
and where $b$ is chosen so that at each point $\beta\wedge\bar{\beta}$ is a
positive multiple of $b\varepsilon.$ We may take, for example,%
\[
b=(2i)^{n}(-1)^{n(n-1)/2}.
\]
The quotient $\beta\wedge\bar{\beta}/(b\varepsilon)$ should be interpreted as
the unique function $j$ such that $\beta\wedge\bar{\beta}=jb\varepsilon.$

The elements of the \textbf{unreduced quantum Hilbert space} $\mathrm{Quant}%
(K_{\mathbb{C}})$ are square integrable holomorphic sections of $L\otimes
\mathcal{K}_{1/2}.$ Each section $\psi$ can be expressed uniquely as%
\[
\psi=Fe^{-\left\vert \xi\right\vert ^{2}/(2\hbar)}\otimes\sqrt{\beta},
\]
where $F$ is a holomorphic function on $K_{\mathbb{C}}.$ The norm of such a
section is computed as%
\[
\left\Vert \psi\right\Vert ^{2}=\int_{K_{\mathbb{C}}}\left\vert
F(g)\right\vert ^{2}e^{-\left\vert \xi\right\vert ^{2}/\hbar}\eta\varepsilon,
\]
where
\begin{equation}
\eta=\left[  \frac{\beta\wedge\bar{\beta}}{b\varepsilon}\right]  ^{1/2}
\label{eta}%
\end{equation}
is the function on the right-hand side of (\ref{hermitianK12}). An explicit
formula for $\eta$ is given in Eq. (2.10) of \cite{geoquant}.

\begin{conclusion}
\label{kcQuant.conclusion}We may quantize the phase space $T^{\ast}(K)\cong
K_{\mathbb{C}}$ in such a way that each element $\psi$ of the $\mathrm{Quant}%
(K_{\mathbb{C}})$ has the form%
\[
\psi=Fe^{-\left\vert \xi\right\vert ^{2}/(2\hbar)}\otimes\sqrt{\beta},
\]
where $F$ is a holomorphic function on $K_{\mathbb{C}}.$ The norm of $\psi$ is
the $L^{2}$ norm of $F$ with respect to the measure%
\begin{equation}
\gamma_{\hbar}:=e^{-\left\vert \xi\right\vert ^{2}/\hbar}\eta\varepsilon,
\label{gamma}%
\end{equation}
where $\varepsilon$ is the Liouville volume measure and $\eta$ is as in
(\ref{eta}).
\end{conclusion}

The preceding result is a straightforward computation, first done in
\cite{geoquant}, using the methods of geometric quantization. What is
remarkable about the result is that the measure $\gamma_{\hbar}$ coincides up
to a constant with a measure on $K_{\mathbb{C}}$ introduced from a very
different point of view in \cite{Hall94}.

\begin{proposition}
\label{geoHeat.prop}For each $\hbar>0,$ there is a constant $c_{\hbar}>0$ such
that the measure $\gamma_{\hbar}$ in (\ref{gamma}) coincides with the
\textquotedblleft$K$-averaged heat kernel measure\textquotedblright%
\ $\nu_{\hbar}(g)~dg$ occurring in \cite[Theorem 2]{Hall94}.
\end{proposition}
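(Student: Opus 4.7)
The plan is to verify this by computing both measures explicitly in the polar decomposition coordinates $(x,\xi)\in K\times\mathfrak{k}$ under $\Psi(x,\xi)=xe^{-i\xi}$, and observing that they agree up to an $\hbar$-dependent multiplicative constant. Since both measures are $K$-bi-invariant functions of $\xi$ alone times the product Haar-Lebesgue measure, the verification reduces to matching their $\xi$-dependence.

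First I would unpack the definition of $\eta$. The holomorphic volume form $\beta$ is left-$K_{\mathbb{C}}$-invariant, so $\beta\wedge\bar{\beta}$ is a constant multiple of the Haar measure $dg$ on $K_{\mathbb{C}}$. Meanwhile, under the identification $T^{\ast}(K)\cong K\times\mathfrak{k}$ by left translation together with the inner product, the Liouville measure $\varepsilon$ becomes the product of normalized Haar measure on $K$ and Lebesgue measure on $\mathfrak{k}$. The Jacobian of the polar decomposition $\Psi$ is the classical expression $J(\xi):=\det_{\mathfrak{k}}\bigl(\sinh(\mathrm{ad}_{\xi})/\mathrm{ad}_{\xi}\bigr)$ (up to a harmless power of $2$), so $dg=c\,J(\xi)\,dx\,d\xi$ for some constant $c$. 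Consequently
\[
\eta(xe^{-i\xi})^{2}=\frac{\beta\wedge\bar{\beta}}{b\varepsilon}=\mathrm{const}\cdot J(\xi),
\]
which agrees with the explicit formula in Eq. (2.10) of \cite{geoquant}. Therefore
\[
\gamma_{\hbar}=\mathrm{const}\cdot e^{-|\xi|^{2}/\hbar}\,J(\xi)^{1/2}\,dx\,d\xi.
\]

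Next I would invoke \cite[Theorem 2]{Hall94}, which exhibits $\nu_{\hbar}(g)$ as the $K$-average $\nu_{\hbar}(g)=\int_{K}\rho_{\hbar}^{K_{\mathbb{C}}}(xg)\,dx$, where $\rho_{\hbar}^{K_{\mathbb{C}}}$ is the heat kernel on $K_{\mathbb{C}}$. Using either the convolution identity $\rho_{\hbar}^{K_{\mathbb{C}}}=\rho_{\hbar/2}^{K}\ast\mu_{\hbar/2}$ with the symmetric-space heat kernel on $K_{\mathbb{C}}/K\cong\mathfrak{k}$ (this is the content of Theorem 2 of \cite{Hall94}), or directly the formula of Gangolli for the heat kernel on the noncompact symmetric space, one finds that $\nu_{\hbar}$ depends only on $\xi$ and has the form
\[
\nu_{\hbar}(xe^{-i\xi})=c_{\hbar}\,\frac{e^{-|\xi|^{2}/\hbar}}{J(\xi)^{1/2}}.
\]
Multiplying by $dg=c\,J(\xi)\,dx\,d\xi$ yields
\[
\nu_{\hbar}(g)\,dg=c_{\hbar}'\,e^{-|\xi|^{2}/\hbar}\,J(\xi)^{1/2}\,dx\,d\xi,
\]
which matches $\gamma_{\hbar}$ up to a constant.

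The main obstacle is bookkeeping: the sign and normalization conventions in \cite{geoquant} differ from those in \cite{quantumBook} (followed here) and again from \cite{Hall94}, so one must carefully reconcile the definitions of the Laplacian (hence the heat kernel time parameter), the normalization of Haar measure on $K$ and on $K_{\mathbb{C}}$, the choice of constants in $b$ and in the invariant form $\beta$, and the quadratic forms entering the Gaussian $e^{-|\xi|^{2}/\hbar}$. None of these is conceptually deep, but each affects the multiplicative constant, and the point of the proposition is only that these constants can be absorbed into a single $\hbar$-dependent $c_{\hbar}>0$.
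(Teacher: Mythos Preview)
The paper does not actually supply a proof of this proposition: it is stated without proof, with the surrounding text indicating that the result is a ``straightforward computation, first done in \cite{geoquant}.'' So there is no in-paper argument to compare against; the proposition functions here as a citation of a known result.

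Your outline is correct and is essentially the computation carried out in \cite{geoquant}. You have the right structure: both sides, written in polar coordinates $(x,\xi)$, are $\mathrm{Ad}_K$-invariant densities depending only on $\xi$, and both reduce to a constant times $e^{-|\xi|^{2}/\hbar}J(\xi)^{1/2}\,dx\,d\xi$. On the geometric-quantization side, the identification of $\eta^{2}$ with a constant multiple of the polar Jacobian $J(\xi)$ is exactly what Eq.~(2.10) of \cite{geoquant} records. On the heat-kernel side, the closed-form Gangolli-type expression for the heat kernel on the noncompact symmetric space $K_{\mathbb{C}}/K$ (a symmetric space of the complex type) is the right tool and yields the claimed $e^{-|\xi|^{2}/\hbar}J(\xi)^{-1/2}$ density relative to Haar measure. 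Your cautionary paragraph about reconciling conventions (sign of the Laplacian, normalizations of Haar measures, the constant $b$, etc.) is exactly on point: these discrepancies are the only content beyond the two explicit formulas, and they all get absorbed into $c_{\hbar}$.
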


The paper \cite{geoquant} also considers the \textquotedblleft BKS pairing
map\textquotedblright\ between the quantization of $T^{\ast}(K)\cong
K_{\mathbb{C}}$ obtained using the K\"{a}hler polarization and the
quantization obtained using the vertical polarization. The result is that the
pairing map coincides up to a constant with the generalized Segal--Bargmann
transform introduced in \cite{Hall94}. In particular, the pairing map is a
constant multiple of a unitary map, something that is certainly not true for a
typical pair of polarizations on a symplectic manifold.

For the purposes of the present paper, the importance of Proposition
\ref{geoHeat.prop} is that it is used in the proof of a critical
result---described in Section \ref{fmnThm.sec}---of Florentino, Mour\~{a}o,
and Nunes.

\subsection{The invariant subspace\label{invariantSubspace.sec}}

For each smooth function $\phi$ on $T^{\ast}(K),$ we let $X_{\phi}$ denote the
associated Hamiltonian vector field, which satisfies $\omega(X_{\phi}%
,\cdot)=d\phi.$ We then define the prequantum operator $Q_{\mathrm{pre}}%
(\phi),$ acting on the space of smooth sections of $L,$ by%
\begin{equation}
Q_{\mathrm{pre}}(\phi)=i\hbar\nabla_{X_{\phi}}+\phi. \label{preQuant}%
\end{equation}
If the Hamiltonian flow generated by $\phi$ preserves the polarization on
$T^{\ast}(K)$---that is, if the Hamiltonian flow is holomorphic on
$K_{\mathbb{C}}$---then we can define a (typically unbounded) quantum operator
$Q(\phi)$ on the quantum Hilbert space by the formula%
\[
Q(\phi)[f\otimes\sqrt{\beta}]=(Q_{\mathrm{pre}}(\phi)f)\otimes\sqrt{\beta
}+i\hbar f\otimes\left[  \frac{1}{2}\frac{\mathcal{L}_{X_{\phi}}(\beta)}%
{\beta}\sqrt{\beta}\right]  .
\]
(There is a typographical error in Definition 23.52 of \cite{quantumBook}; the
sign on the right-hand side should be plus rather than minus.)

We now specialize to the case in which $\phi$ is $\phi_{\eta},$ one of the
components of the momentum map. Under our identification of $T^{\ast}(K)$ with
$K_{\mathbb{C}},$ the adjoint action of $K$ on $T^{\ast}(K)$ corresponds to
the conjugation action of $K$ on $K_{\mathbb{C}}$, which is holomorphic. Thus,
$Q(\phi_{\eta})$ is a well-defined operator on the quantum Hilbert space.

\begin{definition}
\label{invariant.def}We say that an element $\psi$ of the quantization of
$T^{\ast}(K)$ is \textbf{invariant} if%
\[
Q(\phi_{\eta})\psi=0
\]
for all $\eta\in\mathfrak{k}.$ The \textbf{reduced quantum Hilbert space} is
the space of all invariant sections.
\end{definition}

We now compute the space of invariant sections explicitly.

\begin{proposition}
Suppose we write an element $\psi$ of the quantum Hilbert space as%
\[
\psi=Fe^{-\left\vert \xi\right\vert ^{2}/(2\hbar)}\otimes\sqrt{\beta}%
\]
as in Conclusion \ref{kcQuant.conclusion}. Then $\psi$ is invariant in the
sense of Definition \ref{invariant.def} if and only if the holomorphic
function $F$ is a class function on $K_{\mathbb{C}}.$ Thus, the reduced
quantum Hilbert space $\mathrm{Quant}(K_{\mathbb{C}})^{\mathrm{Ad}_{K}}$ is
the space of holomorphic class functions on $K_{\mathbb{C}}$ that are square
integrable with respect to the measure $\gamma_{\hbar}$ in (\ref{gamma}).
\end{proposition}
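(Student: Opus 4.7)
The plan is to show that, when $\phi = \phi_{\eta}$, both the prequantum piece and the half-form piece of $Q(\phi_{\eta})$ simplify dramatically on sections of the form $\psi = Fe^{-|\xi|^{2}/(2\hbar)} \otimes \sqrt{\beta}$, reducing the operator to (a constant times) the infinitesimal generator of the conjugation action on $F$. The proposition will then follow by identifying invariant holomorphic functions on $K_{\mathbb{C}}$ with class functions.

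First I would unpack (\ref{preQuant}) and the definition of $\nabla$ in (\ref{covariantDeriv}) to get
\[
Q_{\mathrm{pre}}(\phi_{\eta})f \;=\; i\hbar\,X_{\phi_{\eta}}f \;+\; \bigl(\theta(X_{\phi_{\eta}})+\phi_{\eta}\bigr)\,f.
\]
Because the adjoint action on $T^{\ast}(K)$ is the cotangent lift of the conjugation action on $K$, it preserves the canonical $1$-form $\theta$. Applying $\mathcal{L}_{X_{\phi_{\eta}}}\theta = \iota_{X_{\phi_{\eta}}}\omega + d(\theta(X_{\phi_{\eta}})) = 0$ together with $\iota_{X_{\phi_{\eta}}}\omega = d\phi_{\eta}$ gives $\theta(X_{\phi_{\eta}}) = -\phi_{\eta}$ (the additive constant is fixed by the fact that $\phi$, coming from an action on the base, vanishes on the zero section, where so does $\theta$). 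Hence $Q_{\mathrm{pre}}(\phi_{\eta})f = i\hbar\,X_{\phi_{\eta}}f$. For the half-form term, I would use that $K$ is simply connected, hence semisimple, so $K_{\mathbb{C}}$ is unimodular; consequently the left-invariant holomorphic $n$-form $\beta$ is also right-invariant, hence invariant under the conjugation action of $K_{\mathbb{C}}$. Therefore $\mathcal{L}_{X_{\phi_{\eta}}}\beta = 0$, and the half-form contribution to $Q(\phi_{\eta})$ vanishes.

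Combining these gives
\[
Q(\phi_{\eta})\psi \;=\; i\hbar\,\bigl(X_{\phi_{\eta}}(Fe^{-|\xi|^{2}/(2\hbar)})\bigr) \otimes \sqrt{\beta}.
\]
Since the inner product on $\mathfrak{k}$ is $\mathrm{Ad}$-invariant, the function $|\xi|^{2}$ (viewed on $K_{\mathbb{C}}$ via $\Psi$) is invariant under conjugation, so $X_{\phi_{\eta}}(e^{-|\xi|^{2}/(2\hbar)})=0$ and we are reduced to $X_{\phi_{\eta}}F = 0$. Quantifying over all $\eta\in\mathfrak{k}$, this says exactly that $F$ is invariant under the connected group generated by the conjugation vector fields, i.e.\ under conjugation by $K$. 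Since $F$ is holomorphic and $K_{\mathbb{C}}$ is the complexification of $K$, analytic continuation in the conjugating argument promotes $K$-conjugation invariance to $K_{\mathbb{C}}$-conjugation invariance, so $F$ is a holomorphic class function on $K_{\mathbb{C}}$; the converse direction is immediate from the same calculation. The final identification of $\mathrm{Quant}(K_{\mathbb{C}})^{\mathrm{Ad}_K}$ then follows from Conclusion \ref{kcQuant.conclusion}.

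The only non-routine steps are the two cancellations: the vanishing of $\theta(X_{\phi_{\eta}})+\phi_{\eta}$ (sign conventions must be tracked carefully against those in \cite{quantumBook}), and the vanishing of $\mathcal{L}_{X_{\phi_{\eta}}}\beta$ (requiring bi-invariance of $\beta$, which in turn rests on semisimplicity of $K$). I expect the first of these, matching the sign conventions in (\ref{covariantDeriv}) and (\ref{preQuant}), to be the most delicate part of the verification.
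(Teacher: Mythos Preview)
Your proposal is correct and follows essentially the same route as the paper: both reduce $Q_{\mathrm{pre}}(\phi_{\eta})$ to $i\hbar X^{\eta}$ via Cartan's formula and the fact that lifted actions preserve $\theta$ (fixing the constant on the zero section), kill the half-form term by the $\mathrm{Ad}_{K}$-invariance of $\beta$, and then pass from $K$-invariance to $K_{\mathbb{C}}$-invariance by holomorphicity. The only cosmetic difference is that you argue invariance of $\beta$ via unimodularity/bi-invariance of $K_{\mathbb{C}}$, whereas the paper argues directly that $\det(\mathrm{Ad}_{x})=1$ for $x\in K$; these are equivalent observations.
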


\begin{proof}
We first make an observation about the operator $Q_{\mathrm{pre}}(\phi)$ in
(\ref{preQuant}). By the definition (\ref{covariantDeriv}) of the covariant
derivative, we have%
\[
Q_{\mathrm{pre}}(\phi)=i\hbar X_{\phi}+\theta(X_{\phi})+\phi,
\]
where $\theta$ satisfies $d\theta=\omega.$ Now, by Cartan's formula for the
Lie derivative
\begin{align*}
\mathcal{L}_{X_{\phi}}\theta &  =d[i_{X_{\phi}}\theta]+i_{X_{\phi}}(d\theta)\\
&  =d[\theta(X_{\phi})]+\omega(X_{\phi},\cdot)\\
&  =d[\theta(X_{\phi})+\phi].
\end{align*}
Thus, if $\mathcal{L}_{X_{\phi}}\theta=0,$ we have $d[\theta(X_{\phi}%
)+\phi]=0.$

For any $\eta\in\mathfrak{k},$ the Hamiltonian vector field $X_{\phi_{\eta}}$
is the generator of the adjoint action of the one-parameter subgroup
$e^{t\eta}$ on $T^{\ast}(K).$ We denote this vector field more compactly as
$X^{\eta}.$ Now, if we take $\phi=\phi_{\eta},$ the action of $X_{\phi_{\eta}%
}=X^{\eta}$ on $T^{\ast}(K)$ is induced from an action on the base. But any
such action will preserve the canonical 1-form $\theta,$ showing that
$\mathcal{L}_{X^{\eta}}\theta=0.$ We conclude, then, that $d[\theta(X^{\eta
})+\phi_{\eta}]=0,$ showing that $\theta(X^{\eta})+\phi_{\eta}$ is a constant
for each \thinspace$\eta\in\mathfrak{k}.$ But since $\theta(X^{\eta})$ and
$\phi_{\eta}$ are easily seen to be zero on the zero section inside $T^{\ast
}(K)$, we actually have $\theta(X^{\eta})+\phi_{\eta}=0.$ Thus, we have
simply
\begin{equation}
Q_{\mathrm{pre}}(\phi_{\eta})=i\hbar X^{\eta}. \label{QpreTheta}%
\end{equation}

Finally, we claim that the form $\beta$ is invariant under the adjoint action
of $K.$ To see this, observe that if we transform $\beta$ by the adjoint
action of $x\in K,$ the resulting form $x\cdot\beta$ will still be a
left-invariant holomorphic $n$-form, which must agree with $\beta$ up to a
constant. It thus suffices to compare $x\cdot\beta$ to $\beta$ at the
identity. But at the identity, $x\cdot\beta=\det(\mathrm{Ad}_{x})\beta=\beta,$
because $\mathrm{Ad}_{x}\in SO(\mathfrak{k})\subset SO(\mathfrak{k}%
_{\mathbb{C}};\mathbb{C}).$ Since the function $e^{-\left\vert \xi\right\vert
^{2}/(2\hbar)}$ is also invariant under the adjoint action of $K,$ we obtain%
\[
Q(\phi_{\eta})\psi=(X^{\eta}F)e^{-\left\vert \xi\right\vert ^{2}/(2\hbar
)}\otimes\sqrt{\beta}.
\]
Thus, the invariant elements are those for which $X^{\eta}F=0,$ i.e., those
invariant under the adjoint action of $K.$ But since $F$ is holomorphic, if
$F$ is invariant under the adjoint action of $K,$ it is also invariant under
the adjoint action of $K_{\mathbb{C}}$; that is, $F$ is a holomorphic class
function on $K_{\mathbb{C}}.$
\end{proof}

\subsection{The theorem of Florentino, Mour\~{a}o, and Nunes\label{fmnThm.sec}%
}

The goal of this section is to describe a formula, obtained by Florentino,
Mour\~{a}o, and Nunes in \cite{FMN}, for computing the $L^{2}$ norm of a
holomorphic class function $F$ with respect to the measure $\gamma_{\hbar}$ in
(\ref{gamma}). The formula expresses the square of the $L^{2}$ of $F$ as a
certain integral of $\left\vert F\right\vert ^{2}$ over $T_{\mathbb{C}}.$ Now,
almost every point in $K_{\mathbb{C}}$ is conjugate to a point---unique up to
the action of $W$---in $T_{\mathbb{C}}.$ It is therefore easy to show that
there is \textit{some} $W$-invariant measure $\mu_{\hbar}$ on $T_{\mathbb{C}}$
such that
\[
\int_{K_{\mathbb{C}}}\left\vert F(g)\right\vert ^{2}~d\gamma_{\hbar}%
=\int_{T_{\mathbb{C}}}\left\vert F(z)\right\vert ^{2}d\mu_{\hbar}(z),
\]
for all functions $F$ (not necessarily holomorphic). What is not obvious is
whether there is any way to compute $\mu_{\hbar}$ explicitly.

To describe the result of Florentino, Mour\~{a}o, and Nunes, we make use of
Proposition \ref{geoHeat.prop}, which relates the measure $\gamma_{\hbar}$ in
(\ref{gamma}) to a heat kernel measure on $K_{\mathbb{C}}.$ We fix a Haar
measure $dg$ on $K_{\mathbb{C}}$ and consider the \textquotedblleft%
$K$-averaged heat kernel\textquotedblright\ $\nu_{\hbar}$ on $K_{\mathbb{C}}$
\cite[Theorem 2]{Hall94}, normalized so that $\nu_{\hbar}(g)~dg$ is a
probability measure. (This measure is just the heat kernel measure for the
noncompact symmetric space $K_{\mathbb{C}}/K,$ viewed as a $K$-invariant
measure on $K_{\mathbb{C}}.$) We let $dz$ and $\nu_{\hbar}^{\prime}$ be the
analogous objects on $T_{\mathbb{C}}.$

\begin{theorem}
[Florentino, Mour\~{a}o, and Nunes]\label{fmn.thm}If $F$ is a holomorphic
class function on $K_{\mathbb{C}}$, then%
\begin{equation}
\int_{K_{\mathbb{C}}}\left\vert F(g)\right\vert ^{2}\nu_{\hbar}(g)~dg=\frac
{e^{-\hbar\left\Vert \delta\right\Vert ^{2}}}{\left\vert W\right\vert }%
\int_{T_{\mathbb{C}}}\left\vert \sigma_{\mathbb{C}}(z)F(z)\right\vert ^{2}%
\nu_{\hbar}^{\prime}(z)~dz, \label{fmnFormula}%
\end{equation}
where $\delta$ is half the sum of the positive roots. Furthermore, if
$\Phi:T_{\mathbb{C}}\rightarrow\mathbb{C}$ is a $W$-alternating holomorphic
function for which%
\[
\int_{T_{\mathbb{C}}}\left\vert \Phi(z)\right\vert ^{2}\nu_{\hbar}^{\prime
}(z)~dz<\infty,
\]
then there exists a unique holomorphic class function $F$ on $K_{\mathbb{C}}$
that is square integrable with respect to $\nu_{\hbar}$ and such that
\[
(\sigma_{\mathbb{C}})(\left.  F\right\vert _{T_{\mathbb{C}}})=\Phi.
\]

\end{theorem}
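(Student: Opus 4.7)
The plan is to reduce both sides of the identity \eqref{fmnFormula} to an orthogonality relation in the basis of irreducible characters and then verify that the two sides match term by term. First, I would expand a holomorphic class function $F$ on $K_{\mathbb{C}}$ in the basis of characters $\chi_\lambda$ of irreducible representations of $K$, extended holomorphically to $K_{\mathbb{C}}$ and indexed by dominant integral weights $\lambda$. The crucial input is that these extended characters form an orthogonal family with
\[
\int_{K_{\mathbb{C}}} |\chi_\lambda(g)|^2\,\nu_\hbar(g)\,dg = e^{\hbar(\|\lambda+\delta\|^2 - \|\delta\|^2)}.
\]
This follows from the generalized Segal--Bargmann isometry of \cite{Hall94}: the transform is a unitary map from $L^2(K)$ onto the holomorphic subspace of $L^2(K_{\mathbb{C}},\nu_\hbar\,dg)$ which sends the unit-norm character $\chi_\lambda$ to $e^{-\hbar(\|\lambda+\delta\|^2-\|\delta\|^2)/2}$ times its holomorphic extension (because $\chi_\lambda$ is a Casimir eigenfunction with eigenvalue $\|\lambda+\delta\|^2-\|\delta\|^2$), so unitarity yields the stated norm.

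Next I would handle the torus side using the Weyl character formula, which on $T_{\mathbb{C}}$ gives
\[
\sigma_{\mathbb{C}}(z)\,\chi_\lambda(z) = \sum_{w\in W}\mathrm{sign}(w)\,e^{i\langle w(\lambda+\delta),H\rangle}, \qquad z = e^{iH},\ H\in\mathfrak{t}_{\mathbb{C}}.
\]
Since $\lambda+\delta$ is strictly dominant, the orbit $W\cdot(\lambda+\delta)$ has exactly $|W|$ distinct elements. By the same Segal--Bargmann argument applied to $T$ in place of $K$, the torus characters $e^{i\langle\mu,\cdot\rangle}$ satisfy $\int_{T_{\mathbb{C}}} |e^{i\langle\mu,H\rangle}|^2\,\nu_\hbar'(z)\,dz = e^{\hbar\|\mu\|^2}$ and are orthogonal for distinct $\mu$, so expanding the square and discarding cross terms yields
\[
\int_{T_{\mathbb{C}}} |\sigma_{\mathbb{C}}(z)\,\chi_\lambda(z)|^2\,\nu_\hbar'(z)\,dz = |W|\,e^{\hbar\|\lambda+\delta\|^2}.
\]
Multiplying by $e^{-\hbar\|\delta\|^2}/|W|$ reproduces the previous $K_{\mathbb{C}}$-integral exactly, which establishes \eqref{fmnFormula} on each character; the general case follows by orthogonal expansion.

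For the converse, I would note that the family $\{\sigma_{\mathbb{C}}\chi_\lambda\}_\lambda$ is an orthogonal basis for the closed subspace of Weyl-alternating holomorphic elements in $L^2(T_{\mathbb{C}},\nu_\hbar'\,dz)$: finite linear combinations span exactly the $W$-alternating exponential polynomials, whose density follows from Peter--Weyl on the compact torus combined with the alternating projection and the torus Segal--Bargmann isometry. Given $\Phi$ as in the statement, I would expand $\Phi = \sum a_\lambda\,\sigma_{\mathbb{C}}\chi_\lambda$ and set $F := \sum a_\lambda\chi_\lambda$; the isometry from the first step shows $F \in L^2(K_{\mathbb{C}},\nu_\hbar\,dg)$ and the identity $\sigma_{\mathbb{C}}\cdot F|_{T_{\mathbb{C}}} = \Phi$ holds by construction. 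The main obstacle is the initial norm identity for $\chi_\lambda$ on $K_{\mathbb{C}}$: this is a Harish-Chandra-type heat-kernel statement on the symmetric space $K_{\mathbb{C}}/K$ and is exactly where the factor $e^{-\hbar\|\delta\|^2}$ in \eqref{fmnFormula} originates. Once this spectral input is in hand, everything else reduces to algebraic bookkeeping via the Weyl character formula.
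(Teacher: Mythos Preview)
Your proposal is correct and follows essentially the same approach as the paper: reduce to characters via the Segal--Bargmann isometries for $K$ and $T$, compute the relevant Laplacian eigenvalues, and invoke the Weyl character formula for the torus side. The only organizational difference is that the paper routes the argument through the classical Weyl integral formula on $K$ (going $K_{\mathbb{C}}\to K\to T\to T_{\mathbb{C}}$), whereas you evaluate both sides directly on characters and match the resulting scalars; the ingredients and the surjectivity argument are identical.
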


Recall from Proposition \ref{geoHeat.prop} that the measure $\nu_{\hbar
}(g)~dg$ coincides up to a constant with the measure $\gamma_{\hbar}$ on
$T^{\ast}(K)\cong K_{\mathbb{C}}$ that arises in geometric quantization. (See
(\ref{gamma}).) A similar statement applies to the measure $\nu_{\hbar
}^{\prime}(z)~dz$ on $T^{\ast}(T)\cong T_{\mathbb{C}}.$ Thus, Theorem
\ref{fmn.thm} gives us an \textit{explicit} way of computing the norm of an
invariant element of the quantum Hilbert space as an integral over
$T_{\mathbb{C}}.$ Note also the key role played by the analytically continued
Weyl denominator $\sigma_{\mathbb{C}}$: The integral on the right-hand side of
(\ref{fmnFormula}) is computing the square of the $L^{2}$ norm of
$(\sigma_{\mathbb{C}})(\left.  F\right\vert _{T_{\mathbb{C}}})$---rather than
the norm of $\left.  F\right\vert _{T_{\mathbb{C}}}$---with respect to
$\nu_{\hbar}^{\prime}(z)~dz.$

As we have discussed in Section \ref{main.sec}, the main goal of this paper is
to interpret Theorem \ref{fmn.thm} as a unitary \textquotedblleft quantization
commutes with reduction\textquotedblright\ result. To achieve this goal, we
must (1) show that the quantization of $K_{\mathbb{C}}/\!\!/\mathrm{Ad}_{K}$
can be identified with a space of holomorphic functions on $T_{\mathbb{C}},$
and (2) show that the map%
\[
F\mapsto(\sigma_{\mathbb{C}})(\left.  F\right\vert _{T_{\mathbb{C}}})
\]
arises from a Guillemin--Sternberg-type map with half-forms, similar to the
one in \cite{HallKirwin}.

Theorem \ref{fmn.thm} result is remarkably similar to the Weyl integral
formula (e.g., \cite[Proposition 12.24]{LieBook}), which states that if $f$ is
a continuous class function on $K,$ we have%
\[
\int_{K}\left\vert f(x)\right\vert ^{2}~dx=\frac{1}{\left\vert W\right\vert
}\int_{T}\left\vert \sigma(t)f(t)\right\vert ^{2}~dt,
\]
where $dx$ and $dt$ are the normalized Haar measures on $K$ and $T,$
respectively. In passing from $K$ and $T$ to $K_{\mathbb{C}}$ and
$T_{\mathbb{C}},$ we merely replace the Haar measures with heat kernel
measures, change $\sigma$ to $\sigma_{\mathbb{C}},$ and add a factor of
$e^{-\hbar\left\Vert \delta\right\Vert ^{2}}$ on the right-hand side.

A result similar to Theorem \ref{fmn.thm} for holomorphic functions in the
dual noncompact setting was given by Hall and Mitchell. (Compare the isometry
theorem in \cite{HMcomplex} in the general case to the isometry theorem in
\cite{HMradial} in the radial case.)

Florentino, Mour\~{a}o, and Nunes give two proofs of Theorem \ref{fmn.thm},
one of which (proof of Theorem 2.3 in \cite{FMN}) actually applies to an
arbitrary measurable (i.e., not necessarily holomorphic) class function $F.$
Since the holomorphic case of Theorem \ref{fmn.thm} is a vital result for this
paper, we outline the proof of this case, following \cite{FMN}. Note that the
statements in \cite{FMN} differ by various factors of $2\pi$ from our
statement of Theorem \ref{fmn.thm}, because of differences in the scaling of
the heat equation.

The ingredients of the proof are the generalized Segal--Bargmann transform for
the group $K$ (see \cite{Hall94}), the analogous transform for $T,$ and the
Weyl integral formula. The Segal--Bargmann transform for $K$ is the map
$C_{\hbar}:L^{2}(K)\rightarrow\mathcal{H}(K_{\mathbb{C}})$ given by%
\[
C_{\hbar}(f)=(e^{\hbar\Delta_{K}/2}f)_{\mathbb{C}},
\]
where $\Delta_{K}$ is the (negative) Laplacian for $K,$ $e^{\hbar\Delta_{K}%
/2}$ is the associated heat operator, and $(\cdot)_{\mathbb{C}}$ denotes
analytic continuation from $K$ to $K_{\mathbb{C}}.$ The transform is a unitary
map from $L^{2}(K,dx)$ onto $\mathcal{H}L^{2}(K_{\mathbb{C}},\nu_{\hbar
}(g)~dg),$ where $dx$ is the normalized Haar measure on $K$ and where
$\mathcal{H}L^{2}$ denotes the space of square integrable holomorphic
functions \cite[Theorem 2]{Hall94}. Since $T$ is also a connected compact Lie
group, there is a similar unitary map from $L^{2}(T,dt)$ to $\mathcal{H}%
L^{2}(T_{\mathbb{C}},\nu_{\hbar}^{\prime}(z)~dz).$

\begin{proof}
For each dominant integral element $\mu$, let $\chi_{\mu}:K\rightarrow
\mathbb{C}$ denote the character of the irreducible representation of $K$ with
highest weight $\mu.$ Then $\chi_{\mu}$ has a holomorphic extension to
$K_{\mathbb{C}},$ denoted $(\chi_{\mu})_{\mathbb{C}}.$ We consider first the
case that $F=(\chi_{\mu})_{\mathbb{C}}.$ Now, the character $\chi_{\mu}$
satisfies%
\begin{equation}
\Delta_{K}(\chi_{\mu})=-(\left\Vert \mu+\delta\right\Vert ^{2}-\left\Vert
\delta\right\Vert ^{2})\chi_{\mu}. \label{Delta_Kev}%
\end{equation}
(This claim follows easily from Proposition 10.6 in \cite{LieBook}.) Thus, if
we take $f=e^{\frac{\hbar}{2}(\left\Vert \mu+\delta\right\Vert ^{2}-\left\Vert
\delta\right\Vert ^{2})}\chi_{\lambda},$ we will have $C_{\hbar}(f)=F.$ By the
isometricity of $C_{\hbar}$ and the Weyl integral formula, we then have%
\begin{equation}
\int_{K_{\mathbb{C}}}\left\vert F(g)\right\vert ^{2}\nu_{\hbar}(g)~dg=\int%
_{K}\left\vert f(x)\right\vert ^{2}~dx=\frac{e^{\hbar(\left\Vert \mu
+\delta\right\Vert ^{2}-\left\Vert \delta\right\Vert ^{2})}}{\left\vert
W\right\vert }\int_{T}\left\vert \sigma\chi_{\mu}(t)\right\vert ^{2}~dt,
\label{FMNStep1}%
\end{equation}
where $dx$ and $dt$ are the normalized Haar measures on $K$ and $T,$ respectively.

Meanwhile, the function $\sigma\chi_{\mu}$ on $T$ satisfies%
\begin{equation}
\Delta_{T}(\sigma\chi_{\mu})=-\left\Vert \mu+\delta\right\Vert ^{2}\chi_{\mu
}\text{;} \label{Delta_Tev}%
\end{equation}
note the shift in the eigenvalue between (\ref{Delta_Kev}) and
(\ref{Delta_Tev}). This claim follows from the special form of the
\textquotedblleft radial part\textquotedblright\ of the Laplacian on a compact
Lie group. (See Proposition 2.3 on p. 278 of \cite{Ber}; the proof is
essentially the same as the proof of Proposition 3.10 in Chapter II of
\cite{Hel} in the dual noncompact setting.) But (\ref{Delta_Tev}) also follows
easily from the Weyl character formula: The numerator in the character formula
is easily seen to be an eigenfunction of $\Delta_{T}$ with the stated
eigenvalue. The isometricity of the Segal--Bargmann transform for $T$ then
tells us that
\begin{align}
&  \frac{e^{\hbar(\left\Vert \mu+\delta\right\Vert ^{2}-\left\Vert
\delta\right\Vert ^{2})}}{\left\vert W\right\vert }\int_{T}\left\vert
\sigma\chi_{\mu}(t)\right\vert ^{2}~dt\nonumber\\
&  =\frac{e^{-\hbar\left\Vert \delta\right\Vert ^{2}}}{\left\vert W\right\vert
}\int_{T_{\mathbb{C}}}\left\vert \sigma_{\mathbb{C}}(z)\right\vert
^{2}\left\vert (\chi_{\mu})_{\mathbb{C}}(z)\right\vert ^{2}\nu_{\hbar}%
^{\prime}(z)~dz. \label{FMNStep2}%
\end{align}
Combining (\ref{FMNStep1}) and (\ref{FMNStep2}) establishes (\ref{fmnFormula})
when $F=(\chi_{\mu})_{\mathbb{C}}.$

Now, it follows from the \textquotedblleft holomorphic Peter--Weyl
theorem\textquotedblright\ of \cite[Theorem 9]{Hall94} that the functions
$(\chi_{\mu})_{\mathbb{C}}$ form an orthogonal basis for the space of
holomorphic class functions in $L^{2}(K_{\mathbb{C}},\nu_{\hbar}(g)~dg).$
Meanwhile, it is not hard to show that the functions $(\chi_{\mu}%
)_{\mathbb{C}}$ are also orthogonal in $L^{2}(T_{\mathbb{C}},\nu_{\hbar
}^{\prime}(z)~dz).$ (Using the Segal--Bargmann transform for $T$ along with
(\ref{Delta_Tev}), the desired result reduces to the orthogonality of the
functions $\sigma\chi_{\mu}$ in $L^{2}(T,dt),$ which is a consequence of the
Weyl character formula.) Thus, the general version of (\ref{fmnFormula})
reduces to the already established case for characters.

Finally, suppose $\Phi:T_{\mathbb{C}}\rightarrow\mathbb{C}$ is as in the
second part of the theorem. We can expand $\Phi$ in a Fourier--Laurent series
in terms of the exponential functions%
\[
f_{\lambda}(e^{H})=e^{i\left\langle \lambda,H\right\rangle },\quad
H\in\mathfrak{t}_{\mathbb{C}},
\]
where $\lambda$ ranges over all integral elements in $\mathfrak{t}.$ (If we
identify $T_{\mathbb{C}}$ with $(\mathbb{C}^{\ast})^{r},$ these functions are
just the monomials.) If $\Phi$ is $W$-alternating, the coefficients in the
expansion of $\Phi$ must also be $W$-alternating. Thus, the coefficient of
$f_{\lambda}$ will be zero if $\lambda$ belongs to any of the walls of the
Weyl chambers. The coefficients where $\lambda$ is not in the wall of any
chamber, meanwhile, can be grouped into Weyl orbits. If $\lambda$ is in the
interior of the fundamental Weyl chamber, then $\lambda=\mu+\delta$ for some
$\mu$ in the closed fundamental Weyl chamber \cite[Proposition 8.38]{LieBook}.
The group of exponentials coming from the Weyl-orbit of $\lambda$ is then the
numerator in the Weyl character formula for the representation with highest
weight $\mu.$ The desired $F$ can then be constructed as a linear combination
of the analytically continued characters $(\chi_{\mu}),$ with the isometricity
in (\ref{fmnFormula}) guaranteeing convergence of the expansion.
\end{proof}

\section{Reduction of the classical phase space}

Recall from (\ref{TTInTK}) that we think of $T^{\ast}(T)$ as a submanifold of
$T^{\ast}(K).$ We are are going to identify a \textquotedblleft regular
set\textquotedblright\ $\phi^{-1}(0)^{\mathrm{reg}}$ inside the zero set
$\phi^{-1}(0)$ of the momentum map. We are mainly interested in the regular
part of the reduced phase space,%
\[
\phi^{-1}(0)^{\mathrm{reg}}/\mathrm{Ad}_{K},
\]
which is referred to as the \textquotedblleft principal
stratum\textquotedblright\ in \cite{HRS} and \cite{BLV}. We will see that
$T^{\ast}(T)$ is contained in $\phi^{-1}(0);$ we then define $T^{\ast
}(T)^{\mathrm{reg}}$ as the intersection of $T^{\ast}(T)$ with $\phi
^{-1}(0)^{\mathrm{reg}}.$ We will show that the regular part of the reduced
phase space is a smooth symplectic manifold, which may be identified as%
\[
\phi^{-1}(0)^{\mathrm{reg}}/\mathrm{Ad}_{K}=T^{\ast}(T)^{\mathrm{reg}}/W.
\]
In addition, we will show that $\phi^{-1}(0)^{\mathrm{reg}}/\mathrm{Ad}_{K}$
inherits a K\"{a}hler structure from the K\"{a}hler structure on $T^{\ast
}(K)\cong K_{\mathbb{C}}.$ As a complex manifold, we have%
\[
\phi^{-1}(0)^{\mathrm{reg}}/\mathrm{Ad}_{K}\cong T_{\mathbb{C}}^{\mathrm{rss}%
}/W,
\]
where $T_{\mathbb{C}}^{\mathrm{rss}}$ denotes the set of \textquotedblleft
regular semisimple\textquotedblright\ points in $T_{\mathbb{C}}.$ The reader
who wishes to take these identifications on faith may look at the statements
of Theorems \ref{regularPoints.thm} and \ref{quotient.thm} and then proceed to
Section \ref{quantReduced.sec}.

Although some of the calculations in this section have appeared elsewhere
(e.g., Section 1 of \cite{Hueb} or Section 4.1 of \cite{BLV}), we give special
emphasis to identifying the regular set and it is therefore simplest to give
complete proofs.

\subsection{The zero set of the momentum map}

Recall the formula for the momentum map $\phi:T^{\ast}(K)\rightarrow
\mathfrak{k}^{\ast}$ given in (\ref{momentum}) in Section
\ref{momentumMap.sec}. From the formula, we immediately obtain that the
zero-set of $\phi$ is as follows:%
\begin{equation}
\phi^{-1}(0)=\left\{  \left.  (x,\xi)\in T^{\ast}(K)\right\vert \mathrm{Ad}%
_{x}(\xi)=\xi\right\}  . \label{zeroset}%
\end{equation}
Recall also that we identify $T^{\ast}(T)$ as a subset of $T^{\ast}(K)$ as in
(\ref{TTInTK}).

\begin{proposition}
\label{zeroSet.prop}Every point in $T^{\ast}(T)$ belongs to $\phi^{-1}(0)$ and
each $\mathrm{Ad}_{K}$-orbit in $\phi^{-1}(0)$ intersects $T^{\ast}(T)$ in
exactly one $W$-orbit.
\end{proposition}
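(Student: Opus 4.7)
The proposition splits into three parts: (i) $T^{\ast}(T)\subseteq\phi^{-1}(0)$; (ii) every $\mathrm{Ad}_{K}$-orbit in $\phi^{-1}(0)$ meets $T^{\ast}(T)$; and (iii) two points of $T^{\ast}(T)$ that lie in the same $\mathrm{Ad}_{K}$-orbit actually lie in the same $W$-orbit. Claim (i) is immediate from the formula $\phi(x,\xi)=\mathrm{Ad}_{x}(\xi)-\xi$ in (\ref{momentum}): if $(t,H)\in T\times\mathfrak{t}$, then $\mathrm{Ad}_{t}(H)=H$ by commutativity of $T$.

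For (ii), take $(x,\xi)\in\phi^{-1}(0)$. The identity $\mathrm{Ad}_{x}(\xi)=\xi$ is equivalent to $x$ commuting with every element of the one-parameter subgroup $\{\exp(s\xi):s\in\mathbb{R}\}$, so the closed subgroup of $K$ generated by $x$ together with these exponentials is a compact abelian Lie group, and is therefore contained in some maximal torus $T'$ of $K$. Since all maximal tori of $K$ are conjugate, there exists $y\in K$ with $yT'y^{-1}=T$; the corresponding element $y\cdot(x,\xi)$ then lies in $T\times\mathfrak{t}=T^{\ast}(T)$.

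The real content is (iii). Suppose $(t_{1},H_{1}),(t_{2},H_{2})\in T\times\mathfrak{t}$ satisfy $y\cdot(t_{1},H_{1})=(t_{2},H_{2})$ for some $y\in K$; we must modify $y$ so that it lies in $N(T)$. Let $B\subset T$ denote the closure of the subgroup generated by $t_{2}$ and $\{\exp(sH_{2}):s\in\mathbb{R}\}$, and set $T':=yTy^{-1}$. Because $t_{2}$ and $\exp(sH_{2})$ are the $y$-conjugates of $t_{1}$ and $\exp(sH_{1})\in T$, we have $B\subset T'$ as well as $B\subset T$. Hence $T$ and $T'$ are both maximal tori of the connected compact group $Z_{K}(B)_{0}$, and by conjugacy of maximal tori there exists $z\in Z_{K}(B)_{0}$ with $zTz^{-1}=T'$. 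Put $w:=z^{-1}y$. Then $wTw^{-1}=T$, so $w\in N(T)$; and because $z$ centralizes every element of $B$---which forces $zt_{2}z^{-1}=t_{2}$ and, by differentiating $z\exp(sH_{2})z^{-1}=\exp(sH_{2})$, also $\mathrm{Ad}_{z}(H_{2})=H_{2}$---we still have $w\cdot(t_{1},H_{1})=(t_{2},H_{2})$. This exhibits $(t_{1},H_{1})$ and $(t_{2},H_{2})$ as lying in the same $W$-orbit.

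The main subtlety is verifying in step (iii) that $T$ and $T'$ really are maximal tori of $Z_{K}(B)_{0}$; this reduces to the observation that any torus of $K$ properly containing $T$ would contradict maximality of $T$ in $K$, together with the fact that $T$ and $T'$ are connected and contain the identity, hence sit inside the identity component of the centralizer. Everything else is routine bookkeeping with $Z_{K}(B)_{0}$, which is chosen precisely so that aligning the tori $T$ and $T'$ does not disturb the target point $(t_{2},H_{2})$.
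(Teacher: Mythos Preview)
Your proof is correct and follows the same outline as the paper's. One point in step~(ii) needs sharpening: it is not true in general that every closed abelian subgroup of a compact connected Lie group lies inside a maximal torus (non-toral finite abelian subgroups exist even in simply connected groups), so the word ``therefore'' is doing more work than it can bear. What is true, and what the paper invokes as \cite[Lemma~11.37]{LieBook}, is the special case you actually need: an element $x$ commuting with a connected abelian subgroup $S$ lies together with $S$ in a common maximal torus. Your step~(iii) spells out in full the centralizer argument that the paper merely indicates by reference to the proof of \cite[Theorem~11.39]{LieBook}; the two are identical in substance.
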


\begin{proof}
First, since $T$ is commutative, every point in $T^{\ast}(T)$ certainly
satisfies the condition in (\ref{zeroset}). Second, suppose that $(x,\xi
)\in\phi^{-1}(0).$ Then $x$ commutes with every element of the connected,
commutative subgroup $S:=\{e^{t\xi}\}_{t\in\mathbb{R}}$ of $K.$ Thus, by Lemma
11.37 of \cite{LieBook}, there is a maximal torus $S^{\prime}$ that contains
both $x$ and $S.$ By the torus theorem, $S^{\prime}$ is conjugate to $T$.
Thus, there is some $y\in K$ such that $h:=yxy^{-1}$ belongs to $T$ and
$H:=y\xi y^{-1}$ belongs to $\mathfrak{t},$ showing that $(x,\xi)$ can be
moved to a point in $T^{\ast}(T).$

Last, suppose $(t,H)$ and $(t^{\prime},H^{\prime})$ in $T^{\ast}(T)$ belong to
the same $\mathrm{Ad}_{K}$-orbit. A standard result in the theory of compact
groups says that if two elements of $T$ are conjugate in $K,$ they belong to
the same Weyl group orbit. The same proof applies without change here to show
that $(t,H)$ and $(t^{\prime},H^{\prime})$ must be in the same Weyl group
orbit. In the proof of Theorem 11.39 in \cite{LieBook}, for example, we may
simply replace the centralizer of $t$ by the stabilizer of $(t,H)$ and the
argument goes through without change.
\end{proof}

\subsection{Regular points\label{regularPoints.sec}}

The action of $K$ on $\phi^{-1}(0)$ is not even generically free. We can
nevertheless identify a \textquotedblleft regular set\textquotedblright\ in
$\phi^{-1}(0)$ where the stabilizer is as small as possible. Define, for each
$(x,\xi)\in\phi^{-1}(0),$ the \textbf{stabilizer} $S_{(x,\xi)}$ as
\[
S_{(x,\xi)}=\left\{  \left.  y\in K\right\vert yxy^{-1}=x,~\mathrm{Ad}_{y}%
(\xi)=\xi\right\}  .
\]
It follows from Proposition \ref{zeroSet.prop} that for all $(x,\xi)\in
\phi^{-1}(0),$ the stabilizer of $(x,\xi)$ \textit{contains} a maximal torus
in $K.$

\begin{definition}
A point $(x,\xi)$ in $\phi^{-1}(0)$ is called \textbf{regular} if $S_{(x,\xi
)}$ \emph{is} a maximal torus in $K.$
\end{definition}

The set of regular points is referred to as the \textquotedblleft principal
stratum\textquotedblright\ in \cite{HRS} and \cite{BLV}. The other strata in
those papers are defined by specifying the conjugacy class of the stabilizer.

We would like to understand when a point in $\phi^{-1}(0)$ is regular. In
light of Proposition \ref{zeroSet.prop}, it suffices to consider points in
$T^{\ast}(T).$

\begin{theorem}
\label{regularPoints.thm}Consider a point $(e^{H_{1}},H_{2})$ in $T^{\ast
}(T)\subset\phi^{-1}(0).$ Then $(e^{H_{1}},H_{2})$ is regular if and only if
for each root $\alpha,$ we have \emph{either}
\[
\left\langle \alpha,H_{1}\right\rangle \notin2\pi\mathbb{Z}%
\]
or
\[
\left\langle \alpha,H_{2}\right\rangle \neq0.
\]

\end{theorem}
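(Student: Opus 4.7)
The plan is to reduce the question to the structure of centralizers in the complexified group $K_{\mathbb{C}}$ via the K\"{a}hler identification $T^{\ast}(K)\cong K_{\mathbb{C}}$. First I would observe that, under the diffeomorphism $\Psi:K\times\mathfrak{k}\to K_{\mathbb{C}}$ of (\ref{PhiDef}), the adjoint action of $K$ on $T^{\ast}(K)$ corresponds to the conjugation action of $K$ on $K_{\mathbb{C}}$, and the point $(e^{H_{1}},H_{2})\in T^{\ast}(T)$ is sent to $g:=e^{H_{1}}e^{-iH_{2}}=e^{H_{1}-iH_{2}}\in T_{\mathbb{C}}$. Consequently the stabilizer $S_{(e^{H_{1}},H_{2})}$ coincides with the $K$-centralizer $Z_{K_{\mathbb{C}}}(g)\cap K$ of $g$ inside $K_{\mathbb{C}}$.

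Next I would compute the root characters of $g$. Using that $[H,Y]=i\langle\alpha,H\rangle Y$ for $Y$ in the $\alpha$ root space of $\mathfrak{k}_{\mathbb{C}}$, a direct calculation yields
\[
\alpha_{\mathbb{C}}(g)=e^{i\langle\alpha,H_{1}\rangle+\langle\alpha,H_{2}\rangle},
\]
which equals $1$ exactly when the real part $\langle\alpha,H_{2}\rangle$ vanishes and the imaginary part $\langle\alpha,H_{1}\rangle$ lies in $2\pi\mathbb{Z}$. In other words, the two-part disjunction in the theorem fails for a given $\alpha$ if and only if $\alpha_{\mathbb{C}}(g)=1$ for that root, so the condition stated in the theorem is precisely the assertion that $g$ is a \emph{regular semisimple} element of $K_{\mathbb{C}}$.

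Now I would appeal to the structural result recalled in Section \ref{prelim.sec}: because $K$ is simply connected, the $K_{\mathbb{C}}$-centralizer of any regular semisimple element is a complex maximal torus. Since $g$ already lies in $T_{\mathbb{C}}$, its $K_{\mathbb{C}}$-centralizer must equal $T_{\mathbb{C}}$ itself; then, using the polar decomposition $T_{\mathbb{C}}=T\cdot\exp(i\mathfrak{t})$ to see that $T_{\mathbb{C}}\cap K=T$, one obtains $S_{(e^{H_{1}},H_{2})}=T$, a maximal torus, so the root condition implies regularity. For the converse, if the root condition fails for some $\alpha$, the same root-space calculation exhibits the two-dimensional real subspace $(\mathfrak{g}_{\alpha}\oplus\mathfrak{g}_{-\alpha})\cap\mathfrak{k}$ inside the Lie algebra of the stabilizer in addition to $\mathfrak{t}$, so the stabilizer strictly contains $T$ and cannot be a torus. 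The main conceptual step, and essentially the only obstacle, is the reinterpretation of the two ``either/or'' conditions as the single algebraic condition $\alpha_{\mathbb{C}}(g)\neq 1$; once this is in place, the simple connectedness of $K$ supplies the minimal-centralizer property of regular semisimple elements that finishes the argument.
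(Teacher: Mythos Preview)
Your proposal is correct and follows essentially the same route as the paper: the paper likewise identifies $S_{(x,\xi)}$ with $C_{K}(xe^{-i\xi})$, characterizes regular semisimple elements of $T_{\mathbb{C}}$ by the condition $\phi_{\alpha}(z)\neq 1$ for all roots (using Steinberg's theorem, which requires simple connectedness), and then computes $\phi_{\alpha}(e^{H_{1}}e^{-iH_{2}})=e^{i\langle\alpha,H_{1}\rangle}e^{\langle\alpha,H_{2}\rangle}$ exactly as you do. The only organizational difference is that the paper splits the argument into three separate propositions, whereas you present it as a single chain.
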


It should be emphasized that this characterization of the regular set in
$\phi^{-1}(0)$ is valid only because of our standing assumption that $K$ is
simply connected. (This assumption is used in the proof of Proposition
\ref{rss.prop}.) Note that the set of regular points in $T^{\ast}(T)$ is open
and dense in $T^{\ast}(T).$

\begin{corollary}
\label{openDense.cor}The set of regular points in $\phi^{-1}(0)$ is open and dense.
\end{corollary}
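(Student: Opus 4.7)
The plan is to reduce the corollary to the already-established characterization in Theorem \ref{regularPoints.thm} together with the orbit-slicing statement in Proposition \ref{zeroSet.prop}. The first observation is that regularity is $\mathrm{Ad}_K$-invariant: for $y\in K$ and $p\in\phi^{-1}(0)$ one has $S_{y\cdot p}=yS_{p}y^{-1}$, and the conjugate of a maximal torus is a maximal torus. Combined with Proposition \ref{zeroSet.prop} this yields
\[
\phi^{-1}(0)^{\mathrm{reg}}=\mathrm{Ad}_{K}\bigl(T^{\ast}(T)^{\mathrm{reg}}\bigr),
\]
so the problem reduces to a statement about $T^{\ast}(T)$ and how the $K$-action spreads it out.

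For density, I would argue directly. Theorem \ref{regularPoints.thm} identifies the non-regular locus of $T^{\ast}(T)$ with the finite union, over roots $\alpha\in R$, of the closed real-analytic sets
\[
\bigl\{(e^{H_{1}},H_{2})\in T^{\ast}(T)\,:\,\langle\alpha,H_{1}\rangle\in 2\pi\mathbb{Z}\text{ and }\langle\alpha,H_{2}\rangle=0\bigr\},
\]
each of which has real codimension at least two in $T^{\ast}(T)$. Hence $T^{\ast}(T)^{\mathrm{reg}}$ is dense in $T^{\ast}(T)$. Now fix any $p\in\phi^{-1}(0)$; by Proposition \ref{zeroSet.prop} write $p=y\cdot q$ with $q\in T^{\ast}(T)$, choose $q_{n}\in T^{\ast}(T)^{\mathrm{reg}}$ with $q_{n}\to q$, and note that $y\cdot q_{n}\in\phi^{-1}(0)^{\mathrm{reg}}$ converges to $p$. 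This proves density.

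For openness, I would use upper semi-continuity of the stabilizer dimension. Since the stabilizer Lie algebra $\mathfrak{s}_{p}$ is the kernel of the infinitesimal action $\mathfrak{k}\to T_{p}(T^{\ast}(K))$, its dimension is an upper semi-continuous function of $p$. Proposition \ref{zeroSet.prop} implies that every stabilizer on $\phi^{-1}(0)$ contains a maximal torus, so $\dim S_{p}\ge r$ throughout $\phi^{-1}(0)$, with equality precisely on $\phi^{-1}(0)^{\mathrm{reg}}$. Upper semi-continuity then makes the complementary set $\{p\in\phi^{-1}(0):\dim S_{p}>r\}$ closed in $\phi^{-1}(0)$, and hence $\phi^{-1}(0)^{\mathrm{reg}}$ open.

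I do not anticipate a serious obstacle here; the genuine work is in Theorem \ref{regularPoints.thm}. If one preferred to avoid invoking upper semi-continuity, one could instead note that $T^{\ast}(T)$ is closed in $T^{\ast}(K)\cong K\times\mathfrak{k}$, so the non-regular set $Z_{T}\subset T^{\ast}(T)$ is closed in $\phi^{-1}(0)$; compactness of $K$ then forces $\mathrm{Ad}_{K}(Z_{T})$ to be closed in $\phi^{-1}(0)$, giving openness of the complement by the same orbit decomposition used above.
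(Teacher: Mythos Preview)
Your proof is correct. For density you and the paper argue identically: reduce via Proposition~\ref{zeroSet.prop} to the fact that $T^{\ast}(T)^{\mathrm{reg}}$ is open and dense in $T^{\ast}(T)$ (immediate from Theorem~\ref{regularPoints.thm}), then push forward by the $K$-action.

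For openness your primary argument differs from the paper's. You invoke upper semi-continuity of $\dim S_{p}$ together with the inequality $\dim S_{p}\ge r$ on $\phi^{-1}(0)$, with equality precisely on the regular set. (That last equivalence is valid, though it deserves a word: from the proof of Proposition~\ref{regularRSS.prop}, a non-regular point has stabilizer Lie algebra strictly containing a Cartan subalgebra, so the case $\dim S_{p}=r$ with $S_{p}$ disconnected does not arise.) The paper instead proves openness by a direct sequential argument: given $(x_{n},\xi_{n})\to(x,\xi)$ with $(x,\xi)$ regular, write $(x_{n},\xi_{n})=y_{n}\cdot(t_{n},H_{n})$, use compactness of $K$ together with the bound $\|H_{n}\|=\|\xi_{n}\|\le C$ to extract a convergent subsequence, and observe that the limit $(t,H)$ lies in the open set $T^{\ast}(T)^{\mathrm{reg}}$, forcing $(t_{n},H_{n})$ and hence $(x_{n},\xi_{n})$ to be regular eventually. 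This is essentially the alternative compactness argument you sketch at the end. Your semi-continuity approach is more conceptual and transfers to any proper group action; the paper's argument is more self-contained, staying entirely within the explicit orbit decomposition already set up.
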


\begin{proof}
Given $(x,\xi)\in\phi^{-1}(0),$ we can find (Proposition \ref{zeroSet.prop})
some $y\in K,$ some $t\in T,$ and some $H\in\mathfrak{t}$ such that
$(x,\xi)=y\cdot(t,H).$ We can then find some $(t^{\prime},H^{\prime})$ very
near $(t,H)$ in $T^{\ast}(T)$ that satisfies the condition in Theorem
\ref{regularPoints.thm}. Thus, $y\cdot(t^{\prime},H^{\prime})$ is a regular
point in $\phi^{-1}(0)$ very near to $(x,\xi),$ showing that the regular set
is dense.

Suppose now that $(x,\xi)\in\phi^{-1}(0)$ is regular and that $(x_{n},\xi
_{n})$ is a sequence in $\phi^{-1}(0)$ converging to $(x,\xi).$ Then we can
write $(x_{n},\xi_{n})=y_{n}\cdot(t_{n},H_{n})$ for some $y_{n}\in K,$
$t_{n}\in T,$ and $H_{n}\in\mathfrak{k}.$ Since $\xi_{n}$ is converging to
$\xi,$ there is some constant $C$ such that%
\[
\left\Vert H_{n}\right\Vert =\left\Vert \xi_{n}\right\Vert \leq C.
\]
Thus, using compactness, we can extract convergent sequences and assume that
$y_{n}\rightarrow y,$ $t_{n}\rightarrow t$, and $H_{n}\rightarrow H.$ Then%
\[
(x,\xi)=\lim_{n\rightarrow\infty}(x_{n},\xi_{n})=\lim_{n\rightarrow\infty
}y_{n}\cdot(t_{n},H_{n})=y\cdot(t,H).
\]
Since $(x,\xi)$ is assumed regular, $(t,H)$ must be in the regular set in
$T^{\ast}(T).$ But this set is open in $T^{\ast}(T),$ showing that
$(t_{n},H_{n})$ and therefore also $y_{n}\cdot(t_{n},H_{n})$ are regular for
all sufficiently large $n.$
\end{proof}

We now give the proof of Theorem \ref{regularPoints.thm}, which consists of a
series of propositions.

\begin{proposition}
For each $(x,\xi)\in\phi^{-1}(0),$ the stabilizer $S_{(x,\xi)}$ coincides with
the intersection of the centralizer of $\Psi(x,\xi):=xe^{-i\xi}\in
K_{\mathbb{C}}$ with $K$:%
\[
S_{(x,\xi)}=C_{K}(xe^{-i\xi})=\left\{  \left.  y\in K\right\vert y(xe^{-i\xi
})y^{-1}=xe^{-i\xi}\right\}  .
\]

\end{proposition}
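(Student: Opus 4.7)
The plan is to prove the two inclusions $S_{(x,\xi)} \subseteq C_K(xe^{-i\xi})$ and $C_K(xe^{-i\xi}) \subseteq S_{(x,\xi)}$ using only the uniqueness of the polar decomposition $\Psi : K \times \mathfrak{k} \to K_{\mathbb{C}}$ from (\ref{PhiDef}), together with the standard identity
\[
y \exp(X) y^{-1} = \exp(\mathrm{Ad}_y(X)), \qquad y \in K,\ X \in \mathfrak{k}_{\mathbb{C}}.
\]
Interestingly, the hypothesis $(x,\xi)\in\phi^{-1}(0)$ will play no role in the argument.

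For the forward inclusion, suppose $y \in S_{(x,\xi)}$, so $yxy^{-1}=x$ and $\mathrm{Ad}_y(\xi)=\xi$. Applying the exponential conjugation identity with $X=-i\xi$, I get $y e^{-i\xi} y^{-1} = e^{-i\,\mathrm{Ad}_y(\xi)} = e^{-i\xi}$, and hence
\[
y(xe^{-i\xi})y^{-1} = (yxy^{-1})\bigl(y e^{-i\xi} y^{-1}\bigr) = xe^{-i\xi},
\]
so $y \in C_K(xe^{-i\xi})$.

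For the reverse inclusion, suppose $y \in K$ satisfies $y(xe^{-i\xi})y^{-1} = xe^{-i\xi}$. The same exponential identity rewrites this as
\[
(yxy^{-1}) \, e^{-i\,\mathrm{Ad}_y(\xi)} = x\, e^{-i\xi}.
\]
Since $y \in K$, we have $yxy^{-1} \in K$ and $\mathrm{Ad}_y(\xi) \in \mathfrak{k}$. Both sides are therefore of the form $\Psi(x',\xi')$ with $x' \in K$ and $\xi' \in \mathfrak{k}$, so by injectivity of $\Psi$ we conclude $yxy^{-1} = x$ and $\mathrm{Ad}_y(\xi) = \xi$, i.e., $y \in S_{(x,\xi)}$.

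There is no real obstacle here: the entire content of the proposition is the uniqueness of the polar decomposition. The only thing to be careful about is applying the exponential conjugation identity inside $K_{\mathbb{C}}$ (rather than inside $K$), but this is standard since $\mathrm{Ad}_y$ extends $\mathbb{C}$-linearly to $\mathfrak{k}_{\mathbb{C}}$ and the exponential map on $K_{\mathbb{C}}$ is equivariant under conjugation by elements of $K \subset K_{\mathbb{C}}$.
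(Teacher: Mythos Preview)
Your proof is correct and follows essentially the same approach as the paper: both directions are handled via the conjugation identity $y(xe^{-i\xi})y^{-1}=(yxy^{-1})e^{-i\,\mathrm{Ad}_y(\xi)}$ together with the uniqueness of the polar decomposition $\Psi$. Your observation that the hypothesis $(x,\xi)\in\phi^{-1}(0)$ is not actually used is also accurate.
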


\begin{proof}
Clearly, if $y$ commutes with both $x$ and $\xi,$ then $y$ commutes with
$xe^{-i\xi}.$ Conversely, if $y\in K$ and $y$ commutes with $xe^{-i\xi},$ then
$y$ must commute with both $x$ and $\xi.$ After all,
\[
y(xe^{-i\xi})y^{-1}=(yxy^{-1})e^{-i\mathrm{Ad}_{y}(\xi)}.
\]
By the uniqueness of the polar decomposition, the above quantity equals
$xe^{-i\xi}$ only if $yxy^{-1}=x$ and $\mathrm{Ad}_{y}(\xi)=\xi.$
\end{proof}

\begin{definition}
\label{rss.def}An element $g$ of $K_{\mathbb{C}}$ is called\textbf{ regular
semisimple} if the centralizer of $g$ is a complex maximal torus in
$K_{\mathbb{C}}.$ We denote the set of regular semisimple elements in
$K_{\mathbb{C}}$ by $K_{\mathbb{C}}^{\mathrm{rss}}$ and the set of regular
semisimple elements in $T_{\mathbb{C}}$ by $T_{\mathbb{C}}^{\mathrm{rss}}.$
\end{definition}

Since $K$ and $K_{\mathbb{C}}$ are assumed simply connected, Steinberg's
theorem \cite[Theorem 2.11]{HumConj} says that the centralizer of every
semisimple element is connected. Thus, in our setting, Definition
\ref{rss.def} is equivalent to the usual definition of a regular semisimple
element (e.g., \cite[Section 1.6]{HumConj}).

\begin{proposition}
\label{rss.prop}For each root $\alpha,$ let $\phi_{\alpha}:T_{\mathbb{C}%
}\rightarrow\mathbb{C}^{\ast}$ be the associated root homomorphism given by
\begin{equation}
\phi_{\alpha}(e^{H})=e^{i\left\langle \alpha,H\right\rangle }.
\label{phiAlphaDef}%
\end{equation}
Then $z\in T_{\mathbb{C}}$ is regular semisimple if and only if for all
$\alpha\in R,$ we have
\[
\phi_{\alpha}(z)\neq1.
\]

\end{proposition}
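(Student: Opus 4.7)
The plan is to reduce the question to a statement about the Lie algebra of the centralizer, using the root space decomposition of $\mathfrak{k}_{\mathbb{C}}$ and Steinberg's connectedness theorem (just cited in the paper). Since $z \in T_{\mathbb{C}}$ and $T_{\mathbb{C}}$ is abelian, we always have $T_{\mathbb{C}} \subseteq C_{K_{\mathbb{C}}}(z)$. Because $T_{\mathbb{C}}$ is already a complex maximal torus (established in Section~\ref{maximalTorus.sec}), the element $z$ is regular semisimple precisely when this inclusion is an equality. Hence the whole task is to show
\[
C_{K_{\mathbb{C}}}(z) = T_{\mathbb{C}} \iff \phi_{\alpha}(z) \neq 1 \text{ for every } \alpha \in R.
\]

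First I would compute the Lie algebra of the centralizer. Writing $z = e^{H}$ with $H \in \mathfrak{t}_{\mathbb{C}}$ and using the root space decomposition
\[
\mathfrak{k}_{\mathbb{C}} = \mathfrak{t}_{\mathbb{C}} \oplus \bigoplus_{\alpha \in R} \mathfrak{g}_{\alpha},
\]
the adjoint action of $z$ is scalar multiplication by $\phi_{\alpha}(z) = e^{i\langle \alpha, H\rangle}$ on $\mathfrak{g}_{\alpha}$ (and the identity on $\mathfrak{t}_{\mathbb{C}}$). Therefore
\[
\mathrm{Lie}\bigl(C_{K_{\mathbb{C}}}(z)\bigr) = \mathfrak{t}_{\mathbb{C}} \oplus \bigoplus_{\alpha:\,\phi_{\alpha}(z)=1} \mathfrak{g}_{\alpha},
\]
so the Lie algebra of the centralizer equals $\mathfrak{t}_{\mathbb{C}}$ if and only if no root satisfies $\phi_{\alpha}(z) = 1$.

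Next I invoke Steinberg's theorem, which the paper has just stated: since $K_{\mathbb{C}}$ is simply connected, the centralizer of any semisimple element is connected. Every element of $T_{\mathbb{C}}$ is semisimple, so $C_{K_{\mathbb{C}}}(z)$ is determined by its Lie algebra. Combining with the previous paragraph, $C_{K_{\mathbb{C}}}(z) = T_{\mathbb{C}}$ if and only if $\phi_{\alpha}(z) \neq 1$ for all $\alpha \in R$.

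To close the argument I need only verify that the condition $C_{K_{\mathbb{C}}}(z) = T_{\mathbb{C}}$ really is equivalent to $z$ being regular semisimple. One direction is immediate: if $C_{K_{\mathbb{C}}}(z) = T_{\mathbb{C}}$, the centralizer is the complex maximal torus $T_{\mathbb{C}}$, so $z$ is regular semisimple by definition. Conversely, if $z$ is regular semisimple, then $C_{K_{\mathbb{C}}}(z)$ is a complex maximal torus containing $T_{\mathbb{C}}$; the maximality of $T_{\mathbb{C}}$ forces equality. I expect the only subtle ingredient to be this last step together with the use of Steinberg's theorem; everything else is a routine application of the root space decomposition.
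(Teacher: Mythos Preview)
Your argument is correct. The paper's own proof is much terser: it simply notes that Steinberg's theorem makes Definition~\ref{rss.def} agree with the standard notion of regular semisimple element in \cite{HumConj}, and then invokes Proposition~2.3 of \cite{HumConj} as a black box. Your proof unpacks exactly what that citation contains---computing $\mathrm{Lie}(C_{K_{\mathbb{C}}}(z))$ via the root space decomposition and then using Steinberg's connectedness theorem to pass from the Lie algebra back to the group---so the underlying mathematics is the same, but your version is self-contained rather than deferring to Humphreys. The only point worth tightening is the remark that every element of $T_{\mathbb{C}}$ is semisimple; this is true because $T_{\mathbb{C}}\cong(\mathbb{C}^{\ast})^{r}$ consists of diagonalizable elements in any faithful representation, but you might state it explicitly since Steinberg's theorem requires semisimplicity as a hypothesis.
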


The factor of $i$ in the exponent in the formula for $\phi_{\alpha}$ is a
result of our convention for using real roots \cite[Definition 11.34]{LieBook}.

\begin{proof}
As we have noted, the assumption that $K$ is simply connected ensures that our
notion of a regular semisimple element is equivalent to the one in
\cite{HumConj}. The result then follows immediately from Proposition 2.3 in
\cite{HumConj}.
\end{proof}

\begin{proposition}
\label{regularRSS.prop}A point $(x,\xi)\in\phi^{-1}(0)$ is regular if and only
if the element $\Psi(x,\xi)=xe^{-i\xi}$ in $K_{\mathbb{C}}$ is regular semisimple.
\end{proposition}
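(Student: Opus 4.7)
The previous proposition shows that $S_{(x,\xi)}$ coincides with $C_{K}(\Psi(x,\xi))$, so the task is to compare the statements ``$C_{K}(\Psi(x,\xi))$ is a maximal torus in $K$'' and ``$C_{K_{\mathbb{C}}}(\Psi(x,\xi))$ is a complex maximal torus in $K_{\mathbb{C}}$.'' Since $\Psi$ is $K$-equivariant, both conditions are $\mathrm{Ad}_{K}$-invariant, so by Proposition \ref{zeroSet.prop} I may reduce to the case $(x,\xi)\in T^{\ast}(T)$; writing $(x,\xi)=(e^{H_{1}},H)$ with $H_{1},H\in\mathfrak{t}$, I have $z:=\Psi(x,\xi)=e^{H_{1}-iH}\in T_{\mathbb{C}}$. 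Because $T_{\mathbb{C}}$ is abelian, the containments $T\subseteq C_{K}(z)$ and $T_{\mathbb{C}}\subseteq C_{K_{\mathbb{C}}}(z)$ hold automatically, and maximality of $T$ and $T_{\mathbb{C}}$ among (complex) tori reduces each of the two conditions to the respective group equality. It therefore suffices to prove $C_{K}(z)=T \Leftrightarrow C_{K_{\mathbb{C}}}(z)=T_{\mathbb{C}}$.

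The direction $(\Leftarrow)$ is immediate, since $C_{K}(z)=C_{K_{\mathbb{C}}}(z)\cap K=T_{\mathbb{C}}\cap K=T$. For $(\Rightarrow)$ my plan is to pass to Lie algebras using the root space decomposition $\mathfrak{k}_{\mathbb{C}}=\mathfrak{t}_{\mathbb{C}}\oplus\bigoplus_{\alpha}\mathfrak{g}_{\alpha}$. The element $\mathrm{Ad}_{z}$ acts on $\mathfrak{g}_{\alpha}$ by the scalar $\phi_{\alpha}(z)=e^{i\langle\alpha,H_{1}\rangle+\langle\alpha,H\rangle}$, so the complex centralizer Lie algebra is $\mathfrak{t}_{\mathbb{C}}\oplus\bigoplus_{\phi_{\alpha}(z)=1}\mathfrak{g}_{\alpha}$. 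Complex conjugation with respect to $\mathfrak{k}$ interchanges $\mathfrak{g}_{\alpha}$ and $\mathfrak{g}_{-\alpha}$, and $\phi_{-\alpha}(z)=\phi_{\alpha}(z)^{-1}$, so any root $\alpha$ with $\phi_{\alpha}(z)=1$ contributes real vectors lying in the real centralizer Lie algebra but outside $\mathfrak{t}$. Consequently the real centralizer Lie algebra equals $\mathfrak{t}$ iff $\phi_{\alpha}(z)\ne 1$ for every root, which is also equivalent to the complex centralizer Lie algebra equaling $\mathfrak{t}_{\mathbb{C}}$. Starting from $C_{K}(z)=T$ I therefore get $C_{K_{\mathbb{C}}}(z)^{0}=T_{\mathbb{C}}$.

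To promote this identity-component statement to the full group equality $C_{K_{\mathbb{C}}}(z)=T_{\mathbb{C}}$ I invoke Steinberg's connectedness theorem (cited immediately after Definition \ref{rss.def}): since $K_{\mathbb{C}}$ is simply connected and $z$ is semisimple, $C_{K_{\mathbb{C}}}(z)$ is connected, hence equal to its identity component $T_{\mathbb{C}}$. I expect this final step to be the main obstacle and the only place where the simply-connected hypothesis on $K$ enters essentially: without Steinberg's theorem one would have to rule out by hand the possibility of a nontrivial Weyl-type element in $N_{K_{\mathbb{C}}}(T_{\mathbb{C}})$ fixing $z$ and producing extra components of $C_{K_{\mathbb{C}}}(z)$. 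The rest of the argument is a routine bookkeeping exercise with the root space decomposition.
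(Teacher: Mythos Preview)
Your proof is correct and follows essentially the same route as the paper: reduce to $z\in T_{\mathbb{C}}$ via Proposition~\ref{zeroSet.prop}, use the root-space decomposition to compare the real and complex centralizer Lie algebras via the condition $\phi_{\alpha}(z)\neq 1$, and invoke Steinberg's connectedness theorem to pass from the identity component to the full complex centralizer. The only cosmetic difference is that the paper packages the Steinberg step into Proposition~\ref{rss.prop} (the root characterization of regular semisimple elements) and argues both implications by contrapositive, whereas you invoke Steinberg explicitly at the end of the forward direction.
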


\begin{proof}
In light of Proposition \ref{zeroSet.prop}, it is harmless to assume that
$x\in T$ and $\xi\in\mathfrak{t},$ so that $xe^{-i\xi}\in T_{\mathbb{C}}.$ We
will prove that $(x,\xi)$ fails to be regular if and only if $xe^{-i\xi}$
fails to be regular semisimple. Suppose first that $(x,\xi)$ fails to be
regular. Then the stabilizer of $(x,\xi)$ contains an element $y\in K$ that is
not in $T.$ Then $y$ is not in the complexification $T_{\mathbb{C}}$ of $T,$
since $T_{\mathbb{C}}\cap K=T.$ Then the centralizer of $xe^{-i\xi}$ contains
$y$ and thus properly contains the complex maximal torus $T_{\mathbb{C}},$
showing that $xe^{-i\xi}$ is not regular semisimple.

Suppose now that $z:=xe^{-i\xi}\in T_{\mathbb{C}}$ fails to be regular
semisimple. Then by Proposition \ref{rss.prop}, $z$ belongs to the kernel of
$\phi_{\alpha}$ for some $\alpha,$ and therefore also to the kernel of
$\phi_{-\alpha}=1/\phi_{\alpha}.$ But then if $X$ is in the root space
$\mathfrak{(k}_{\mathbb{C}})_{\alpha}$ we have
\[
\mathrm{Ad}_{z}(X)=\phi_{\alpha}(z)X=X
\]
and similarly if $X\in(\mathfrak{k}_{\mathbb{C}})_{-\alpha}.$ Thus, the Lie
algebra of the centralizer of $z$ contains $(\mathfrak{k}_{\mathbb{C}%
})_{\alpha}\oplus(\mathfrak{k}_{\mathbb{C}})_{-\alpha},$ which contains
elements of $\mathfrak{k}$ not in $\mathfrak{t}$ \cite[Corollary
7.20]{LieBook}.
\end{proof}

We are now in a position to complete the proof of Theorem
\ref{regularPoints.thm}. By Proposition \ref{regularRSS.prop}, a point $(t,H)$
in $T^{\ast}(T)\subset\phi^{-1}(0)$ is regular if and only if the
corresponding point $z=xe^{-i\xi}$ in $T_{\mathbb{C}}$ is regular semisimple,
which holds (Proposition \ref{rss.prop}) if and only if $\phi_{\alpha}(z)$ is
different from 1 for all $\alpha.$ But%
\[
\phi_{\alpha}(e^{H_{1}}e^{-iH_{2}})=e^{i\left\langle \alpha,H_{1}\right\rangle
}e^{\left\langle \alpha,H_{2}\right\rangle }=1
\]
if and only if $\left\langle \alpha,H_{1}\right\rangle \in2\pi\mathbb{Z}$ and
$\left\langle \alpha,H_{2}\right\rangle =0.$

\subsection{The reduced phase space}

Suppose $(M,\omega)$ is a symplectic manifold and $K$ is a compact Lie group
acting symplectically on $M.$ If the action of $K$ admits an equivariant
momentum map $\phi,$ the \textbf{symplectic quotient} (or Marsden--Weinstein
quotient) $M/\!\!/G$ is defined as the ordinary quotient of $\phi^{-1}(0)$ by
$K$:%
\[
M/\!\!/K:=\phi^{-1}(0)/K.
\]
Suppose, for example, that $M=T^{\ast}(N)$ and the action of $K$ on $T^{\ast
}(N)$ is induced from a regular action of $K$ on $N.$ (The action is called
regular if the stabilizers of any two points are conjugate, for example, if
the action is free.) Then an equivariant momentum map may be constructed that
is linear on each fiber in $T^{\ast}(N),$ and we have%
\begin{equation}
T^{\ast}(N)/\!\!/K\cong T^{\ast}(N/K). \label{cotangentReduce}%
\end{equation}
(See Section 4.3 in \cite{AM} and especially the $\mu=0$ case of Theorem 4.3.3.)

In our case, $M=T^{\ast}(K)$ and $K$ acts on itself---and therefore also on
$T^{\ast}(K)$---by the adjoint action. The adjoint action of $K$ on itself is
not regular, however, and the quotient $K/\mathrm{Ad}_{K}$ is not a manifold.
Rather, $K/\mathrm{Ad}_{K}$ is identified with $T/W,$ which even when
$K=SU(2)$ is a closed interval rather than a smooth manifold. In light of
(\ref{cotangentReduce}), we expect that $T^{\ast}(K)/\!\!/\mathrm{Ad}_{K}$
should be something like $T^{\ast}(T/W).$ Since $T/W$ is not a manifold,
however, the correct statement is that $T^{\ast}(K)/\!\!/\mathrm{Ad}_{K}$ is
homeomorphic to $T^{\ast}(T)/W.$ (This claim follows from Proposition
\ref{zeroSet.prop}.)

In this paper, we will focus on the set $\phi^{-1}(0)^{\mathrm{reg}}$ of
regular points in $\phi^{-1}(0),$ and the quotient $\phi^{-1}(0)^{\mathrm{reg}%
}/\mathrm{Ad}_{K}.$ To describe this quotient, recall that we think of
$T^{\ast}(T)$ as a submanifold of $T^{\ast}(K)$ as in (\ref{TTInTK}) and that
$T^{\ast}(T)$ is contained in $\phi^{-1}(0).$ We define the regular set
$T^{\ast}(T)^{\mathrm{reg}}$ in $T^{\ast}(T)$ as%
\[
T^{\ast}(T)^{\mathrm{reg}}=T^{\ast}(T)\cap\phi^{-1}(0)^{\mathrm{reg}}.
\]
Then $T^{\ast}(T)^{\mathrm{reg}}$ is an open dense subset of $T^{\ast}(T)$ and
the Weyl group acts freely on this set.

\begin{theorem}
\label{quotient.thm}The quotient $\phi^{-1}(0)^{\mathrm{reg}}/\mathrm{Ad}_{K}$
is a smooth manifold, which may be identified as%
\[
\phi^{-1}(0)^{\mathrm{reg}}/\mathrm{Ad}_{K}\cong T^{\ast}(T)^{\mathrm{reg}%
}/W.
\]
This manifold inherits a K\"{a}hler structure from the K\"{a}hler structure on
$T^{\ast}(K)\cong K_{\mathbb{C}}.$ The symplectic structure on the quotient
comes from the canonical symplectic structure on $T^{\ast}(T)$ and the complex
structure on the quotient comes from the complex structure obtained by
identifying $T^{\ast}(T)$ with $T_{\mathbb{C}}.$
\end{theorem}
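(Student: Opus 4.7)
The plan is to establish the three parts of the statement in turn: the set-level identification with $T^{\ast}(T)^{\mathrm{reg}}/W$, then the smooth manifold structure, and finally the matching of the symplectic and complex structures with those on $T^{\ast}(T)/W$ and $T_{\mathbb{C}}/W$. Proposition \ref{zeroSet.prop} already shows that each $\mathrm{Ad}_{K}$-orbit in $\phi^{-1}(0)$ meets $T^{\ast}(T)$ in a single $W$-orbit; since $S_{y\cdot(x,\xi)}=yS_{(x,\xi)}y^{-1}$, regularity is $\mathrm{Ad}_{K}$-invariant, and this restricts to a bijection $\phi^{-1}(0)^{\mathrm{reg}}/\mathrm{Ad}_{K}\longleftrightarrow T^{\ast}(T)^{\mathrm{reg}}/W$. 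For the smooth structure, I would first check that $W$ acts freely on $T^{\ast}(T)^{\mathrm{reg}}$: if $[y]\in W=N(T)/T$ fixes $(t,H)\in T^{\ast}(T)^{\mathrm{reg}}$, then $y\in S_{(t,H)}$, which by regularity is a maximal torus containing $T$ and hence equals $T$, so $[y]=e$. The finite group $W$ thus gives a smooth quotient $T^{\ast}(T)^{\mathrm{reg}}/W$. On the other side, every point of $\phi^{-1}(0)^{\mathrm{reg}}$ has stabilizer conjugate to $T$, so the $\mathrm{Ad}_{K}$-action has constant orbit type on $\phi^{-1}(0)^{\mathrm{reg}}$, which is enough to make $\phi^{-1}(0)^{\mathrm{reg}}/\mathrm{Ad}_{K}$ a smooth manifold and the above bijection a diffeomorphism (either via a slice argument or by transporting the smooth structure).

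Next I would treat the symplectic structure. The inclusion $j:T^{\ast}(T)\hookrightarrow T^{\ast}(K)$ of (\ref{TTInTK}) pulls back the canonical $1$-form on $T^{\ast}(K)$ to the canonical $1$-form on $T^{\ast}(T)$, and hence pulls back $\omega$ to the canonical symplectic form $\omega_{T^{\ast}(T)}$. By Marsden--Weinstein reduction on the regular stratum, the reduced form $\omega_{\mathrm{red}}$ on $\phi^{-1}(0)^{\mathrm{reg}}/\mathrm{Ad}_{K}$ satisfies $\pi^{\ast}\omega_{\mathrm{red}}=\iota^{\ast}\omega$, where $\iota:\phi^{-1}(0)^{\mathrm{reg}}\hookrightarrow T^{\ast}(K)$ and $\pi$ is the orbit projection. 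Composing along $j$ and using that $T^{\ast}(T)^{\mathrm{reg}}\to T^{\ast}(T)^{\mathrm{reg}}/W$ is a finite covering by symplectomorphisms, I would conclude that $\omega_{\mathrm{red}}$ corresponds to the canonical symplectic form on $T^{\ast}(T)^{\mathrm{reg}}$ descended to $T^{\ast}(T)^{\mathrm{reg}}/W$.

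The complex structure is the step I expect to require the most care. The map $\Psi^{\prime}$ of (\ref{PhiT}) is the restriction of $\Psi$ to $T^{\ast}(T)$, so $T^{\ast}(T)$ is a complex submanifold of $T^{\ast}(K)\cong K_{\mathbb{C}}$ realizing $T_{\mathbb{C}}\subset K_{\mathbb{C}}$; Proposition \ref{regularRSS.prop} identifies $T^{\ast}(T)^{\mathrm{reg}}$ with $T_{\mathbb{C}}^{\mathrm{rss}}$, and since conjugation by $N(T)$ extends holomorphically to $T_{\mathbb{C}}$, the Weyl group acts holomorphically and $T^{\ast}(T)^{\mathrm{reg}}/W\cong T_{\mathbb{C}}^{\mathrm{rss}}/W$ inherits a complex structure. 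To match this with the complex structure supplied by K\"{a}hler reduction on $\phi^{-1}(0)^{\mathrm{reg}}/\mathrm{Ad}_{K}$, I would show that at each $p\in T^{\ast}(T)^{\mathrm{reg}}$ the complex subspace $T_{p}T^{\ast}(T)\subset T_{p}T^{\ast}(K)$ lies inside $T_{p}\phi^{-1}(0)^{\mathrm{reg}}$ and is transverse to the tangent space to the $\mathrm{Ad}_{K}$-orbit; since $2r+(\dim K-r)=\dim K+r=\dim\phi^{-1}(0)^{\mathrm{reg}}$, it would then be a complex complement to the orbit that projects isomorphically onto the tangent space of the quotient, transferring its complex structure. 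The main obstacle is establishing this transversality, which reduces to the explicit formula $Y^{\eta}(t,H)=(\mathrm{Ad}_{t^{-1}}\eta-\eta,[\eta,H])$ in left trivialization: decomposing $\eta$ along the real root spaces $\mathfrak{k}_{\alpha}$, the condition $Y^{\eta}(t,H)\in\mathfrak{t}\oplus\mathfrak{t}$ forces both $(\mathrm{Ad}_{t^{-1}}-1)\eta=0$ and $\mathrm{ad}_{H}\eta=0$ on each $\mathfrak{k}_{\alpha}$, which by the regularity criterion of Theorem \ref{regularPoints.thm} forces the $\mathfrak{k}_{\alpha}$-component of $\eta$ to vanish, giving $\eta\in\mathfrak{t}$ and hence $Y^{\eta}(t,H)=0$.
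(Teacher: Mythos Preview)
Your overall strategy matches the paper's: identify orbits via Proposition~\ref{zeroSet.prop}, use that all stabilizers on the regular set are conjugate maximal tori to get a manifold quotient, then read off the symplectic and complex structures from the transversal $T^{\ast}(T)^{\mathrm{reg}}\subset\phi^{-1}(0)^{\mathrm{reg}}$. The computations you sketch for the symplectic form (via $j^{\ast}\theta_{K}=\theta_{T}$) and for the complex structure (via the transversality of $T_{p}T^{\ast}(T)$ to the orbit) are correct and are essentially the same as the paper's Propositions for those two pieces.

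There is one genuine gap. You never establish that $\phi^{-1}(0)^{\mathrm{reg}}$ is itself a smooth embedded submanifold of $T^{\ast}(K)$. Both places where you need this are explicit in your write-up: you invoke ``Marsden--Weinstein reduction on the regular stratum'' and the relation $\pi^{\ast}\omega_{\mathrm{red}}=\iota^{\ast}\omega$, and later you speak of $T_{p}\phi^{-1}(0)^{\mathrm{reg}}$ and its dimension $\dim K+r$ in the complex-structure step. None of this is available until $\phi^{-1}(0)^{\mathrm{reg}}$ is known to be a manifold; ``constant orbit type'' alone (orbit-type stratification for a compact group action) is a statement about actions on manifolds, so it presupposes exactly what is missing. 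Note also that $0$ is not a regular value of $\phi$ here, so the usual preimage argument is unavailable.

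The paper fills this gap with a separate proposition: it writes down the $|W|$-to-one parametrization
\[
f:\,T^{\ast}(T)^{\mathrm{reg}}\times(K/T)\longrightarrow\phi^{-1}(0)^{\mathrm{reg}},\qquad f\bigl((t,H),[y]\bigr)=y\cdot(t,H),
\]
checks that $f$ has continuous local inverses, and then shows $f_{\ast}$ is injective at each point by computing that the image of $f_{\ast}$ at $((t,H),[e])$ consists of vectors $(H_{1},H_{2})+(\mathrm{Ad}_{t^{-1}}X-X,[H,X])$ with the second summand in $\mathfrak{t}^{\bot}\oplus\mathfrak{t}^{\bot}$, and that this summand vanishes only for $X\in\mathfrak{t}$. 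This yields both the submanifold property and the dimension $\dim\phi^{-1}(0)^{\mathrm{reg}}=2r+(n-r)=n+r$ you later use. In fact your own transversality computation at the end (showing $Y^{\eta}(t,H)\in\mathfrak{t}\oplus\mathfrak{t}$ forces $\eta\in\mathfrak{t}$) is precisely the injectivity of $f_{\ast}$ in disguise; if you move that computation forward and phrase it as a local-diffeomorphism statement for $f$, the rest of your argument goes through without change.
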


Recall also that (Proposition \ref{regularRSS.prop}) under the identification
of $T^{\ast}(T)$ with $T_{\mathbb{C}},$ the set of regular points in $T^{\ast
}(T)$ corresponds to the set of regular semisimple points in $T_{\mathbb{C}}.$
Thus, as a complex manifold, we may think of the regular reduced phase space
as%
\begin{equation}
\phi^{-1}(0)^{\mathrm{reg}}/\mathrm{Ad}_{K}\cong T_{\mathbb{C}}^{\mathrm{rss}%
}/W. \label{tcrssW}%
\end{equation}

We now give the proof of Theorem \ref{quotient.thm}, which consists of a
series of propositions.

\begin{proposition}
The set $\phi^{-1}(0)^{\mathrm{reg}}$ is a smooth embedded submanifold of
$T^{\ast}(K).$
\end{proposition}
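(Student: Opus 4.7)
The plan is to realize $\phi^{-1}(0)^{\mathrm{reg}}$ as the injective immersed image of an explicitly built smooth manifold, then upgrade the injective immersion to an embedding via a properness argument. By Proposition \ref{zeroSet.prop} every $\mathrm{Ad}_K$-orbit in $\phi^{-1}(0)$ meets $T^{\ast}(T)$, and regularity is $K$-invariant, so $\phi^{-1}(0)^{\mathrm{reg}} = K\cdot T^{\ast}(T)^{\mathrm{reg}}$, where $T^{\ast}(T)^{\mathrm{reg}}$ is an open submanifold of $T^{\ast}(T)$. Consider the smooth map
\[
F : K \times T^{\ast}(T)^{\mathrm{reg}} \to T^{\ast}(K), \qquad F(y,p) = y\cdot p.
\]
Its image is precisely $\phi^{-1}(0)^{\mathrm{reg}}$. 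Because $T$ is abelian, its adjoint action on $T^{\ast}(T)$ is trivial, so $F(ys,p) = F(y,p)$ for $s \in T$, and $F$ descends to $\widetilde F : (K/T) \times T^{\ast}(T)^{\mathrm{reg}} \to T^{\ast}(K)$.

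Proposition \ref{zeroSet.prop} identifies the fibers of $\widetilde F$ with orbits of the diagonal $W$-action $w\cdot(yT,p) = (y\hat w^{-1}T,\, w\cdot p)$ (with $\hat w \in N(T)$ any lift), and this $W$-action is free, since a Weyl element fixing a point $(t,H) \in T^{\ast}(T)^{\mathrm{reg}}$ would lift to an element of $N(T)\setminus T$ in the stabilizer of $(t,H)$, contradicting regularity. Hence $\widetilde F$ descends further to an injective smooth map
\[
\overline F : M \to T^{\ast}(K), \qquad M := \bigl((K/T) \times T^{\ast}(T)^{\mathrm{reg}}\bigr)/W,
\]
where $M$ is a smooth manifold of dimension $(n-r)+2r = n+r$ and $\overline F(M) = \phi^{-1}(0)^{\mathrm{reg}}$.

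The main technical step is to show $\overline F$ is an immersion, which I would verify at a point $(eT,(t,H))$. Under the left trivialization $T^{\ast}(K) \cong K \times \mathfrak{k}$, representing a coset in $\mathfrak{k}/\mathfrak{t}$ by $\eta \in \mathfrak{t}^{\perp}$ and a tangent vector in $T_{(t,H)}T^{\ast}(T)$ by $(U,V) \in \mathfrak{t} \oplus \mathfrak{t}$, a direct calculation gives
\[
d\widetilde F\bigl(\eta+\mathfrak{t},\,(U,V)\bigr) = \bigl(\mathrm{Ad}_{t^{-1}}(\eta) - \eta + U,\;\; [\eta,H] + V\bigr) \in \mathfrak{k}\oplus\mathfrak{k}.
\]
Because $\mathfrak{t}^{\perp}$ is $\mathrm{Ad}_T$-invariant (orthogonal decomposition) and $\mathrm{ad}_H$-invariant (root-space decomposition), the two summands coming from $\eta$ lie in $\mathfrak{t}^{\perp}$, while $(U,V) \in \mathfrak{t}\oplus\mathfrak{t}$. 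Vanishing of $d\widetilde F$ therefore forces $U=V=0$ together with $\mathrm{Ad}_{t^{-1}}(\eta) = \eta$ and $[\eta,H]=0$, so $\eta$ lies in the Lie algebra of the stabilizer of $(t,H)$. By regularity this stabilizer is $T$, so $\eta \in \mathfrak{t}$, i.e., $\eta+\mathfrak{t}=0$.

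Finally, to upgrade $\overline F$ to an embedding I would show $\overline F$ is proper onto its image by the same compactness argument used in the proof of Corollary \ref{openDense.cor}: if $\overline F([y_nT,(t_n,H_n)]) \to q \in \phi^{-1}(0)^{\mathrm{reg}}$, then $\|H_n\| = \|\mathrm{Ad}_{y_n}H_n\|$ stays bounded, so after passing to subsequences we may assume $y_n \to y$, $t_n \to t$, $H_n \to H$, and the class $[yT,(t,H)] \in M$ is the desired limit (with $(t,H)$ regular since $q$ is). A proper injective smooth immersion is an embedding, so $\phi^{-1}(0)^{\mathrm{reg}}$ is a smooth embedded submanifold of $T^{\ast}(K)$. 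The principal obstacle is the injectivity-of-differential calculation: one must carefully keep track of which components of the infinitesimal action lie in $\mathfrak{t}$ versus $\mathfrak{t}^{\perp}$, and invoke regularity only at the final step to kill the candidate kernel vector.
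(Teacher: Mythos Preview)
Your proof is correct and follows essentially the same approach as the paper: both construct the map $K/T \times T^{\ast}(T)^{\mathrm{reg}} \to \phi^{-1}(0)^{\mathrm{reg}}$, verify injectivity of its differential via the identical computation in $\mathfrak{t}\oplus\mathfrak{t}$ versus $\mathfrak{t}^{\perp}\oplus\mathfrak{t}^{\perp}$, and invoke the compactness argument from Corollary~\ref{openDense.cor}. The only cosmetic difference is that you pass to the $W$-quotient to obtain an injective map and then appeal to ``proper injective immersion $\Rightarrow$ embedding,'' whereas the paper keeps the $|W|$-to-one map and asserts the existence of continuous local inverses directly; the substance is the same.
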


\begin{proof}
By Proposition \ref{zeroSet.prop}, every element of $\phi^{-1}%
(0)^{\mathrm{reg}}$ can be obtained from a point in $T^{\ast}(T)^{\mathrm{reg}%
}$ by the action of $K,$ and the point in $T^{\ast}(T)^{\mathrm{reg}}$ is
unique up to the action of $W.$ Since the stabilizer of each point in
$T^{\ast}(T)^{\mathrm{reg}}$ is $T,$ we have a smooth surjective map%
\[
f:T^{\ast}(T)^{\mathrm{reg}}\times(K/T)\rightarrow\phi^{-1}(0)^{\mathrm{reg}}%
\]
that is $\left\vert W\right\vert $ to one, given by%
\[
f((t,H),[y])=y\cdot(t,H).
\]
We claim that $f$ has a continuous local inverse. This claim amounts to saying
that for $(x,\xi)$ in a small open set in $\phi^{-1}(0)^{\mathrm{reg}},$ it is
possible to choose the point in $T^{\ast}(T)$ to depend continuously on
$(x,\xi).$ This last point is a standard argument similar to the proof of
Corollary \ref{openDense.cor} and is omitted.

It thus suffices to show that the differential of $f$ is injective at each
point. By the $K$-equivariance of $f,$ it suffices to check this at a point of
the form $((t,H),[e])$ with $(t,H)\in T^{\ast}(T)^{\mathrm{reg}}.$ Let us
identify the tangent space to $T^{\ast}(K)$ at any point with $\mathfrak{k}%
\oplus\mathfrak{k}$ using left translation and the inner product on
$\mathfrak{k}.$ Then the image of $f_{\ast}$ at the point $((t,H),[e])$ is
easily computed to consist of vectors of the form%
\begin{equation}
(H_{1},H_{2})+(\mathrm{Ad}_{t^{-1}}(X)-X,[H,X]),\quad H_{1},H_{2}%
\in\mathfrak{t},~X\in\mathfrak{k}, \label{phi0tangent}%
\end{equation}
where the second term is easily seen to lie in $\mathfrak{t}^{\bot}%
\oplus\mathfrak{t}^{\bot}.$ Now, if
\[
(\mathrm{Ad}_{t^{-1}}(X)-X,[H,X])=(0,0),
\]
then $X$ is in the Lie algebra of $S_{(x,\xi)},$ i.e., $X\in\mathfrak{t}.$ If
follows that the dimension of the image of $f_{\ast}$ equals $2\dim
T+\dim(\mathfrak{k})-\dim(\mathfrak{t}),$ showing that $f_{\ast}$ is injective.
\end{proof}

\begin{proposition}
The action of $K$ on $\phi^{-1}(0)^{\mathrm{reg}}$ set is regular (i.e., all
stabilizers are conjugate). Thus, the quotient is a manifold, which may be
identified as%
\[
\phi^{-1}(0)^{\mathrm{reg}}/\mathrm{Ad}_{K}\cong T^{\ast}(T)^{\mathrm{reg}%
}/W.
\]

\end{proposition}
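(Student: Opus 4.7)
The plan is to combine Proposition \ref{zeroSet.prop}, which supplies the desired bijection at the level of sets, with the smooth map
\[
f\colon T^{\ast}(T)^{\mathrm{reg}}\times(K/T)\longrightarrow \phi^{-1}(0)^{\mathrm{reg}}
\]
constructed in the preceding proposition, which will supply the smooth structure. First I would verify that the action is regular. Given $(x,\xi)\in \phi^{-1}(0)^{\mathrm{reg}}$, Proposition \ref{zeroSet.prop} lets me write $(x,\xi)=y\cdot(t,H)$ with $(t,H)\in T^{\ast}(T)^{\mathrm{reg}}$, so $S_{(x,\xi)}=y\,S_{(t,H)}\,y^{-1}$. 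By the definition of ``regular,'' $S_{(t,H)}$ is a maximal torus; since it automatically contains $T$ and $T$ is itself maximal, the two coincide. Hence every stabilizer is conjugate to $T$.

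The compact group $K$ acts properly on $\phi^{-1}(0)^{\mathrm{reg}}$ with all isotropy subgroups in the single conjugacy class of $T$, so by the slice theorem (see Section 4.3 of \cite{AM}) the quotient $\phi^{-1}(0)^{\mathrm{reg}}/\mathrm{Ad}_K$ is a smooth manifold. Proposition \ref{zeroSet.prop} then gives a well-defined bijection
\[
\Phi\colon T^{\ast}(T)^{\mathrm{reg}}/W\longrightarrow \phi^{-1}(0)^{\mathrm{reg}}/\mathrm{Ad}_K,\qquad [(t,H)]_W\longmapsto [(t,H)]_{\mathrm{Ad}_K},
\]
which is smooth because it is induced by the inclusion $T^{\ast}(T)^{\mathrm{reg}}\hookrightarrow \phi^{-1}(0)^{\mathrm{reg}}$ followed by the $\mathrm{Ad}_K$-quotient map. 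To promote $\Phi$ to a diffeomorphism I would use $f$: the tangent-space computation in (\ref{phi0tangent}) shows $f_\ast$ is injective and that the $(\mathrm{Ad}_{t^{-1}}(X)-X,[H,X])$-summand lies in $\mathfrak{t}^{\perp}\oplus\mathfrak{t}^{\perp}$, hence is transverse to $T^{\ast}(T)$; since a dimension count matches, $f$ is a surjective local diffeomorphism. The $K/T$-factor of $f$ parametrizes precisely the $\mathrm{Ad}_K$-orbit directions transverse to $T^{\ast}(T)^{\mathrm{reg}}$, and the $|W|$-to-one branching of $f$ corresponds to the free $W$-action on the other factor, so passing to quotients turns $f$ into a local diffeomorphism agreeing with $\Phi$. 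As $\Phi$ is bijective, it is a diffeomorphism.

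The main obstacle is the bookkeeping in the last step: making precise that $f$ descends cleanly to quotients in a way that matches $\Phi$. The substantive geometric input has already been extracted in the preceding proposition — the injectivity of $f_\ast$ and the kernel computation forcing $X\in\mathfrak{t}$ — so what remains is the identification of the $K/T$-fibers of $f$ with $\mathrm{Ad}_K$-orbits (immediate, since the stabilizer of any $(t,H)\in T^{\ast}(T)^{\mathrm{reg}}$ in $K$ equals $T$) and the identification of the $|W|$-to-one branching with the (free) $W$-action on $T^{\ast}(T)^{\mathrm{reg}}$ (immediate from Proposition \ref{zeroSet.prop}). Once these are spelled out, the proposition follows routinely.
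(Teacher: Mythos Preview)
Your proposal is correct and follows essentially the same approach as the paper: both observe that all stabilizers are maximal tori (hence conjugate), invoke a general result on regular group actions to conclude the quotient is a manifold, and then use Proposition~\ref{zeroSet.prop} for the identification with $T^{\ast}(T)^{\mathrm{reg}}/W$. The paper's own proof is in fact shorter than yours---it does not spell out the diffeomorphism step at all, simply stating the identification and citing Section~2.7 of \cite{DK} (rather than \cite{AM}) for the manifold structure---so your use of the map $f$ to upgrade the set-theoretic bijection to a diffeomorphism is additional detail beyond what the paper provides.
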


\begin{proof}
By definition, the stabilizer of every point in $\phi^{-1}(0)^{\mathrm{reg}}$
is a maximal torus in $K,$ and all such tori are conjugate. A general result
then shows that the quotient is a manifold. (In general, the quotient of a
manifold by a compact group action is \textquotedblleft
stratified\textquotedblright\ by manifolds associated to different strata, but
if all stabilizers are conjugate, there is only one stratum. See, for example,
Section 2.7 in \cite{DK}.) By Proposition \ref{zeroSet.prop}, every element of
$\phi^{-1}(0)^{\mathrm{reg}}$ can be moved by the action of $K$ to a point in
$T^{\ast}(T)^{\mathrm{reg}}$ that is unique up to the action of $W,$ giving
the claimed identification of the quotient.
\end{proof}

\begin{proposition}
The quotient manifold $T^{\ast}(T)^{\mathrm{reg}}/W$ inherits a symplectic
structure, which comes from the canonical symplectic structure on $T^{\ast
}(T)^{\mathrm{reg}}\subset T^{\ast}(T).$
\end{proposition}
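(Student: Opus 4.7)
My approach is to exhibit the symplectic form on $T^{\ast}(T)^{\mathrm{reg}}/W$ by descent: I will show that the canonical symplectic form $\omega_T$ on $T^{\ast}(T)^{\mathrm{reg}}$ is invariant under the (free) $W$-action, so that it descends to a well-defined 2-form on the smooth quotient. Nondegeneracy on the quotient then follows from nondegeneracy upstairs together with the fact that the quotient map is a local diffeomorphism (because $W$ is finite and acts freely).

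For $W$-invariance, under the metric identification $T^{\ast}(T)\cong T\times\mathfrak{t}$, the action of $W=N(T)/T$ is given by $w\cdot(t,H)=(wtw^{-1},\mathrm{Ad}_w H)$, which is precisely the cotangent lift of the conjugation action of $W$ on $T$. Since cotangent lifts of diffeomorphisms preserve the canonical 1-form $\theta_T$ by construction, they also preserve $\omega_T=d\theta_T$.

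For freeness of the $W$-action on $T^{\ast}(T)^{\mathrm{reg}}$, take $(e^{H_1},H_2)\in T^{\ast}(T)^{\mathrm{reg}}$ and suppose $w\in W$ fixes it. Then $w$ lies in the $W$-stabilizer of $e^{H_1}\in T$ and in the $W$-stabilizer of $H_2\in\mathfrak{t}$. Since $K$ is simply connected, the centralizer of $e^{H_1}$ in $K$ is connected (Steinberg), and so the $W$-stabilizer of $e^{H_1}$ is generated by the reflections $s_\alpha$ for roots $\alpha$ with $\langle\alpha,H_1\rangle\in 2\pi\mathbb{Z}$; likewise the $W$-stabilizer of $H_2$ is generated by reflections $s_\alpha$ for roots with $\langle\alpha,H_2\rangle=0$. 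The intersection is generated by reflections in roots satisfying \emph{both} conditions, but Theorem \ref{regularPoints.thm} forbids any such root from existing. Hence the intersection is trivial and $w=1$.

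The main subtlety is the stabilizer computation in the preceding paragraph, which uses simple-connectedness of $K$ in an essential way, as already noted in the statement of Theorem \ref{regularPoints.thm}. Once these two ingredients are in place, the standard argument for forming a symplectic quotient by a free action of a finite group---push $\omega_T$ forward via the quotient map, which is a local diffeomorphism---yields a well-defined symplectic form on $T^{\ast}(T)^{\mathrm{reg}}/W$. For completeness in the service of Theorem \ref{quotient.thm}, consistency of this form with the Marsden--Weinstein form on $\phi^{-1}(0)^{\mathrm{reg}}/\mathrm{Ad}_K$ would follow from the identity $\iota^{\ast}\omega_K=\omega_T$ under the metric inclusion $T^{\ast}(T)\hookrightarrow T^{\ast}(K)$, which in turn reduces at the level of canonical 1-forms (in left trivialization) to the observation that pairing $H\in\mathfrak{t}$ with a vector in $\mathfrak{t}\subset\mathfrak{k}$ gives the same result whether computed with the $\mathfrak{k}$ or $\mathfrak{t}$ inner product.
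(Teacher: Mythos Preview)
Your proof is correct but proceeds along a different route from the paper's. The paper works entirely within the Marsden--Weinstein framework: at a point $(t,H)\in T^{\ast}(T)^{\mathrm{reg}}$ it takes the tangent space $V$ to $\phi^{-1}(0)^{\mathrm{reg}}$, the orbit tangent space $W\subset V$, invokes the general fact that $\omega|_V$ annihilates $W$, and then observes that $V$ splits as (tangent to $T^{\ast}(T)$) $\oplus$ (orbit directions), so that the induced form on $V/W$ is precisely the restriction of the canonical $2$-form on $T^{\ast}(K)$ to $T^{\ast}(T)$---that is, the canonical form on $T^{\ast}(T)$. Your argument instead bypasses the ambient $T^{\ast}(K)$ entirely for the main step: you note that the $W$-action on $T^{\ast}(T)$ is a cotangent lift and hence preserves $\omega_T$, that $W$ acts freely on the regular set, and that descent by a free finite group action (a local diffeomorphism) preserves nondegeneracy. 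This is more elementary and self-contained for the literal statement of the proposition. The paper's approach, however, directly yields the identification with the Marsden--Weinstein reduced form, which is what Theorem \ref{quotient.thm} actually requires; you recover this only in your final paragraph via $\iota^{\ast}\omega_K=\omega_T$, which is correct but is really the heart of the matter in context. One small remark: your freeness argument is more elaborate than needed, since freeness was already asserted just before Theorem \ref{quotient.thm} (and follows immediately from Proposition \ref{weylNonzero.prop} once $T^{\ast}(T)^{\mathrm{reg}}$ is identified with $T_{\mathbb{C}}^{\mathrm{rss}}$); the step ``the intersection is generated by reflections in roots satisfying both conditions'' is true but deserves a word of justification (e.g., apply Chevalley's lemma inside the Weyl group of the centralizer of $e^{H_1}$).
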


\begin{proof}
For each point in the quotient, we choose a preimage in $\phi^{-1}%
(0)^{\mathrm{reg}},$ which may be taken to be a point $(t,H)$ in $T^{\ast
}(T)^{\mathrm{reg}}.$ Let $V$ denote the tangent space to $\phi^{-1}%
(0)^{\mathrm{reg}}$ at $(t,H)$ and let $W\subset V$ denote the tangent space
to the $\mathrm{Ad}_{K}$-orbit through $(t,H).$ We then restrict the canonical
2-form $\omega$ on $T^{\ast}(K)$ to $V.$ By an elementary general result,
$\omega(w,v)=0$ for all $w\in W$ and $v\in V.$ (See \cite[Lemma 4.3.2]{AM}.)
Thus, $\omega$ descends to the quotient space $V/W,$ which is just the tangent
space to the reduced manifold $\phi^{-1}(0)^{\mathrm{reg}}/\mathrm{Ad}_{K}.$

We now compute this reduced form. The tangent space to $\phi^{-1}%
(0)^{\mathrm{reg}}$ at $(t,H)$ is the direct sum of the tangent space to
$T^{\ast}(T)$ and the tangent space to $K/T,$ which is just the tangent space
to the $\mathrm{Ad}_{K}$-orbit of $(t,H).$ In light of the just-cited general
result, the symplectic form on the reduced space will be just the restriction
of the canonical 2-form on $T^{\ast}(K)$ to $T^{\ast}(T),$ which is the
canonical 2-form on $T^{\ast}(T).$
\end{proof}

\begin{proposition}
The complex structure on $T^{\ast}(K)\cong K_{\mathbb{C}}$ descends to the
complex structure on $T^{\ast}(T)^{\mathrm{reg}}/W$ given by the
identification of $T^{\ast}(T)^{\mathrm{reg}}$ with $T_{\mathbb{C}%
}^{\mathrm{rss}}.$
\end{proposition}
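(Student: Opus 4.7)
The plan is to verify at the tangent-space level, at a single point $(t,H)\in T^{\ast}(T)^{\mathrm{reg}}$ with image $p:=\Psi(t,H)=te^{-iH}\in T_{\mathbb{C}}^{\mathrm{rss}}$, that the reduced complex structure at $[(t,H)]$ agrees with the complex structure transported from $T_pT_{\mathbb{C}}\cong\mathfrak{t}_{\mathbb{C}}$. Since all the identifications are $W$-equivariant and $N(T)$ acts on $T_{\mathbb{C}}$ by holomorphic automorphisms, the statement on quotients will then follow immediately.

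First I would observe that $T_{\mathbb{C}}$ is a closed complex Lie subgroup of $K_{\mathbb{C}}$, so $T_pT_{\mathbb{C}}\cong\mathfrak{t}_{\mathbb{C}}$ (via left translation) is a $J$-invariant subspace of $T_pK_{\mathbb{C}}\cong\mathfrak{k}_{\mathbb{C}}$, and the complex structure that $\Psi^{\prime}$ transports to $T^{\ast}(T)$ is the restriction of the ambient one. Next, using the root-space decomposition $\mathfrak{k}_{\mathbb{C}}=\mathfrak{t}_{\mathbb{C}}\oplus\bigoplus_{\alpha\in R}(\mathfrak{k}_{\mathbb{C}})_{\alpha}$ together with the fact that $\mathrm{Ad}_{p^{-1}}$ acts on $(\mathfrak{k}_{\mathbb{C}})_{\alpha}$ as multiplication by $\phi_{\alpha}(p^{-1})\neq 1$ (because $p$ is regular semisimple, Proposition \ref{rss.prop}), the map $X\mapsto\mathrm{Ad}_{p^{-1}}X-X$ is invertible on each root space, so
\[
\mathfrak{k}_{\mathbb{C}}\cdot p\;:=\;\{\mathrm{Ad}_{p^{-1}}X-X:X\in\mathfrak{k}_{\mathbb{C}}\}\;=\;\bigoplus_{\alpha\in R}(\mathfrak{k}_{\mathbb{C}})_{\alpha},
\]
which is complex-linearly complementary to $\mathfrak{t}_{\mathbb{C}}$ inside $T_pK_{\mathbb{C}}$.

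The main step is a K\"{a}hler-reduction calculation. Since $\mathrm{Ad}_K$ acts on $T^{\ast}(K)\cong K_{\mathbb{C}}$ by K\"{a}hler isometries with equivariant momentum map $\phi$, one has $T_p\phi^{-1}(0)=(\mathfrak{k}\cdot p)^{\perp_{\omega}}$, and using $\omega(Jv,w)=-g(v,w)$ together with positive-definiteness of the K\"{a}hler metric $g$ on $\mathfrak{k}\cdot p$, one deduces $J(\mathfrak{k}\cdot p)\cap T_p\phi^{-1}(0)=\{0\}$, hence $T_p\phi^{-1}(0)\cap\mathfrak{k}_{\mathbb{C}}\cdot p=\mathfrak{k}\cdot p$. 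Because $T_{\mathbb{C}}\subset\phi^{-1}(0)$ by Proposition \ref{zeroSet.prop}, we have $\mathfrak{t}_{\mathbb{C}}\subset T_p\phi^{-1}(0)$; combined with the previous splitting, the projection
\[
\mathfrak{t}_{\mathbb{C}}\;\longrightarrow\;T_p\phi^{-1}(0)^{\mathrm{reg}}\big/\mathfrak{k}\cdot p
\]
is a linear isomorphism (both sides have real dimension $2r$), and it is precisely the differential of the identification $T^{\ast}(T)^{\mathrm{reg}}/W\cong\phi^{-1}(0)^{\mathrm{reg}}/\mathrm{Ad}_K$ established in Theorem \ref{quotient.thm}. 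This projection is $J$-linear: for $v\in\mathfrak{t}_{\mathbb{C}}$ the vector $Jv$ already lies in $\mathfrak{t}_{\mathbb{C}}\subset T_p\phi^{-1}(0)$, so no correction by an element of $\mathfrak{k}\cdot p$ is required, and the reduced complex structure---defined by the standard slice recipe of K\"{a}hler reduction---satisfies $J_{\mathrm{red}}[v]=[Jv]$.

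The main obstacle is confirming that the complex structure produced by K\"{a}hler reduction can be read off from any $J$-invariant complement of $\mathfrak{k}_{\mathbb{C}}\cdot p$ that happens to sit inside $T_p\phi^{-1}(0)$; this is exactly where the identity $T_p\phi^{-1}(0)\cap\mathfrak{k}_{\mathbb{C}}\cdot p=\mathfrak{k}\cdot p$ (i.e., K\"{a}hler positivity) is essential. Once that slice principle is in hand, $W$-equivariance of the projection and holomorphicity of the $W$-action on $T_{\mathbb{C}}$ let us pass cleanly to quotients and obtain the desired complex identification $\phi^{-1}(0)^{\mathrm{reg}}/\mathrm{Ad}_K\cong T_{\mathbb{C}}^{\mathrm{rss}}/W$.
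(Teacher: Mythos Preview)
Your proof is correct and follows essentially the same approach as the paper: both identify the tangent space to $T^{\ast}(T)$ at $(t,H)$ as the maximal $J$-invariant subspace of $T_{(t,H)}\phi^{-1}(0)^{\mathrm{reg}}$, so that the reduced complex structure is simply the restriction of $J$ to that subspace. The paper's proof merely asserts that this identification is ``not hard to compute,'' whereas you supply the details via the root-space decomposition $\mathfrak{k}_{\mathbb{C}}\cdot p=\bigoplus_{\alpha}(\mathfrak{k}_{\mathbb{C}})_{\alpha}$ and the K\"{a}hler-positivity argument $J(\mathfrak{k}\cdot p)\cap T_p\phi^{-1}(0)=\{0\}$.
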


\begin{proof}
For each point in the quotient, we choose a preimage in $\phi^{-1}%
(0)^{\mathrm{reg}},$ which may taken to be a point $(t,H)$ in $T^{\ast
}(T)^{\mathrm{reg}}.$ Let $V$ denote the tangent space to $\phi^{-1}%
(0)^{\mathrm{reg}}$ at $(t,H)$ and let $U$ denote the space of vectors $X$ in
$V$ for which $JX$ is also in $V.$ (Here $J$ is the complex structure on
$T^{\ast}(K)\cong K_{\mathbb{C}}.$) It is not hard to compute that $U$ is just
the tangent space to $T^{\ast}(T)$ and thus that $V$ is the direct sum of $U$
and the tangent space to the $\mathrm{Ad}_{K}$-orbit through $(t,H).$ Thus,
the restriction of $J$ to $U$ descends to a map on the quotient, which is just
the complex structure on $T^{\ast}(T)\cong T_{\mathbb{C}}.$
\end{proof}

\begin{remark}
Let us regard $\phi^{-1}(0)$ as a subset of $K_{\mathbb{C}}$ by means of the
identification of $T^{\ast}(K)$ with $K_{\mathbb{C}}.$ It is then not hard to
show that if $g\in K_{\mathbb{C}}^{\mathrm{rss}},$ the conjugacy class of $g$
intersects $\phi^{-1}(0)^{\mathrm{reg}}$ in exactly one $\mathrm{Ad}_{K}%
$-orbit. We thus have an alternative characterization of the regular part of
the reduced phase space as%
\[
\phi^{-1}(0)^{\mathrm{reg}}/\mathrm{Ad}_{K}\cong K_{\mathbb{C}}^{\mathrm{rss}%
}/\mathrm{Ad}_{K_{\mathbb{C}}}.
\]

\end{remark}

Although we will not use this result in what follows, it is an illuminating
way of thinking about the complex structure on the reduced phase space.

\section{Quantization of the reduced phase space\label{quantReduced.sec}}

\subsection{Quantization of $T_{\mathbb{C}}$}

The reduced phase space is a quotient of an open, dense subset of
$T_{\mathbb{C}}$ by the action of the Weyl group. It is therefore natural to
first consider the quantization of $T_{\mathbb{C}}.$ Since $T$ is a compact,
connected Lie group, we can (and do) quantize $T_{\mathbb{C}}\cong T^{\ast
}(K)$ the same way we quantized $K_{\mathbb{C}}\cong T^{\ast}(K).$ Elements of
$\mathrm{Quant}(T_{\mathbb{C}})$ have the form%
\[
\psi=Fe^{-\left\vert H\right\vert ^{2}/(2\hbar)}\otimes\sqrt{\beta^{\prime}},
\]
where $\beta^{\prime}$ is a nowhere-vanishing, invariant holomorphic $r$-form
and $F$ is a holomorphic function on $T_{\mathbb{C}}$. The norm of $\psi$ is
the $L^{2}$ norm of $F$ with respect to the measure
\begin{equation}
\gamma_{\hbar}^{\prime}:=e^{-\left\vert H\right\vert ^{2}/\hbar}\eta^{\prime
}\varepsilon^{\prime} \label{gammaPrime}%
\end{equation}
where $\varepsilon^{\prime}$ is the symplectic volume measure on $T^{\ast}(T)$
and $\eta^{\prime}$ is defined, analogously to (\ref{hermitianK12}), as
\begin{equation}
\eta^{\prime}:=\left[  \frac{\beta^{\prime}\wedge\bar{\beta}^{\prime}%
}{b^{\prime}\varepsilon^{\prime}}\right]  ^{1/2}. \label{etaPrime}%
\end{equation}
Here the quantity \textquotedblleft$H$\textquotedblright\ on $T_{\mathbb{C}}$
is defined by means of the identification of $T_{\mathbb{C}}$ with $T^{\ast
}(T).$ Actually, since $T_{\mathbb{C}}$ is commutative, the function
$\eta^{\prime}$ is easily seen to be constant. We follow the notational
convention of using primes to distinguish constructs on $T_{\mathbb{C}}$ from
their counterparts on $K_{\mathbb{C}}.$

Since we are going to quotient (an open dense subset of) $T_{\mathbb{C}}$ by
$W,$ it is natural to look for subspaces of the above Hilbert space with
particular transformation properties under $W.$ The difficulty with this idea
is that $\beta^{\prime}$ is not invariant under the action of $W,$ but rather
transforms according to the sign of the Weyl-group element:%
\[
w\cdot\beta^{\prime}=\mathrm{sign}(w)\beta^{\prime}.
\]
Thus, the the most natural way for for $W$ to act on the Hilbert space is by
the following \textit{projective} unitary action%
\begin{equation}
U(w)\left(  F(z)e^{-\left\vert H\right\vert ^{2}/(2\hbar)}\otimes\sqrt
{\beta^{\prime}}\right)  =\sqrt{\mathrm{sign}(w)}F(w^{-1}\cdot
z)e^{-\left\vert H\right\vert ^{2}/(2\hbar)}\otimes\sqrt{\beta^{\prime}}.
\label{wAction}%
\end{equation}
Here we allow both possible signs for the square root of $\sqrt{\mathrm{sign}%
(w)},$ so that $U(w)$ is actually a pair of unitary operators differing by a
sign. These operators satisfy (for any choice of the signs involved)
\[
U(w_{1}w_{2})=\pm U(w_{1})U(w_{2}).
\]

Now, if $\alpha$ is a root and $s_{\alpha}$ is the associated reflection, then
for either choice of the sign in the definition, we have $U(s_{\alpha}%
)^{2}=-I.$ Thus, there are, strictly speaking, no nonzero \textquotedblleft
Weyl-invariant\textquotedblright\ elements in the Hilbert space! Nevertheless,
we can make the following definition.

\begin{definition}
\label{alternating.def}For each root $\alpha,$ let $s_{\alpha}$ be the
associated reflection. Let us choose a sign for the operator $U(s_{\alpha})$
by choosing the factor of $\sqrt{\mathrm{sign}(s_{\alpha})}=\sqrt{-1}$ to have
the value%
\begin{equation}
\sqrt{\mathrm{sign}(s_{\alpha})}=i. \label{signChoice}%
\end{equation}
For each $\psi$ in the Hilbert space, we say that $\psi$ is
\textbf{Weyl-invariant} if
\[
U(s_{\alpha})\psi=i\psi,\quad\forall\alpha\in R,
\]
and we say $\psi$ is \textbf{Weyl-alternating} if%
\[
U(s_{\alpha})\psi=-i\psi,\quad\forall\alpha\in R.
\]

\end{definition}

Of course, Definition \ref{alternating.def} is just a fancy way of saying that
if
\[
\psi=Fe^{-\left\vert H\right\vert ^{2}/(2\hbar)}\otimes\sqrt{\beta^{\prime}},
\]
then $\psi$ is Weyl invariant if $F$ is Weyl invariant and $\psi$ is Weyl
alternating if $F$ is Weyl alternating. Note that the Weyl-invariant and
Weyl-alternating elements have very different behavior as we approach the
singular points.

\begin{proposition}
\label{holoExtend.prop}Suppose $F$ is a holomorphic function on $T_{\mathbb{C}%
}^{\mathrm{rss}}$ that is square integrable with respect to the measure
$\gamma_{\hbar}^{\prime}$ in (\ref{gammaPrime}). Then $F$ has a unique
holomorphic extension to $T_{\mathbb{C}}.$
\end{proposition}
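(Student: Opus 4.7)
The plan is to recognize this as an $L^2$ removable-singularities result. The singular set $Z := T_{\mathbb{C}}\setminus T_{\mathbb{C}}^{\mathrm{rss}}$ is, by Proposition \ref{rss.prop}, the finite union $\bigcup_{\alpha\in R}\{z\in T_{\mathbb{C}}:\phi_{\alpha}(z)=1\}$; each $\phi_{\alpha}-1$ is a holomorphic function with nowhere-vanishing differential, so $Z$ is a complex analytic subvariety of pure codimension one in $T_{\mathbb{C}}$, and its singular locus $Z_{\mathrm{sing}}$ (where two or more root hypersurfaces meet) has codimension at least two. Since $T_{\mathbb{C}}$ is connected, uniqueness of any holomorphic extension is automatic from the identity theorem.

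Next, I would observe that because $T_{\mathbb{C}}$ is commutative, the invariant form $\beta'$ is both left- and right-invariant, so the density $\eta'$ in (\ref{etaPrime}) is a positive constant; and $e^{-\left\vert H\right\vert ^{2}/\hbar}$ is smooth and strictly positive. Therefore $\gamma_{\hbar}^{\prime}$ is locally equivalent to Lebesgue measure on any coordinate chart of $T_{\mathbb{C}}$, and the hypothesis that $F\in L^{2}(T_{\mathbb{C}}^{\mathrm{rss}},\gamma_{\hbar}^{\prime})$ implies that $F$ is in $L^{2}_{\mathrm{loc}}$ near every point of $T_{\mathbb{C}}$ (the set $Z$ being of Lebesgue measure zero).

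Finally, I would extend $F$ in two stages. Near a smooth point $p$ of $Z$, choose a local holomorphic coordinate system $(z_{1},\dots,z_{r})$ in which $Z$ is defined by $z_{1}=0$; since $F$ is holomorphic on the punctured polydisc and locally $L^{2}$, expand it in a Laurent series $F=\sum_{n\in\mathbb{Z}}a_{n}(z_{2},\dots,z_{r})z_{1}^{n}$. Fubini together with the one-variable fact that $\int_{0<\left\vert z_{1}\right\vert <r}\left\vert z_{1}\right\vert ^{2n}\,dA(z_{1})=\infty$ for $n\le -1$ forces $a_{n}\equiv0$ for $n<0$, so $F$ extends holomorphically across $Z\setminus Z_{\mathrm{sing}}$. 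This produces a holomorphic function on $T_{\mathbb{C}}\setminus Z_{\mathrm{sing}}$, and since $Z_{\mathrm{sing}}$ is an analytic subvariety of complex codimension at least two, the Hartogs/second Riemann extension theorem provides the unique holomorphic extension to all of $T_{\mathbb{C}}$.

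The only real subtlety, and what I would flag as the main obstacle, is that one cannot apply a one-step extension theorem across all of $Z$ at once: at self-intersection points the local picture is no longer a single smooth hypersurface, so the $L^{2}$ Laurent argument applies only at smooth points of $Z$, and one must separately invoke codimension-two removability to cover $Z_{\mathrm{sing}}$.
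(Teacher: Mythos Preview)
Your argument is correct and follows the same strategy as the paper: recognize the complement of $T_{\mathbb{C}}^{\mathrm{rss}}$ as an analytic subvariety and invoke $L^{2}$ removable singularities. The paper simply observes that the singular set is the zero locus of the holomorphic function $\prod_{\alpha\in R^{+}}(\phi_{\alpha}(z)-1)$ and then cites a general $L^{2}$ removable-singularities theorem (Theorem 1.13 and Proposition 1.14 of Ohsawa) that handles arbitrary analytic subvarieties in one step. Your two-stage decomposition into smooth points of $Z$ and the codimension-two locus $Z_{\mathrm{sing}}$ is a valid and self-contained proof of exactly that theorem in this situation, but the caveat you flag is not a genuine obstacle: the general theorem applies to all of $Z$ at once without separating smooth and singular points.
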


There certainly exist holomorphic functions on $T_{\mathbb{C}}^{\mathrm{rss}}$
that do not extend holomorphically to $T_{\mathbb{C}},$ such as the reciprocal
of the analytically continued Weyl denominator $\sigma_{\mathbb{C}}.$ We are
claiming, however, that such functions cannot be square integrable.

\begin{proof}
The set of irregular points is a complex analytic subvariety of $T_{\mathbb{C}%
}$ defined by the vanishing of the holomorphic function%
\[
F(z)=\prod_{\alpha\in R^{+}}(\phi_{\alpha}(z)-1),
\]
where $\phi_{\alpha}:T_{\mathbb{C}}\rightarrow\mathbb{C}^{\ast}$ is the root
homomorphism in (\ref{phiAlphaDef}). The result then follows from a standard
removable singularities theorem for square-integrable holomorphic functions,
such as Theorem 1.13 and Proposition 1.14 in \cite{Ohsawa}.
\end{proof}

\subsection{ Quantization of $T_{\mathbb{C}}^{\mathrm{rss}}/W$%
\label{reducedQuant.sec}}

Recall that the full reduced phase space $\phi^{-1}(0)/\mathrm{Ad}_{K}$ is not
a manifold. We deal with this difficulty in a simple way, by quantizing only
the set of regular points, $\phi^{-1}(0)^{\mathrm{reg}}/\mathrm{Ad}_{K},$
which we have identified with $T_{\mathbb{C}}^{\mathrm{rss}}/W.$ Our
justification for ignoring the singular points is that we will quantize
$T_{\mathbb{C}}^{\mathrm{rss}}/W$ in such a way that elements of the
quantization may be identified as $W$-alternating holomorphic functions on
$T_{\mathbb{C}}^{\mathrm{rss}}$ that are square integrable with respect to the
measure $\gamma_{\hbar}^{\prime}.$ By Proposition \ref{holoExtend.prop}, every
such function extends holomorphically to all of $T_{\mathbb{C}}.$ Suppose
there were some quantization of the full reduced phase space. It is not clear
what nicer properties an element of this quantization should have than those
already possessed by elements of the quantization of the regular set.

As just mentioned, we are going to quantize $T_{\mathbb{C}}^{\mathrm{rss}}/W$
in such a way that the Hilbert space corresponds to the space of\textit{
Weyl-alternating elements of the quantization of }$T_{\mathbb{C}}.$ In
practical terms, we \textquotedblleft need\textquotedblright\ this to be the
case, in the sense that the \textquotedblleft quantization commutes with
reduction\textquotedblright\ map only makes sense if the quantization of
$T_{\mathbb{C}}^{\mathrm{rss}}/W$ is done in this way. (See Section
\ref{QvR.sec}.) We will show that for a suitable choice of the prequantum line
bundle, the desired outcome can be obtained by exploiting the freedom in the
standard procedure of geometric quantization with half-forms to choose the
prequantum line bundle.

\begin{proposition}
\label{weylNonzero.prop}The Weyl group acts freely on $T_{\mathbb{C}%
}^{\mathrm{rss}}$ and the analytically continued Weyl denominator
$\sigma_{\mathbb{C}}$ is nowhere vanishing on $T_{\mathbb{C}}^{\mathrm{rss}}.$
\end{proposition}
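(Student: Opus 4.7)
The plan is to reduce both assertions to Proposition \ref{rss.prop}, which characterizes $T_{\mathbb{C}}^{\mathrm{rss}}$ as the set of $z\in T_{\mathbb{C}}$ at which no root homomorphism $\phi_{\alpha}$ takes the value $1$.

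For the freeness of the Weyl action, I would take $w\in W$ stabilizing some $z\in T_{\mathbb{C}}^{\mathrm{rss}}$ and lift it to a representative $\tilde{w}\in N(T)\subset K$. The relation $\tilde{w}z\tilde{w}^{-1}=z$ places $\tilde{w}$ in the centralizer $C_{K_{\mathbb{C}}}(z)$. By the definition of regular semisimple, this centralizer is a complex maximal torus; but it already contains $T_{\mathbb{C}}$ (which centralizes $z\in T_{\mathbb{C}}$), and $T_{\mathbb{C}}$ is itself a complex maximal torus, so $C_{K_{\mathbb{C}}}(z)=T_{\mathbb{C}}$. Hence $\tilde{w}\in T_{\mathbb{C}}\cap K=T$, which means $w$ is trivial in $W=N(T)/T$.

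For the nonvanishing of $\sigma_{\mathbb{C}}$, I would use the product formula \eqref{weylDenom1}, now evaluated at $H\in\mathfrak{t}_{\mathbb{C}}$:
\[
\sigma_{\mathbb{C}}(e^{H})=(2i)^{m}\prod_{\alpha\in R^{+}}\sin\!\left(\tfrac{\langle\alpha,H\rangle}{2}\right).
\]
For a complex number $\theta$, $\sin\theta=(e^{i\theta}-e^{-i\theta})/(2i)=0$ iff $e^{2i\theta}=1$, i.e.\ $\theta\in\pi\mathbb{Z}$. Applied to $\theta=\langle\alpha,H\rangle/2$, this gives $\sin(\langle\alpha,H\rangle/2)=0$ iff $e^{i\langle\alpha,H\rangle}=\phi_{\alpha}(e^{H})=1$. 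By Proposition \ref{rss.prop}, no such factor vanishes when $e^{H}\in T_{\mathbb{C}}^{\mathrm{rss}}$, so every factor in the product is nonzero and $\sigma_{\mathbb{C}}(z)\neq 0$.

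Neither step presents a real obstacle once one notices that both statements are equivalent to the nonvanishing of the quantities $\phi_{\alpha}(z)-1$. The only point that deserves a moment's care is the identification $C_{K_{\mathbb{C}}}(z)=T_{\mathbb{C}}$ in the first part: one must observe that a complex maximal torus which contains the complex maximal torus $T_{\mathbb{C}}$ necessarily equals $T_{\mathbb{C}}$, so no appeal to the Steinberg theorem or simple-connectedness of $K$ is needed here beyond what is already encoded in Proposition \ref{rss.prop}.
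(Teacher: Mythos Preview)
Your proof is correct and follows essentially the same approach as the paper's. For the free action you lift a Weyl element to $N(T)$ and argue that the centralizer of a regular semisimple $z$, being a complex maximal torus containing $T_{\mathbb{C}}$, must equal $T_{\mathbb{C}}$; the paper phrases this same step as ``$z$ would commute with some element of $N(T)$ not in $T$, so $z$ would not be regular semisimple,'' with the implicit use of $T_{\mathbb{C}}\cap K=T$ that you make explicit. For the nonvanishing of $\sigma_{\mathbb{C}}$ both you and the paper reduce via the product formula to the condition $\phi_{\alpha}(z)\neq 1$ and invoke Proposition~\ref{rss.prop}; your version simply spells out the elementary fact $\sin\theta=0\iff e^{2i\theta}=1$ that the paper leaves tacit.
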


\begin{proof}
If $z\in T_{\mathbb{C}}^{\mathrm{rss}}$ were fixed by some nontrivial element
of $W:=N(T)/T,$ then $z$ would commute with some element of $N(T)$ not in $T,$
so that $z$ would not be regular semisimple. Meanwhile, from the formula for
$\sigma_{\mathbb{C}},$ we see that if $\sigma_{\mathbb{C}}(e^{H})=0,$ then
$\left\langle \alpha,H\right\rangle \in2\pi\mathbb{Z},$ from which it follows
that $\phi_{\alpha}(z)=1,$ showing that $z$ is not regular semisimple.
\end{proof}

We need to understand the canonical bundle $\mathcal{K}^{\prime}$ for the
quotient $T_{\mathbb{C}}^{\mathrm{rss}}/W.$ Since the volume form
$\beta^{\prime}$ on $T_{\mathbb{C}}$ is not invariant under the action of the
Weyl group, it does not descend to a form on the quotient. On the other hand,
since the Weyl denominator function $\sigma_{\mathbb{C}}$ is alternating, the
form%
\[
\sigma_{\mathbb{C}}\beta^{\prime}%
\]
is $W$-invariant and (by Proposition \ref{weylNonzero.prop}) nowhere vanishing
on $T_{\mathbb{C}}^{\mathrm{rss}}.$ Thus, we may regard this form a nowhere
vanishing form on the quotient. In particular, we have established that the
canonical bundle $\mathcal{K}^{\prime}$ for $T_{\mathbb{C}}^{\mathrm{rss}}/W$
is trivial. We may therefore take a trivial square root $\mathcal{K}%
_{1/2}^{\prime}$ of $\mathcal{K}^{\prime}$ with trivializing holomorphic
section $\sqrt{\sigma_{\mathbb{C}}\beta^{\prime}}.$

We would like to quantize $T_{\mathbb{C}}^{\mathrm{rss}}/W$ in such a way that
the sections have the form $Fe^{-\left\vert H\right\vert ^{2}/(2\hbar)}%
\otimes\sqrt{\beta^{\prime}},$ with $F$ being a Weyl-alternating holomorphic
function on $T_{\mathbb{C}}^{\mathrm{rss}}.$ We can formally rewrite such an
object as%
\[
\frac{Fe^{-\left\vert H\right\vert ^{2}/(2\hbar)}}{\sqrt{\sigma_{\mathbb{C}}}%
}\otimes\sqrt{\sigma_{\mathbb{C}}\beta^{\prime}},
\]
where as above we may regard $\sqrt{\sigma_{\mathbb{C}}\beta^{\prime}}$ as a
trivializing section of the canonical bundle of $T_{\mathbb{C}}^{\mathrm{rss}%
}/W.$ We now construct a line bundle $L^{\prime}$ over $T_{\mathbb{C}%
}^{\mathrm{rss}}/W$ in such way that the expression $Fe^{-\left\vert
H\right\vert ^{2}/(2\hbar)}/\sqrt{\sigma_{\mathbb{C}}}$ can be interpreted as
a holomorphic section of $L^{\prime}.$ (The prime distinguishes $L^{\prime}$
from the prequantum bundle $L$ over $T^{\ast}(K).$)

We define $L^{\prime}$ as the complex line bundle over $T_{\mathbb{C}%
}^{\mathrm{rss}}/W$ whose sections are $W$-alternating functions $f$ on
$T_{\mathbb{C}}^{\mathrm{rss}}.$ That is to say, the fiber of $L^{\prime}$
over each $W$-orbit $\mathcal{O}$ in $T_{\mathbb{C}}^{\mathrm{rss}}$ is the
one-dimensional complex vector space of $W$-alternating functions from
$\mathcal{O}$ into $\mathbb{C}.$ To each section $f$ of $L^{\prime}$ we
associate the formal object%
\begin{equation}
\frac{f}{\sqrt{\sigma_{\mathbb{C}}}}. \label{formalExpr}%
\end{equation}
We emphasize that $\sqrt{\sigma_{\mathbb{C}}}$ is not a single-valued function
on $T_{\mathbb{C}}^{\mathrm{rss}}$; the expression (\ref{formalExpr}) is
simply a mnemonic device that will help us remember the definition of the
Hermitian structure and connection on $L^{\prime}.$

Motivated by (\ref{formalExpr}) we define a Hermitian structure on $L^{\prime
}$ by setting
\begin{equation}
\left\vert f\right\vert _{L^{\prime}}(z)=\frac{\left\vert f(z)\right\vert
}{\left\vert \sigma_{\mathbb{C}}(z)\right\vert ^{1/2}},\quad z\in
T_{\mathbb{C}}^{\mathrm{rss}}, \label{TcHermitian}%
\end{equation}
for each $W$-alternating function $f.$ To define a connection on $L^{\prime},$
we observe that if $\sqrt{\sigma_{\mathbb{C}}}$ is any local square root of
$\sigma_{\mathbb{C}}$ and $X$ is a vector field, we have%
\[
X\left(  \frac{f}{\sqrt{\sigma_{\mathbb{C}}}}\right)  =\frac{1}{\sqrt
{\sigma_{\mathbb{C}}}}\left(  Xf-\frac{1}{2}\frac{X\sigma_{\mathbb{C}}}%
{\sigma_{\mathbb{C}}}f\right)  .
\]
Thus, the formal expression (\ref{formalExpr}) suggests to define a connection
on $L^{\prime}$ by setting
\begin{equation}
\nabla_{X}f=Xf-\frac{1}{2}\frac{X\sigma_{\mathbb{C}}}{\sigma_{\mathbb{C}}%
}f-\frac{i}{\hbar}\theta(X)f, \label{TcConnection}%
\end{equation}
whenever $X$ is a vector field on $T_{\mathbb{C}}^{\mathrm{rss}}/W,$ viewed as
a $W$-invariant vector field on $T_{\mathbb{C}}^{\mathrm{rss}},$ and $f$ is a
$W$-alternating function. Note that since $X$ is $W$-invariant and
$\sigma_{\mathbb{C}}$ is $W$-alternating, $X\sigma_{\mathbb{C}}/\sigma
_{\mathbb{C}}$ is $W$-invariant, so that $(X\sigma_{\mathbb{C}}/\sigma
_{\mathbb{C}})f$ is still $W$-alternating. Then, as usual in geometric
quantization, we define a smooth section $f$ of $L^{\prime}$ to be holomorphic
if
\[
\nabla_{X}f=0
\]
for all vectors of type $(0,1).$

\begin{proposition}
The curvature of $L^{\prime}$ with respect to the connection in
(\ref{TcConnection}) is $\omega/\hbar.$ A section $f$ is holomorphic if and
only if
\begin{equation}
\left(  X-\frac{i}{\hbar}\theta(X)\right)  f=0 \label{holoSectionTc}%
\end{equation}
for each vector field of type $(0,1),$ and this condition holds if and only if
$f$ has the form%
\begin{equation}
f=Fe^{-\left\vert H\right\vert ^{2}/(2\hbar)} \label{holoKahler}%
\end{equation}
for some Weyl-alternating holomorphic function $F$ on $T_{\mathbb{C}%
}^{\mathrm{rss}}.$
\end{proposition}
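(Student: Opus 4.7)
The strategy is to reduce both assertions to the corresponding facts for the standard prequantum line bundle on $T_{\mathbb{C}}$, by showing that the extra term $-\tfrac{1}{2}X\sigma_{\mathbb{C}}/\sigma_{\mathbb{C}}$ in the connection (\ref{TcConnection}) contributes nothing to the curvature and nothing to the holomorphicity condition. The key observation is that $\sigma_{\mathbb{C}}$ is a nowhere vanishing holomorphic function on $T_{\mathbb{C}}^{\mathrm{rss}}$ (Proposition \ref{weylNonzero.prop}), so it admits a local holomorphic logarithm there, and $X\sigma_{\mathbb{C}}$ vanishes identically for every vector field $X$ of type $(0,1)$.

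For the curvature, I would write $\nabla = d + A$ with connection $1$-form $A = -\tfrac{1}{2}\,d\sigma_{\mathbb{C}}/\sigma_{\mathbb{C}} - (i/\hbar)\theta$ and compute the curvature as $dA$. Because $d\sigma_{\mathbb{C}}/\sigma_{\mathbb{C}} = d\log \sigma_{\mathbb{C}}$ locally on $T_{\mathbb{C}}^{\mathrm{rss}}$, the first piece of $dA$ is zero, and the curvature reduces to $-(i/\hbar)\,d\theta = -(i/\hbar)\omega$, i.e., the same value $\omega/\hbar$ computed for the prequantum bundle $L$ on $T^{\ast}(K)$ in Section \ref{tkQuant.sec}.

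For the holomorphicity condition, I would use that $X\sigma_{\mathbb{C}} = 0$ on $(0,1)$ vector fields, so the middle term of (\ref{TcConnection}) drops out and the equation $\nabla_X f = 0$ collapses to (\ref{holoSectionTc}). This is the same PDE that characterized holomorphic sections of $L$ in Section \ref{tkQuant.sec}, so the same argument -- relying on the fact that $|H|^2$ is a K\"{a}hler potential for $T^{\ast}(T) \cong T_{\mathbb{C}}$ -- shows that the solutions are exactly the functions of the form $Fe^{-|H|^2/(2\hbar)}$ with $F$ holomorphic on $T_{\mathbb{C}}^{\mathrm{rss}}$.

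Finally, since a section of $L'$ is by definition a $W$-alternating function on $T_{\mathbb{C}}^{\mathrm{rss}}$ and $e^{-|H|^2/(2\hbar)}$ is $W$-invariant (the Weyl group acts orthogonally on $\mathfrak{t}$), such a section has the form (\ref{holoKahler}) exactly when $F$ is $W$-alternating. The calculation presents no serious obstacle; the main point to watch is that the formal device $\sqrt{\sigma_{\mathbb{C}}}$ motivating the definitions is not single valued, but the quantities that actually enter -- $d\log\sigma_{\mathbb{C}}$ in (\ref{TcConnection}) and $|\sigma_{\mathbb{C}}|^{1/2}$ in (\ref{TcHermitian}) -- are globally defined on $T_{\mathbb{C}}^{\mathrm{rss}}$, so no ambiguity arises and the reduction to the $L$-case is legitimate.
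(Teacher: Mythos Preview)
Your proposal is correct and follows essentially the same approach as the paper: both argue that the extra term $-\tfrac{1}{2}X\sigma_{\mathbb{C}}/\sigma_{\mathbb{C}}$ is locally $d$ of a logarithm (hence contributes nothing to the curvature) and vanishes on $(0,1)$ vector fields (hence leaves the holomorphicity condition unchanged), after which the analysis from Section~\ref{tkQuant.sec} applied to the compact group $T$ yields the form~(\ref{holoKahler}). Your treatment is slightly more explicit---spelling out the connection $1$-form and the reason $F$ inherits the $W$-alternating property from $f$---but the argument is the same.
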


\begin{proof}
The connection (\ref{TcConnection}) differs from the usual one in
prequantization by the addition of the term involving $X\sigma_{\mathbb{C}%
}/\sigma_{\mathbb{C}}$. Locally, this change amounts to replacing $\theta$ by
$\theta^{\prime}=\theta+d\psi,$ where $\psi$ is a multiple of the locally
defined logarithm $\log(\sigma_{\mathbb{C}}).$ Since the curvature is computed
from $d\theta,$ this change does not affect the curvature. Similarly, since
$\sigma_{\mathbb{C}}$ is holomorphic, the term involving $\sigma_{\mathbb{C}}$
will vanish whenever $X$ is of type $(0,1),$ so the condition for a
holomorphic section is still (\ref{holoSectionTc}). Finally, since $T$ is also
a connected compact Lie group, the analysis we carried out in the quantization
of $T^{\ast}(K)$ applies also here, showing that solutions to
(\ref{holoSectionTc}) have the form (\ref{holoKahler}).
\end{proof}

We summarize the preceding discussion in the following definition.

\begin{definition}
[Quantization of the reduced phase space]\label{QuantReduced.def}Let
$L^{\prime}$ be the complex line bundle over $T_{\mathbb{C}}^{\mathrm{rss}}/W$
whose sections are $W$-alternating functions $f$ on $T_{\mathbb{C}%
}^{\mathrm{rss}},$ with Hermitian structure and connection on $L^{\prime}$ as
in (\ref{TcHermitian}) and (\ref{TcConnection}). Take a trivial square root
$\mathcal{K}_{1/2}^{\prime}$ of the canonical bundle $\mathcal{K}^{\prime}$
over $T_{\mathbb{C}}^{\mathrm{rss}}/W$ with trivializing section $\sqrt
{\sigma_{\mathbb{C}}\beta^{\prime}},$ with a Hermitian structure on
$\mathcal{K}_{1/2}^{\prime}$ defined similarly to (\ref{hermitianK12}). We
define our quantization of $T_{\mathbb{C}}^{\mathrm{rss}}/W$ as the space of
square-integrable holomorphic sections of $L^{\prime}\otimes\mathcal{K}%
_{1/2}^{\prime}.$ In accordance with the formal expression (\ref{formalExpr}),
we write elements $\psi$ of the quantum Hilbert space as%
\[
\psi=\frac{Fe^{-\left\vert H\right\vert ^{2}/(2\hbar)}}{\sqrt{\sigma
_{\mathbb{C}}}}\otimes\sqrt{\sigma_{\mathbb{C}}\beta^{\prime}},
\]
or, suggestively, as%
\[
\psi=Fe^{-\left\vert H\right\vert ^{2}/(2\hbar)}\otimes\sqrt{\beta^{\prime}},
\]
where $F$ is a $W$-alternating holomorphic function on~$T_{\mathbb{C}%
}^{\mathrm{rss}}.$
\end{definition}

The norm of such an element is computed as
\begin{equation}
\left\Vert \psi\right\Vert ^{2}=\frac{1}{\left\vert W\right\vert }%
\int_{T_{\mathbb{C}}^{\mathrm{rss}}}\left\vert F(z)\right\vert ^{2}%
e^{-\left\vert H\right\vert ^{2}/\hbar}\eta^{\prime}\varepsilon^{\prime},
\label{psiNormReduced}%
\end{equation}
where $\varepsilon^{\prime}$ is the Liouville volume measure on $T^{\ast
}(T)\cong T_{\mathbb{C}}$ and where $\eta^{\prime}$ is as in (\ref{etaPrime}).
In particular, we have identified%
\[
\mathrm{Quant}(K_{\mathbb{C}}/\!\!/\mathrm{Ad}_{K})=\mathrm{Quant}%
(T_{\mathbb{C}}^{\mathrm{rss}}/W)
\]
with the $W$-alternating subspace of $\mathrm{Quant}(T_{\mathbb{C}}),$ and
this identification is unitary up to a constant. (The constant arises because
of the factor of $1/\left\vert W\right\vert $ in (\ref{psiNormReduced}).)

\section{The \textquotedblleft quantization commutes with
reduction\textquotedblright\ map\label{QvR.sec}}

In this section, we construct a \textquotedblleft natural\textquotedblright%
\ map $B$ from the first-reduce-then-quantize Hilbert space $\mathrm{Quant}%
(K_{\mathbb{C}})^{\mathrm{Ad}_{K}}$ to the first-quantize-then-reduce Hilbert
space $\mathrm{Quant}(K_{\mathbb{C}}/\!\!/\mathrm{Ad}_{K})$. The map includes
a mechanism for converting half-forms of degree $n$ (over $K_{\mathbb{C}}$) to
half-forms of degree $r$ (over the regular part of the reduced phase). The
main result will be that $B$ coincides, after suitable identifications, with
the map of Florentino, Mour\~{a}o, and Nunes and therefore (Theorem
\ref{fmn.thm}) that $B$ is a constant multiple of a unitary map.

Recall that $L$ and $L^{\prime}$ denote the prequantum line bundles over
$T^{\ast}(K)\cong K_{\mathbb{C}}$ and the reduced phase space, respectively,
and that $\mathcal{K}_{1/2}$ and $\mathcal{K}_{1/2}^{\prime}$ denote chosen
square roots of the corresponding canonical bundles. We will introduce a
contraction mechanism that will allow us to convert invariant holomorphic
sections of $\mathcal{K}_{1/2}$ to holomorphic sections of $\mathcal{K}%
_{1/2}^{\prime}.$ A crucial factor of the Weyl denominator will arise in this
process. The process depends, however, on a certain choice of orientations and
it will \textit{not} be possible to make this choice consistently over all of
$\phi^{-1}(0)^{\mathrm{reg}}.$ Thus, the contraction procedure only makes
sense locally.

We also introduce a \textquotedblleft restriction\textquotedblright\ map for
mapping invariant holomorphic sections of $L$ to sections of $L^{\prime}.$
This map is similarly defined only locally. When we combine the two maps,
however, we get a globally defined map $B$ of invariant holomorphic sections
of $L\otimes\mathcal{K}_{1/2}$ to holomorphic sections of $L^{\prime}%
\otimes\mathcal{K}_{1/2}^{\prime}.$ The map $B$ is our \textquotedblleft
quantization commutes with reduction map\textquotedblright\ from
$\mathrm{Quant}(K_{\mathbb{C}})^{\mathrm{Ad}_{K}}$ to $\mathrm{Quant}%
(K_{\mathbb{C}}/\!\!/\mathrm{Ad}_{K}).$ Our main result is that $B$ after
suitable identifications, $B$ is the map described by the theorem of
Florentino, Mour\~{a}o, and Nunes and therefore that $B$ is a constant
multiple of a unitary map.

We now give a very brief summary of how $B$ is defined; details are given
below. To each element $\psi=Fe^{-\left\vert \xi\right\vert ^{2}/(2\hbar
)}\otimes\sqrt{\beta}$ of $\mathrm{Quant}(K_{\mathbb{C}})^{\mathrm{Ad}_{K}},$
we formally associate the quantity%
\[
\psi^{\prime}:=\left.  F\right\vert _{T_{\mathbb{C}}}e^{-\left\vert
H\right\vert ^{2}/(2\hbar)}\otimes(\sigma_{\mathbb{C}}\sqrt{\beta^{\prime}}),
\]
where, as we shall see, the factor of $\sigma_{\mathbb{C}}$ comes from the
contraction process. We then formally rewrite $\psi^{\prime}$ by moving the
factor of $\sigma_{\mathbb{C}}$ to the other side and multiplying and dividing
by $\sqrt{\sigma_{\mathbb{C}}}$ giving%
\[
\psi^{\prime}=\frac{(\sigma_{\mathbb{C}})(\left.  F\right\vert _{T_{\mathbb{C}%
}})e^{-\left\vert H\right\vert ^{2}/(2\hbar)}}{\sqrt{\sigma_{\mathbb{C}}}%
}\otimes\sqrt{\sigma_{\mathbb{C}}\beta^{\prime}}.
\]
We then note that $(\sigma_{\mathbb{C}})(\left.  F\right\vert _{T_{\mathbb{C}%
}})$ is a Weyl-alternating holomorphic function, so that $\psi^{\prime}$ is
indeed an element of $\mathrm{Quant}(K_{\mathbb{C}}/\!\!/\mathrm{Ad}_{K}%
)\cong\mathrm{Quant}(T_{\mathbb{C}})^{W_{-}}.$ Note that the function
$(\sigma_{\mathbb{C}})(\left.  F\right\vert _{T_{\mathbb{C}}})$ occurs also on
the right-hand side of Theorem \ref{fmn.thm}.

\subsection{Relating the canonical bundles\label{relatingCanonical.sec}}

We begin with considering the relationship between the canonical bundles over
$K_{\mathbb{C}}$ and over the reduced phase space $T_{\mathbb{C}%
}^{\mathrm{rss}}/W.$ Let $n$ be the complex dimension of $K_{\mathbb{C}}$ and
$r$ the complex dimension of $T_{\mathbb{C}}.$ Suppose $b$ is a holomorphic
$n$-form on $K_{\mathbb{C}}$ that is invariant under the adjoint action of
$K$. We hope to associate to $b$ a holomorphic $r$-form $\tilde{b}$ on the
regular part of the reduced phase space,
\[
\phi^{-1}(0)^{\mathrm{reg}}/\mathrm{Ad}_{K}\cong T_{\mathbb{C}}^{\mathrm{rss}%
}/W.
\]
The only reasonable way to do this is to restrict $b$ to $\phi^{-1}%
(0)^{\mathrm{reg}}$ and then contract with $n-r$ vector fields to convert $b$
from a $n$-form to an $r$-form. The only reasonable choice for the vector
fields are the vector fields $X^{\eta},$ $\eta\in\mathfrak{k},$ describing the
infinitesimal adjoint action of $K$ on $\phi^{-1}(0)^{\mathrm{reg}}.$

We now investigate this contraction process in detail. For each $(x,\xi)$ in
$\phi^{-1}(0)^{\mathrm{reg}},$ let $S_{(x,\xi)}$ be the stabilizer of
$(x,\xi)$---which is a maximal torus in $K$ because $(x,\xi)$ is assumed
regular---and let $\mathfrak{s}_{(x,\xi)}$ be the Lie algebra of $S_{(x,\xi
)}.$ Let $\eta_{1},\ldots,\eta_{n-r}$ be an orthonormal basis for the
orthogonal complement of $\mathfrak{s}_{(x,\xi)}^{\bot}$ of $\mathfrak{s}%
_{(x,\xi)}.$ We may then consider the contraction%
\begin{equation}
\tilde{b}:=i_{X^{\eta_{1}}\wedge\cdots\wedge X^{\eta_{n-r}}}(b).
\label{contractBeta}%
\end{equation}
This contraction is easily seen to be unchanged if we replace $\eta_{1}%
,\ldots,\eta_{n-r}$ by another orthonormal basis for $\mathfrak{s}_{(x,\xi
)}^{\bot}$ with the same orientation, but changes sign if we replace $\eta
_{1},\ldots,\eta_{n-r}$ by an orthonormal basis with the opposite orientation.

We now consider to the issue of trying to choose the orientations on
$\mathfrak{s}_{(x,\xi)}^{\bot}$ consistently over $\phi^{-1}(0)^{\mathrm{reg}%
}.$

\begin{proposition}
\label{orientation.prop}For each $(x,\xi)$ in $\phi^{-1}(0)^{\mathrm{reg}},$
let $\mathfrak{s}_{(x,\xi)}$ denote the Lie algebra of the stabilizer of
$(x,\xi)$ and $\mathfrak{s}_{(x,\xi)}^{\bot}$ denote the orthogonal complement
of $\mathfrak{s}_{(x,\xi)}$ in $\mathfrak{k}.$

\begin{enumerate}
\item For each $(x_{0},\xi_{0})\in\phi^{-1}(0)^{\mathrm{reg}},$ we can find an
open, $\mathrm{Ad}_{K}$-invariant set $U\subset\phi^{-1}(0)^{\mathrm{reg}}$
containing $(x_{0},\xi_{0})$ such that the orientation of $\mathfrak{s}%
_{(x,\xi)}^{\bot}$ can be chosen in a continuous, $\mathrm{Ad}_{K}$-invariant
fashion for all $(x,\xi)\in U.$

\item The orientation of $\mathfrak{s}_{(x,\xi)}^{\bot}$ \emph{cannot} be
chosen in a continuous, $\mathrm{Ad}_{K}$-invariant fashion for all
$(x,\xi)\in\phi^{-1}(0)^{\mathrm{reg}}.$
\end{enumerate}
\end{proposition}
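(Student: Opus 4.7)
The plan is to reduce both parts to statements about the slice $T^\ast(T)^{\mathrm{reg}}$, along which $\mathfrak{s}_{(t,H)}^\perp$ is the constant subspace $\mathfrak{t}^\perp \subset \mathfrak{k}$, and then to analyze how $\mathrm{Ad}_y$ acts on $\mathfrak{t}^\perp$ for $y \in N(T)$ representing a Weyl element $w$. The ingredient that ties the two parts together is that $\det(\mathrm{Ad}_y|_{\mathfrak{t}^\perp}) = \mathrm{sign}(w)$: indeed $\mathrm{Ad}_y$ preserves the inner product, hence the decomposition $\mathfrak{k} = \mathfrak{t} \oplus \mathfrak{t}^\perp$; its determinant on $\mathfrak{k}$ is $+1$ since $K$ is connected; and its determinant on $\mathfrak{t}$ equals $\mathrm{sign}(w)$ by definition of the Weyl action.

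For part (1), I first use Proposition \ref{zeroSet.prop} to move $(x_0,\xi_0)$ into $T^\ast(T)^{\mathrm{reg}}$, say to $(t_0,H_0)$. Because $W$ acts freely on $T^\ast(T)^{\mathrm{reg}}$ (Proposition \ref{weylNonzero.prop}), I shrink to an open neighborhood $V \subset T^\ast(T)^{\mathrm{reg}}$ of $(t_0,H_0)$ with $V \cap (w\cdot V) = \emptyset$ for all $w \neq e$ in $W$, and set $U := \mathrm{Ad}_K \cdot V$. The $|W|$-to-one map of the proof of Theorem \ref{quotient.thm} restricts to a diffeomorphism $V \times (K/T) \to U$, since $y_1 \cdot p_1 = y_2 \cdot p_2$ with $p_1,p_2\in V$ forces $p_2$ to be a $W$-translate of $p_1$ and then the disjointness forces $p_1 = p_2$. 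Fix any orientation $o_0$ on $\mathfrak{t}^\perp$ and define $o(y\cdot p) := (\mathrm{Ad}_y)_\ast o_0$ for $p\in V$, $y\in K$. Well-definedness reduces to $y_2^{-1}y_1 \in T$, whose adjoint action on $\mathfrak{t}^\perp$ lies in $SO(\mathfrak{t}^\perp)$ because $T$ is connected, and continuity and $\mathrm{Ad}_K$-equivariance are immediate from the slice description $U \cong V \times (K/T)$.

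For part (2), I argue by contradiction: assume a continuous, $\mathrm{Ad}_K$-invariant orientation $o$ exists on all of $\phi^{-1}(0)^{\mathrm{reg}}$, and restrict it to $T^\ast(T)^{\mathrm{reg}}$. There the fiber $\mathfrak{s}_{(t,H)}^\perp = \mathfrak{t}^\perp$ is independent of the point, so $o$ becomes a continuous map from $T^\ast(T)^{\mathrm{reg}}$ into the two-element set of orientations of $\mathfrak{t}^\perp$. But $T^\ast(T)^{\mathrm{reg}} \cong T_{\mathbb{C}}^{\mathrm{rss}}$ is the complement of finitely many codimension-one analytic subsets of the connected complex manifold $T_{\mathbb{C}}$, hence connected, so the restriction is a single constant orientation $o_0$. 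Now take any simple reflection $s_\alpha$ with representative $y \in N(T)$ and any $(t,H)\in T^\ast(T)^{\mathrm{reg}}$. Then $y\cdot(t,H) = s_\alpha\cdot(t,H)$ again lies in the slice, so $o(y\cdot(t,H)) = o_0$, while $\mathrm{Ad}_K$-invariance forces $o(y\cdot(t,H)) = (\mathrm{Ad}_y)_\ast o_0$. The determinant computation in the opening paragraph gives $(\mathrm{Ad}_y)_\ast o_0 = \mathrm{sign}(s_\alpha)\,o_0 = -o_0$, contradicting $o_0 = -o_0$.

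The main obstacle is conceptual rather than technical: one must resist the temptation to patch a global orientation together, because the obstruction is exactly the nontriviality of the sign character $\mathrm{sign}: W \to \{\pm 1\}$ acting on $\mathfrak{t}^\perp$. This is the same Weyl sign that prevents $\beta^\prime$ from descending to $T_{\mathbb{C}}^{\mathrm{rss}}/W$ and forces the twist by $\sigma_{\mathbb{C}}$ in Definition \ref{QuantReduced.def}; the orientation failure proved here is precisely the geometric shadow of that twist, which is why the contraction in (\ref{contractBeta}) becomes a genuine global object only after being paired with the section $\sqrt{\sigma_{\mathbb{C}}\beta^\prime}$.
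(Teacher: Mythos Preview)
Your proof is correct and follows essentially the same approach as the paper: reduce to the slice $T^{\ast}(T)^{\mathrm{reg}}$ where $\mathfrak{s}_{(t,H)}^{\bot}=\mathfrak{t}^{\bot}$ is constant, use a $W$-disjoint neighborhood $V$ with $U=\mathrm{Ad}_{K}\cdot V\cong V\times(K/T)$ for Part~1, and for Part~2 exploit connectedness of $T^{\ast}(T)^{\mathrm{reg}}$ together with the determinant identity $\det(\mathrm{Ad}_{y}|_{\mathfrak{t}^{\bot}})=\mathrm{sign}(w)$, deduced from $\det(\mathrm{Ad}_{y}|_{\mathfrak{k}})=1$ and $\det(\mathrm{Ad}_{y}|_{\mathfrak{t}})=\mathrm{sign}(w)$. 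Your write-up is in fact slightly more explicit than the paper's in two places: you spell out the connectedness of $T_{\mathbb{C}}^{\mathrm{rss}}$ via real-codimension-two analytic hypersurfaces, and you front-load the determinant computation rather than leaving it for the end.
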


Note that if $(x,\xi)\in\phi^{-1}(0)$ and $y\in K,$ then%
\[
S_{y\cdot(x,\xi)}=yS_{(x,\xi)}y^{-1},
\]
so that $\mathfrak{s}_{y\cdot(x,\xi)}=\mathrm{Ad}_{y}(\mathfrak{s}_{(x,\xi)})$
and $\mathfrak{s}_{y\cdot(x,\xi)}^{\bot}=\mathrm{Ad}_{y}(\mathfrak{s}%
_{(x,\xi)}^{\bot}).$ Thus, $\mathrm{Ad}_{K}$-invariance in the choice of
orientation would mean, explicitly, that $\mathrm{Ad}_{y},$ viewed as a map
from $\mathfrak{s}_{(x,\xi)}^{\bot}$ to $\mathfrak{s}_{y\cdot(x,\xi)}^{\bot}$,
is orientation preserving.

\begin{proof}
For Point 1, it is harmless to assume that $(x_{0,}\xi_{0})$ is equal to a
point $(t_{0},H_{0})$ in $T^{\ast}(T)^{\mathrm{reg}}.$ Since $W$ acts freely
on $T^{\ast}(T)^{\mathrm{reg}},$ we can take a neighborhood $V$ of
$(t_{0},H_{0})$ in $T^{\ast}(T)^{\mathrm{reg}}$ such that $V\cap w\cdot
V=\varnothing$ for all $w\neq1.$ Then $U:=\mathrm{Ad}_{K}\cdot V$ is an open
set in $\phi^{-1}(0)^{\mathrm{reg}}$ homeomorphic to $V\times(K/T).$ Each
stabilizer of $(t,H)\in V$ is simply $T,$ so we may choose orientations over
$V$ by using one fixed orientation on $\mathfrak{t}^{\bot}.$ Since the
stabilizer (i.e., $T$) of each point $(t,H)$ in $V$ is connected, the action
of the stabilizer on $\mathfrak{s}_{(t,H)}^{\bot}=\mathfrak{t}^{\bot}$ is
orientation preserving. This fact guarantees that we can extend the choice of
orientation from $V$ to $U$ in an unambiguous, invariant fashion.

For Point 2, note that we may make a continuous choice of orientation over
$T^{\ast}(T)^{\mathrm{reg}}$ simply by using one fixed orientation on
$\mathfrak{t}^{\bot}.$ Now, it is easily verified that $T^{\ast}%
(T)^{\mathrm{reg}}$ is connected; it follows that using one fixed orientation
on $\mathfrak{t}^{\bot}$ is the \textit{unique} continuous choice of
orientations over $T^{\ast}(T)^{\mathrm{reg}}.$

Now fix a Weyl group element $w$ with $\mathrm{sign}(w)=-1$ and pick a
representative $y$ of $w$ in $N(T).$ Then $\mathrm{Ad}_{y},$ viewed as a map
from $\mathfrak{t}$ to itself, is orientation reversing. But by the
connectedness of $K,$ $\mathrm{Ad}_{y},$ viewed as a map of $\mathfrak{k}$ to
itself, is orientation preserving. Thus, $\mathrm{Ad}_{y}$, viewed as a map of
$\mathfrak{t}^{\bot}$ to itself must be orientation reversing. Thus, any
continuous choice of orientation even over $T^{\ast}(T)^{\mathrm{reg}}$ fails
to be invariant under the adjoint action of $N(T)\subset K.$
\end{proof}

We now come to a key computation that ultimately explains the geometric origin
of the analytically continued Weyl denominator $\sigma_{\mathbb{C}}$ in the
\textquotedblleft quantization commutes with reduction\textquotedblright\ map.
To state our result, we now fix the normalization of the left-invariant
holomorphic forms $\beta$ and $\beta^{\prime}$ on $K_{\mathbb{C}}$ and
$T_{\mathbb{C}}.$ Let us fix an orientation of $\mathfrak{t}$ and an
orientation of $\mathfrak{k}$. This then determines an orientation of
$\mathfrak{t}^{\bot}$: If we take an oriented orthonormal basis $\eta
_{1},\ldots,\eta_{r}$ for $\mathfrak{t}$ and extend it to an oriented
orthonormal basis $\eta_{1},\ldots,\eta_{n}$ for $\mathfrak{k},$ then
$\eta_{r+1},\ldots,\eta_{n}$ should be an oriented basis for $\mathfrak{t}%
^{\bot}.$ Let us normalize $\beta$ and $\beta^{\prime}$ so that at the
identity, we have
\begin{equation}
\beta^{\prime}(\eta_{1},\ldots,\eta_{r})=1;\quad\beta(\zeta_{1},\ldots
,\zeta_{n})=1 \label{betaNormalizations}%
\end{equation}
whenever $\eta_{1},\ldots,\eta_{r}$ and $\zeta_{1},\ldots,\zeta_{n}$ are
oriented orthonormal bases for $\mathfrak{t}$ and $\mathfrak{k},$
respectively. Note that $\beta$ and $\beta^{\prime}$ are defined on the
\textit{complex} vector spaces $\mathfrak{k}_{\mathbb{C}}$ and $\mathfrak{t}%
_{\mathbb{C}},$ respectively, but that the normalizations are fixed on bases
of the underlying \textit{real} vector spaces $\mathfrak{k}$ and
$\mathfrak{t}.$

\begin{proposition}
Consider a point $(x,\xi)\in\phi^{-1}(0)^{\mathrm{reg}}$. Let $U\subset
\phi^{-1}(0)^{\mathrm{reg}}$ be an open, $\mathrm{Ad}_{K}$-invariant set
containing $(x,\xi)$ as in Proposition \ref{orientation.prop}, and let us fix
a continuous, $\mathrm{Ad}_{K}$-invariant choice of orientation on
$\mathfrak{s}_{(x,\xi)}^{\bot},$ $(x,\xi)\in U.$ Let $b$ be a holomorphic,
$\mathrm{Ad}_{K}$-invariant $n$-form on $T^{\ast}(K)\cong K_{\mathbb{C}}$ and
let $\tilde{b}$ be the $r$-form on $U$ defined by (\ref{contractBeta}). Then
$\tilde{b}$ descends to a holomorphic $r$-form on $U^{\prime},$ the image of
$U$ in $\phi^{-1}(0)^{\mathrm{reg}}/\mathrm{Ad}_{K}\cong T_{\mathbb{C}%
}^{\mathrm{rss}}/W$.

Suppose, specifically, that $b=\beta$. Let $[z_{0}]$ be a point in $U^{\prime
}\subset T_{\mathbb{C}}^{\mathrm{rss}}/W,$ and let $z_{0}$ be a representative
of $z$ in $T_{\mathbb{C}}^{\mathrm{rss}}.$ If we then identify a neighborhood
of $[z_{0}]$ in $T_{\mathbb{C}}^{\mathrm{rss}}/W$ with a neighborhood of
$z_{0}$ in $T_{\mathbb{C}}^{\mathrm{rss}},$ we have%
\begin{equation}
\tilde{\beta}(z)=\pm\sigma_{\mathbb{C}}(z)^{2}\beta^{\prime}(z),\quad z\in
T_{\mathbb{C}}, \label{betaTilde}%
\end{equation}
where the sign depends both on the choice of orientations on $U$ and on the
choice of the representative $z$ of $[z].$
\end{proposition}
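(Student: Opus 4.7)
The plan is to show $\tilde b$ descends to $U'$ by verifying it is basic with respect to the $K$-action, and then to compute the coefficient in (\ref{betaTilde}) by a direct calculation in $\mathfrak{k}_{\mathbb C}$ using the root-space decomposition. For the descent: $\mathrm{Ad}_K$-invariance of $\tilde b$ follows from the invariance of $b$ together with the $\mathrm{Ad}_K$-invariance of the chosen orientation on $\mathfrak{s}_{(x,\xi)}^\perp$. For each $\eta\in\mathfrak{k}$, the contraction $i_{X^\eta}\tilde b$ vanishes: if $\eta\in\mathfrak{s}_{(x,\xi)}$ then $X^\eta(x,\xi)=0$, and otherwise $X^\eta(x,\xi)$ lies in the span of the $X^{\eta_j}(x,\xi)$, so wedging $n-r+1$ vectors in an $(n-r)$-dimensional space gives zero. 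Combined with the decomposition of the tangent space of $\phi^{-1}(0)^{\mathrm{reg}}$ established in the proof of Theorem \ref{quotient.thm}, this means $\tilde b$ is the pull-back of a well-defined $r$-form on $U'$. Holomorphicity of $\tilde b$ follows because $b$ is holomorphic and the adjoint action of $K$ on $K_{\mathbb C}$ extends to a holomorphic action of $K_{\mathbb C}$.

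For the explicit formula with $b=\beta$, I would evaluate $\tilde\beta(z)$ on tangent vectors $H_1,\ldots,H_r$ to $T_{\mathbb C}$ at $z$. Using the left-$K_{\mathbb C}$-invariance of $\beta$ and the identity $L_{z^{-1}*}X^{\eta}(z)=\mathrm{Ad}_{z^{-1}}\eta-\eta$ (compare Section \ref{momentumMap.sec}), the problem reduces to computing
\[
\beta(e)\bigl(\mathrm{Ad}_{z^{-1}}\eta_1-\eta_1,\,\ldots,\,\mathrm{Ad}_{z^{-1}}\eta_{n-r}-\eta_{n-r},\,H'_1,\ldots,H'_r\bigr),
\]
with $H'_i\in\mathfrak{t}_{\mathbb C}$. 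Choose the orthonormal basis $\{\eta_j\}$ of $\mathfrak{t}^\perp$ compatible with the root-space decomposition: for each $\alpha\in R^+$ pick real orthonormal $X_\alpha,Y_\alpha\in\mathfrak{t}^\perp$ such that $X_\alpha\pm iY_\alpha$ are proportional to root vectors $E_{\pm\alpha}$ with $\mathrm{Ad}_{z^{-1}}E_{\pm\alpha}=\phi_\alpha(z)^{\mp 1}E_{\pm\alpha}$.

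A one-line root-pair calculation then gives
\[
(\mathrm{Ad}_{z^{-1}}X_\alpha-X_\alpha)\wedge(\mathrm{Ad}_{z^{-1}}Y_\alpha-Y_\alpha)=(1-\phi_\alpha(z))(1-\phi_\alpha(z)^{-1})\,X_\alpha\wedge Y_\alpha,
\]
and taking the product over $\alpha\in R^+$ yields the scalar $\prod_{\alpha\in R^+}(1-\phi_\alpha)(1-\phi_\alpha^{-1})$. Using the identity $(\phi_\alpha^{1/2}-\phi_\alpha^{-1/2})^2=-(1-\phi_\alpha)(1-\phi_\alpha^{-1})$ and the product form (\ref{weylDenom1}) of $\sigma_{\mathbb C}$, this product equals $(-1)^m\sigma_{\mathbb C}(z)^2$. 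Combined with the normalizations (\ref{betaNormalizations}), the remaining factor $\beta(e)(X_{\alpha_1},Y_{\alpha_1},\ldots,X_{\alpha_m},Y_{\alpha_m},H'_1,\ldots,H'_r)$ reduces to $\beta'(z)(H_1,\ldots,H_r)$ up to an overall sign coming from whether the ordering $X_{\alpha_1},Y_{\alpha_1},\ldots,X_{\alpha_m},Y_{\alpha_m}$ is the chosen orientation of $\mathfrak{t}^\perp$.

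The main obstacle is bookkeeping the various signs rather than any deep difficulty in the computation. The $(-1)^m$ produced by the root-pair identity, the sign from re-ordering the basis of $\mathfrak{t}^\perp$, and the sign produced by choosing a different Weyl representative of $[z]$ (which changes $\beta'(z)$ by $\mathrm{sign}(w)$ while leaving $\sigma_{\mathbb C}^2$ invariant) all collapse into the single $\pm$ in (\ref{betaTilde}). This sign cannot be removed globally on $\phi^{-1}(0)^{\mathrm{reg}}$, by part 2 of Proposition \ref{orientation.prop}; this is precisely why Definition \ref{QuantReduced.def} uses the trivialization $\sqrt{\sigma_{\mathbb C}\beta'}$ of $\mathcal{K}_{1/2}^\prime$ to absorb the ambiguity and produce a globally defined $B$.
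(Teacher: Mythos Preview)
Your proposal is correct and follows essentially the same route as the paper: establish that $\tilde b$ is $\mathrm{Ad}_K$-invariant and horizontal (hence descends), then at a point $z\in T_{\mathbb C}^{\mathrm{rss}}$ use left-invariance of $\beta$ together with $X^\eta(z)=\mathrm{Ad}_{z^{-1}}\eta-\eta$ to reduce the formula to computing $\det(\mathrm{Ad}_{z^{-1}}|_{\mathfrak t_{\mathbb C}^\perp}-I)=(-1)^m\sigma_{\mathbb C}(z)^2$. The only differences are organizational: the paper computes this determinant by listing the eigenvalues $\phi_\alpha(z^{-1})-1$ over all $\alpha\in R$, whereas you compute it pairwise in a real basis $\{X_\alpha,Y_\alpha\}$; and the paper verifies holomorphicity from the explicit formula for $\tilde\beta$ (after reducing to $b=F\beta$), rather than invoking the holomorphic extension of the adjoint action as you do.
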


Note that since the function $\sigma_{\mathbb{C}}^{2}$ on $T_{\mathbb{C}}$ is
Weyl invariant and the form $\beta^{\prime}$ is Weyl alternating, the form
$\sigma_{\mathbb{C}}^{2}\beta^{\prime}$ is Weyl alternating. We thus see very
clearly the effect of the nonexistence of a global choice of orientation over
$\phi^{-1}(0)^{\mathrm{reg}}$: The form $\tilde{\beta}$ in (\ref{betaTilde})
is not a Weyl-invariant form on $T_{\mathbb{C}}^{\mathrm{rss}}$ and it
therefore does not descend to a form on $T_{\mathbb{C}}^{\mathrm{rss}}/W.$ On
the other hand, if $z_{0}\in T_{\mathbb{C}}^{\mathrm{rss}},$ we can pick a
small neighborhood $V$ of $z_{0}$ such that the sets $w\cdot V,$ $w\in W,$ are
disjoint. Then there \textit{does} exist a Weyl-invariant form on the union of
the $w\cdot V$'s whose restriction to $V$ is $\sigma_{\mathbb{C}}(z)^{2}%
\beta^{\prime}(z),$ namely the one whose restriction to $w\cdot V$ is
$\mathrm{sign}(w)\sigma_{\mathbb{C}}(z)^{2}\beta^{\prime}(z).$

\begin{proof}
Let $b$ be as in the first part of the proposition. Define a form $\tilde{b}$
on $U\subset\phi^{-1}(0)^{\mathrm{reg}}$ by (\ref{contractBeta}). We think of
$\tilde{b}$ as a locally defined form on $\phi^{-1}(0)^{\mathrm{reg}},$
meaning that we only plug into $\tilde{b}$ vectors that are tangent to
$\phi^{-1}(0)^{\mathrm{reg}}.$ Since $b$ is assumed to be invariant under the
adjoint action of $K$ and since the orientations over $U$ are chosen
invariantly, it is easy to check that $\tilde{b}$ is invariant under the
adjoint action of $K$ on $U.$

Fix some $(x,\xi)$ in $\phi^{-1}(0)^{\mathrm{reg}},$ let $V$ denote the
tangent space to $\phi^{-1}(0)^{\mathrm{reg}}$ at $(x,\xi)$ and let $W\subset
V$ denote the tangent space to the $\mathrm{Ad}_{K}$-orbit through $(x,\xi).$
Then $\tilde{b}(Y_{1},\ldots,Y_{r})=0$ if even one of the $Y_{j}$'s is in $W.$
(After all, $\tilde{b}$ is obtained from $b$ by contracting with a basis
$X^{\eta_{1}},\ldots,X^{\eta_{n-r}}$ for $W$.) Thus, $\tilde{b}$ descends to a
$r$-linear, alternating form on $V/W,$ which is just the tangent space to the
reduced phase space. Since $\tilde{b}$ is invariant under adjoint action of
$K,$ the value of $\tilde{b}$ at a point in the reduced space is independent
of the choice of point in the corresponding $K$-orbit in $\phi^{-1}%
(0)^{\mathrm{reg}}.$

It is presumably possible to verify that $\tilde{b}$ is holomorphic by an
argument similar to the one in \cite[Section 3.2]{HallKirwin}. In this
situation, however, we can work by direct computation. We first note that if
$F$ is an $\mathrm{Ad}_{K}$-invariant holomorphic function on $K_{\mathbb{C}%
},$ then the restriction of $F$ to $\phi^{-1}(0)^{\mathrm{reg}}$ is also
$\mathrm{Ad}_{K}$-invariant, so that this restriction descends to a function
$\tilde{F}$ on $\phi^{-1}(0)^{\mathrm{reg}}/\mathrm{Ad}_{K}.$ It should be
clear from the way the complex structure on $\phi^{-1}(0)^{\mathrm{reg}%
}/\mathrm{Ad}_{K}$ is defined that $\tilde{F}$ is again holomorphic.
(Explicitly, if we identify $\phi^{-1}(0)^{\mathrm{reg}}/\mathrm{Ad}_{K}$ with
$T_{\mathbb{C}}^{\mathrm{rss}}/W,$ the $\tilde{F}$ is simply the function on
$T_{\mathbb{C}}^{\mathrm{rss}}/W$ obtained from the restriction of $F$ to
$T_{\mathbb{C}}^{\mathrm{rss}},$ which is again holomorphic.) Now, every
$\mathrm{Ad}_{K}$-invariant holomorphic $n$-form on $K_{\mathbb{C}}$ is
expressible as an $\mathrm{Ad}_{K}$-invariant holomorphic function times the
form $\beta.$ It thus suffices to check holomorphicity in the case $b=\beta,$
which we do in what follows.

Under our standing identification (\ref{PhiDef}) of $T^{\ast}(K)$ with
$K_{\mathbb{C}},$ the adjoint action of $K$ on $T^{\ast}(K)$ corresponds to
the adjoint action of $K$ on $K_{\mathbb{C}}.$ We identify the tangent space
at each point in $K_{\mathbb{C}}$ with $\mathfrak{k}_{\mathbb{C}}$ by means of
left translation. With this identification, the value of the vector field
$X^{\eta}$ at a point $g\in K_{\mathbb{C}}$ is easily computed to be
\[
X^{\eta}=\mathrm{Ad}_{g^{-1}}(\eta)-\eta.
\]
Suppose now that $z\in T_{\mathbb{C}}^{\mathrm{rss}}$ and that $\eta
\in\mathfrak{t}^{\bot}.$ Then%
\begin{equation}
X^{\eta}=\mathrm{Ad}_{z^{-1}}(\eta)-\eta\label{XetaFormula}%
\end{equation}
is easily seen to lie in $\mathfrak{t}_{\mathbb{C}}^{\bot}.$

Now let $\eta_{1},\ldots,\eta_{n-r}$ be an oriented orthonormal basis for
$\mathfrak{s}_{z}^{\bot}=\mathfrak{t}^{\bot}.$ Then%
\begin{align}
&  \beta(X^{\eta_{1}},\ldots,X^{\eta_{n-r}},Y_{1},\ldots,Y_{r})\nonumber\\
&  =\det(A)\beta(\eta_{1},\ldots,\eta_{n-r},Y_{1},\ldots,Y_{r})\nonumber\\
&  =\det(A)\beta(Y_{1},\ldots,Y_{r},\eta_{1},\ldots,\eta_{n-r})
\label{betaContract}%
\end{align}
for all $Y_{1},\ldots,Y_{r}\in\mathfrak{k}_{\mathbb{C}},$ where
$A:\mathfrak{t}_{\mathbb{C}}^{\bot}\rightarrow\mathfrak{t}_{\mathbb{C}}^{\bot
}$ is the unique linear transformation such that $A(\eta_{j})=X^{\eta_{j}}.$
(There is no minus sign in the second equality because $n-r$ is the number of
roots, which is even.) From (\ref{XetaFormula}), we can identify $A$ as%
\[
A=\mathrm{Ad}_{z^{-1}}^{\prime}-I,
\]
where $\mathrm{Ad}_{z^{-1}}^{\prime}$ is the restriction of $\mathrm{Ad}%
_{z^{-1}}$ to $\mathfrak{t}_{\mathbb{C}}^{\bot}.$

Now, the eigenvalues of $\mathrm{Ad}_{z^{-1}}^{\prime}$ are the numbers of the
form $\phi_{\alpha}(z^{-1}),$ $\alpha\in R,$ where $\phi_{\alpha}$ is the root
homomorphism in (\ref{phiAlphaDef}). Thus,%
\[
\det(A)=\prod_{\alpha\in R}(\phi_{\alpha}(z^{-1})-1).
\]
After group the roots into pairs, $\{\alpha,-\alpha\}$ with $\alpha\in R^{+},$
this result simplifies to
\begin{align*}
\det(A)  &  =\prod_{\alpha\in R^{+}}[(e^{i\left\langle \alpha,H\right\rangle
/2}-e^{-i\left\langle \alpha,H\right\rangle /2})(e^{-i\left\langle
\alpha,H\right\rangle /2}-e^{i\left\langle \alpha,H\right\rangle /2})]\\
&  =(-1)^{m}\sigma_{\mathbb{C}}(z)^{2},
\end{align*}
where $m$ is the number of positive roots. In particular, if $Y_{1}%
,\ldots,Y_{r}$ are in $\mathfrak{t}_{\mathbb{C}},$ we have
\[
\beta(X^{\eta_{1}},\ldots,X^{\eta_{n-r}},Y_{1},\ldots,Y_{r})=(-1)^{m}%
\sigma_{\mathbb{C}}(z)^{2}\beta^{\prime}(Y_{1},\ldots,Y_{r}).
\]
This equality certainly holds up to a constant because both sides are
$r$-linear alternating functions of $Y_{1},\ldots,Y_{r}.$ The constant may
then be checked from (\ref{betaContract}) when $Y_{1},\ldots,Y_{r}$ form an
oriented orthonormal basis for $\mathfrak{t}.$
\end{proof}

\subsection{Relating the half-form bundles\label{relatingHalf.sec}}

We now observe that the locally defined contraction process on sections of the
canonical bundle $\mathcal{K}$ extends to a locally defined contraction
process on sections of $\mathcal{K}_{1/2}.$ The idea is simple. We start with
an invariant holomorphic section $c$ of $\mathcal{K}_{1/2}$ and square it to
an invariant holomorphic section $b:=c\otimes c$ of $\mathcal{K}.$ Then we
contract $b$ to a locally defined section $b^{\prime}$ of $\mathcal{K}%
^{\prime}.$ Finally, we look for a locally defined holomorphic section
$c^{\prime}$ of $\mathcal{K}_{1/2}^{\prime}$ with $c^{\prime}\otimes
c^{\prime}=b^{\prime}.$ It is easy to see that the preceding procedure can be
carried out locally, which is all that we hope for at the moment. (In the
setting of \cite{HallKirwin}, this contraction process on the half-form
bundles can be done globally; see Theorem 3.1 there.)

Using the computations in the previous subsection, we can read off the results
of the contraction process on the half-form bundles, as follows.

\begin{proposition}
The contraction process on the canonical bundles induces a locally defined
contraction process on the half-form bundles. For each $[z_{0}]\in
T_{\mathbb{C}}^{\mathrm{rss}}/W,$ we choose a representative $z_{0}$ of
$[z_{0}]$ in $T_{\mathbb{C}}^{\mathrm{rss}},$ which we use to identify
$T_{\mathbb{C}}^{\mathrm{rss}}/W$ locally with $T_{\mathbb{C}}^{\mathrm{rss}%
}.$ Then the contraction process on half-forms is determined by its action on
$\sqrt{\beta},$ which is given by
\begin{equation}
\sqrt{\beta}\mapsto\pm\sigma_{\mathbb{C}}\sqrt{\beta^{\prime}}.
\label{contractSqrt}%
\end{equation}

\end{proposition}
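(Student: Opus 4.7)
The plan is to follow the three-step recipe sketched in the paragraph preceding the proposition: square, contract, then take a square root. Locally on $\phi^{-1}(0)^{\mathrm{reg}}$, every holomorphic section of $\mathcal{K}_{1/2}$ can be written as $c = G\sqrt{\beta}$ for a holomorphic function $G$, since $\sqrt{\beta}$ is a nowhere-vanishing trivializing section. Squaring gives the canonical-bundle section $c \otimes c = G^{2}\beta$. Applying the contraction on canonical bundles already established in the previous proposition produces, on a small open set $U^{\prime} \subset T_{\mathbb{C}}^{\mathrm{rss}}/W$ (identified locally with an open set in $T_{\mathbb{C}}^{\mathrm{rss}}$ via a chosen representative), the $r$-form $\pm G^{2}\sigma_{\mathbb{C}}^{2}\beta^{\prime}$, where the sign records the choice of orientation on $\mathfrak{s}_{(x,\xi)}^{\bot}$.

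The key observation that makes the square-root step go through is that $\sigma_{\mathbb{C}}$ is \emph{nowhere vanishing} on $T_{\mathbb{C}}^{\mathrm{rss}}$ (Proposition \ref{weylNonzero.prop}), and of course $G$ is the square of a holomorphic function only locally. But locally, the quantity $\pm G^{2}\sigma_{\mathbb{C}}^{2}\beta^{\prime}$ is an $\mathcal{O}^{\times}$-multiple of $\beta^{\prime}$, so it admits a holomorphic square root in the half-form bundle $\mathcal{K}_{1/2}^{\prime}$ (or rather in its local pullback, before passing to the quotient); one such square root is $\pm G\sigma_{\mathbb{C}}\sqrt{\beta^{\prime}}$. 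Declaring this to be the image of $c$ defines the contraction on half-forms, and in the special case $G \equiv 1$ one recovers exactly (\ref{contractSqrt}).

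It remains to verify that the recipe is well defined as a map between line bundles rather than merely depending on the chosen trivializations. The square and square-root steps are mutually inverse on the relevant local multiplicative groups, so the composition $c \mapsto c'$ depends $\mathbb{C}$-linearly on $c$ and is independent of the particular local trivialization used to carry it out; holomorphicity is preserved because both the squaring map and the canonical-bundle contraction preserve holomorphicity (the latter by the previous proposition), and the local square root of a nowhere-vanishing holomorphic section is again holomorphic.

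The only real obstacle is bookkeeping the sign ambiguity. Two independent binary choices enter: the orientation on $\mathfrak{s}_{(x,\xi)}^{\bot}$ (which by Proposition \ref{orientation.prop} cannot be globalized), and the branch of the square root when passing from $\sigma_{\mathbb{C}}^{2}\beta^{\prime}$ back to $\sigma_{\mathbb{C}}\sqrt{\beta^{\prime}}$. Both ambiguities are harmless for the present statement: they collapse into the single $\pm$ in (\ref{contractSqrt}), and both will be absorbed globally later when we combine the half-form contraction with the restriction map on the prequantum bundle $L$, where the other $\sqrt{\sigma_{\mathbb{C}}}$-style factor from Definition \ref{QuantReduced.def} compensates to yield a genuinely globally defined map $B$.
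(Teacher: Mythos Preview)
Your proposal is correct and follows essentially the same approach as the paper: square the half-form to land in the canonical bundle, apply the contraction formula $\tilde{\beta}=\pm\sigma_{\mathbb{C}}^{2}\beta^{\prime}$ from the preceding proposition, and then take a local holomorphic square root to obtain $\pm\sigma_{\mathbb{C}}\sqrt{\beta^{\prime}}$. The paper's argument is in fact just the short paragraph preceding the proposition together with the remark about the sign afterward; your write-up simply elaborates a bit more on well-definedness and the bookkeeping of the two sign ambiguities, which the paper leaves implicit.
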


At the moment, the sign in (\ref{contractSqrt}) is undefined, since the
section $c^{\prime}$ in the above description is only unique up to a sign.

\subsection{Relating the prequantum bundles}

We now construct a natural \textit{local} map from the space of invariant
holomorphic sections of the prequantum bundle $L$ over $T^{\ast}(K)\cong
K_{\mathbb{C}}$ to the space of holomorphic sections of the prequantum bundle
$L^{\prime}$ over $T^{\ast}(T)^{\mathrm{reg}}/W\cong T_{\mathbb{C}%
}^{\mathrm{rss}}/W.$ Suppose $f$ is an invariant section of $L,$ that is, that
$Q_{\mathrm{pre}}(\phi_{\eta})f=0$ for all $\eta\in\mathfrak{k}.$ Then by
(\ref{QpreTheta}), $f$ is a function that is invariant under the adjoint
action of $K.$ Invariant \textit{holomorphic} sections of $L$ then have the
form $Fe^{-\left\vert \xi\right\vert ^{2}/(2\hbar)},$ where $F$ is an
$\mathrm{Ad}_{K}$-invariant holomorphic function.

Pick a point $[z_{0}]$ in $T_{\mathbb{C}}^{\mathrm{rss}}/W$ and a
representative $z_{0}$ of $[z_{0}]$ in $T_{\mathbb{C}}^{\mathrm{rss}}.$ Let us
identify a small neighborhood of $[z_{0}]\in T_{\mathbb{C}}^{\mathrm{rss}}/W$
with a neighborhood $V$ of $z\in T_{\mathbb{C}}^{\mathrm{rss}},$ where $V$ is
chosen to be simply connected and so that the sets $w\cdot V,~w\in W,$ are
disjoint. Pick a local holomorphic square root of $\sigma_{\mathbb{C}}$ on
$V$, denoted $[\sigma_{\mathbb{C}}]^{1/2}.$ Then to each invariant holomorphic
section $f=Fe^{-\left\vert \xi\right\vert ^{2}/(2\hbar)}$ of $L,$ we associate
the function
\begin{equation}
f^{\prime}(z)=[\sigma_{\mathbb{C}}(z)]^{1/2}F(z)e^{-\left\vert H\right\vert
^{2}/(2\hbar)},\quad z\in V, \label{fprime}%
\end{equation}
which we also write as the formal object%
\[
\frac{\lbrack\sigma_{\mathbb{C}}(z)]^{1/2}F(z)e^{-\left\vert H\right\vert
^{2}/(2\hbar)}}{\sqrt{\sigma_{\mathbb{C}}}}.
\]
Note that the $\sqrt{\sigma_{\mathbb{C}}}$ in the denominator is just a formal
expression that reminds of the definition of the Hermitian structure and
connection on $L^{\prime}.$ The factor of $[\sigma_{\mathbb{C}}(z)]^{1/2},$ by
contrast, is an actual holomorphic function on $V\subset T_{\mathbb{C}%
}^{\mathrm{rss}}.$ Now the function $f^{\prime}$ on $V$ extends to a
$W$-alternating function on $W\cdot V,$ so that $f^{\prime}$ may be thought of
as a locally defined holomorphic section of $L^{\prime}$. (Recall that
sections of $L^{\prime}$ are by definition $W$-alternating functions on
$T_{\mathbb{C}}^{\mathrm{rss}}.$)

The motivation for the factor of $[\sigma_{\mathbb{C}}(z)]^{1/2}$ in
(\ref{fprime}) is this: The correspondence $f\mapsto f^{\prime}$ preserves the
pointwise magnitude of sections, in the sense that if $[z]\in T_{\mathbb{C}%
}^{\mathrm{rss}}/W$ comes from the $\mathrm{Ad}_{K}$-orbit of $(x,\xi)\in
\phi^{-1}(0)^{\mathrm{reg}},$ then
\[
\left\vert f^{\prime}(z)\right\vert _{L^{\prime}}=|f(x,\xi)|.
\]
(Recall from (\ref{TcHermitian}) that the pointwise magnitude of a section of
$L$ is simply the absolute value of the corresponding function.)

As in the case of the map between sections of the half-form bundle, the
correspondence $f\mapsto f^{\prime}$ \textit{does not} extend to a globally
defined map of invariant holomorphic sections of $L$ to holomorphic sections
of $L^{\prime},$ because $[\sigma_{\mathbb{C}}]^{1/2}$ does not extend to a
globally defined $W$-alternating holomorphic function on $T_{\mathbb{C}%
}^{\mathrm{rss}}.$

\subsection{The map\label{theMap.sec}}

Recall from Section \ref{reducedQuant.sec} that elements of $\mathrm{Quant}%
(K_{\mathbb{C}}/\!\!/\mathrm{Ad}_{K})$ are holomorphic sections of the bundle
$L^{\prime}\otimes\mathcal{K}_{1/2}^{\prime}$ over $T_{\mathbb{C}%
}^{\mathrm{rss}}/W.$ We write such a section in the form%
\[
\frac{Fe^{-\left\vert H\right\vert ^{2}/(2\hbar)}}{\sqrt{\sigma_{\mathbb{C}}}%
}\otimes\sqrt{\sigma_{\mathbb{C}}\beta^{\prime}},
\]
or in the suggestive alternative form%
\[
Fe^{-\left\vert H\right\vert ^{2}/(2\hbar)}\otimes\sqrt{\beta^{\prime}},
\]
where $F$ is a $W$-alternating holomorphic function on $T_{\mathbb{C}%
}^{\mathrm{rss}}.$ We now show that the \textit{local} mappings described in
the two previous sections combine into a \textit{global} mapping $B$ of
invariant holomorphic sections of $L\otimes\mathcal{K}_{1/2}$ to holomorphic
sections $L^{\prime}\otimes\mathcal{K}_{1/2}^{\prime}.$ After suitable
identifications, $B$ will coincide with the correspondence in the theorem of
Florentino, Mour\~{a}o, and Nunes (Theorem \ref{fmn.thm}), showing that $B$ is
a multiple of a unitary map.

If $\psi$ is an invariant section of $L\otimes\mathcal{K}_{1/2},$ we write%
\[
\psi=Fe^{-\left\vert \xi\right\vert ^{2}/(2\hbar)}\otimes\sqrt{\beta}.
\]
If we apply the local mappings of the two previous subsections to
$Fe^{-\left\vert \xi\right\vert ^{2}/(2\hbar)}$ and to $\sqrt{\beta},$ we
obtain the local section%
\[
\psi^{\prime}=\frac{[\sigma_{\mathbb{C}}(z)]^{1/2}F(z)e^{-\left\vert
H\right\vert ^{2}/(2\hbar)}}{\sqrt{\sigma_{\mathbb{C}}}}\otimes(\sigma
_{\mathbb{C}}\sqrt{\beta^{\prime}}).
\]
Moving a factor of $[\sigma_{\mathbb{C}}]^{1/2}$ from right to left in the
tensor product allows us to rewrite this as%
\[
\psi^{\prime}=\frac{\sigma_{\mathbb{C}}(z)F(z)e^{-\left\vert H\right\vert
^{2}/(2\hbar)}}{\sqrt{\sigma_{\mathbb{C}}}}\otimes\sqrt{\sigma_{\mathbb{C}%
}\beta^{\prime}}%
\]
or, suggestively, as
\[
\psi^{\prime}=(\sigma_{\mathbb{C}}(z)F(z))e^{-\left\vert H\right\vert
^{2}/(2\hbar)}\otimes\sqrt{\beta^{\prime}}.
\]

We now observe that $\psi^{\prime}$ is actually a globally defined holomorphic
section of $L^{\prime}\otimes\mathcal{K}_{1/2}^{\prime}.$ After all, if $F$ is
a holomorphic class function on $K_{\mathbb{C}},$ then the restriction of $F$
to $T_{\mathbb{C}}^{\mathrm{rss}}$ is Weyl invariant, so that the function%
\[
(\sigma_{\mathbb{C}})(\left.  F\right\vert _{T_{\mathbb{C}}^{\mathrm{rss}}})
\]
is Weyl alternating.

\begin{theorem}
[Quantization commutes with reduction]The locally defined maps in the previous
subsections combine to give a globally defined map from the space of invariant
holomorphic sections of $L\otimes\mathcal{K}_{1/2}$ to the space of
holomorphic sections of $L^{\prime}\otimes\mathcal{K}_{1/2}^{\prime}.$ Thus,
we obtain a geometrically natural map
\[
B:\mathrm{Quant}(K_{\mathbb{C}})^{\mathrm{Ad}_{K}}\rightarrow\mathrm{Quant}%
(K_{\mathbb{C}}/\!\!/\mathrm{Ad}_{K})\cong\mathrm{Quant}(T_{\mathbb{C}%
})^{W_{-}},
\]
which may be computed explicitly as follows:%
\[
B(Fe^{-\left\vert \xi\right\vert ^{2}/(2\hbar)}\otimes\sqrt{\beta}%
)=\frac{(\sigma_{\mathbb{C}}F)e^{-\left\vert H\right\vert ^{2}/(2\hbar)}%
}{\sqrt{\sigma_{\mathbb{C}}}}\otimes\sqrt{\sigma_{\mathbb{C}}\beta^{\prime}}.
\]
The map $B$ is a constant multiple of a unitary map.
\end{theorem}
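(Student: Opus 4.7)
The plan is to verify the theorem in two stages: first, check that the formal computation of $\psi'$ actually defines a single, globally well-defined holomorphic section of $L'\otimes\mathcal{K}_{1/2}'$ (with no residual dependence on the local choices of $\sqrt{\sigma_{\mathbb{C}}}$, of the representative $z_0$ of $[z_0]$, or of the orientation on $\mathfrak{s}_{(x,\xi)}^{\perp}$); second, compute the norm of $B\psi$ directly from Definition \ref{QuantReduced.def} and identify the result with the right-hand side of Theorem \ref{fmnIntro.thm}.

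For globality, I would argue as follows. Fix $[z_0]\in T_{\mathbb{C}}^{\mathrm{rss}}/W$ and a small simply connected neighborhood $V$ of a representative $z_0$ such that the translates $w\cdot V$ are disjoint. On $V$ choose a local holomorphic square root $[\sigma_{\mathbb{C}}]^{1/2}$. Then the local map on prequantum sections multiplies $F$ by $[\sigma_{\mathbb{C}}]^{1/2}$, while the local map on half-forms replaces $\sqrt{\beta}$ by $\pm\sigma_{\mathbb{C}}\sqrt{\beta'}$ (equation (\ref{contractSqrt})). Taking the tensor product and moving a factor of $[\sigma_{\mathbb{C}}]^{1/2}$ from the half-form side to the prequantum side gives $\pm\sigma_{\mathbb{C}} F \cdot e^{-|H|^2/(2\hbar)}/\sqrt{\sigma_{\mathbb{C}}}\otimes\sqrt{\sigma_{\mathbb{C}}\beta'}$, in which both instances of $[\sigma_{\mathbb{C}}]^{1/2}$ have disappeared into genuine holomorphic objects ($\sigma_{\mathbb{C}}$ itself and the formal sections $\sqrt{\sigma_{\mathbb{C}}}$ and $\sqrt{\sigma_{\mathbb{C}}\beta'}$ defining the Hermitian/connection data on $L'$ and $\mathcal{K}_{1/2}'$). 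Hence the ambiguity in the local square root cancels. The remaining issue is the $\pm$ sign in (\ref{contractSqrt}) coming from the choice of $\mathrm{Ad}_K$-invariant orientation on $\mathfrak{s}_{(x,\xi)}^{\perp}$ over $U$, together with the fact that changing the representative $z_0\mapsto w\cdot z_0$ changes the identification of the neighborhood of $[z_0]$. Since $F$ is a holomorphic class function, $F|_{T_{\mathbb{C}}}$ is $W$-invariant, while $\sigma_{\mathbb{C}}$ is $W$-alternating; thus $\sigma_{\mathbb{C}}(F|_{T_{\mathbb{C}}})$ is a $W$-alternating holomorphic function on $T_{\mathbb{C}}^{\mathrm{rss}}$ and therefore, by Definition \ref{QuantReduced.def}, unambiguously defines a section of $L'\otimes\mathcal{K}_{1/2}'$ on $T_{\mathbb{C}}^{\mathrm{rss}}/W$. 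The key point is that the local sign swap induced by changing representative is precisely the sign of $w$, which matches the Weyl-alternating transformation law built into the quantization of the reduced phase space.

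For unitarity up to a constant, I would use the explicit formula for $B$ together with the norm formula (\ref{psiNormReduced}). For $\psi = Fe^{-|\xi|^2/(2\hbar)}\otimes\sqrt{\beta}\in\mathrm{Quant}(K_{\mathbb{C}})^{\mathrm{Ad}_K}$, Conclusion \ref{kcQuant.conclusion} gives $\|\psi\|^2 = \int_{K_{\mathbb{C}}}|F(g)|^2\,d\gamma_\hbar(g)$. On the other hand, $B\psi$ has the representation $(\sigma_{\mathbb{C}}F)e^{-|H|^2/(2\hbar)}\otimes\sqrt{\beta'}$, so by (\ref{psiNormReduced}),
\[
\|B\psi\|^2 = \frac{1}{|W|}\int_{T_{\mathbb{C}}^{\mathrm{rss}}}|\sigma_{\mathbb{C}}(z)F(z)|^2\,d\gamma'_\hbar(z).
\]
Applying Theorem \ref{fmnIntro.thm} (equivalently the first half of Theorem \ref{fmn.thm}, after using Proposition \ref{geoHeat.prop} to convert between $\gamma_\hbar$ and $\nu_\hbar(g)\,dg$) yields $\|B\psi\|^2 = c\,\|\psi\|^2$ for a $\psi$-independent constant $c>0$. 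Surjectivity of $B$ onto $\mathrm{Quant}(T_{\mathbb{C}})^{W_-}$ (equivalently onto $\mathrm{Quant}(K_{\mathbb{C}}/\!\!/\mathrm{Ad}_K)$) follows from the second half of Theorem \ref{fmn.thm}: every square-integrable $W$-alternating holomorphic $\Phi$ on $T_{\mathbb{C}}$ is of the form $\sigma_{\mathbb{C}}(F|_{T_{\mathbb{C}}})$ for a unique holomorphic class function $F$ square-integrable against $\gamma_\hbar$.

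The main obstacle is the bookkeeping in the first stage, specifically verifying that the sign in (\ref{contractSqrt}) is exactly the sign of $w$ under a change of representative. Everything else is either a direct substitution into the definitions of the Hermitian structures or a quotation of Theorem \ref{fmn.thm}; the geometric content is entirely that $\sigma_{\mathbb{C}}$ is forced to appear by the half-form contraction (Section \ref{relatingCanonical.sec}) and that this factor is precisely what converts $F|_{T_{\mathbb{C}}}$ into a $W$-alternating object.
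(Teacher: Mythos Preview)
Your proposal is correct and follows essentially the same route as the paper: the well-definedness argument (cancellation of the local square root and the observation that $\sigma_{\mathbb{C}}(F|_{T_{\mathbb{C}}})$ is Weyl-alternating because $F|_{T_{\mathbb{C}}}$ is Weyl-invariant) is exactly the computation the paper carries out just before stating the theorem, and the unitarity step is the same appeal to Proposition~\ref{geoHeat.prop} and Theorem~\ref{fmn.thm}. Your explicit mention of surjectivity via the second half of Theorem~\ref{fmn.thm} and your more careful flagging of the sign bookkeeping are slight elaborations, but not a different approach.
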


\begin{proof}
We have already established that $B$ is well defined and computed as in the
theorem. As a map of $\mathrm{Quant}(K_{\mathbb{C}})^{\mathrm{Ad}_{K}}$ to
$\mathrm{Quant}(T_{\mathbb{C}})^{W_{-}},$ $B$ is the map sending the
holomorphic class function $F$ on $K_{\mathbb{C}}$ to the function
$(\sigma_{\mathbb{C}})(\left.  F\right\vert _{T_{\mathbb{C}}})$ on
$T_{\mathbb{C}}.$ The unitarity claim then follows from Proposition
\ref{geoHeat.prop} and Theorem \ref{fmn.thm}.
\end{proof}

\section{Funding}

This work was supported by the National Science Foundation [DMS-1301534 to B.C.H.].

\section{Acknowledgments}

We thank Sam Evens for numerous helpful suggestions and references. We also
thank the referee for reading the manuscript carefully and making several corrections.

\end{document}